\theoremstyle{plain}
\newtheorem{theorem}{Theorem}[section]
\newtheorem{proposition}[theorem]{Proposition}
\newtheorem{lemma}[theorem]{Lemma}
\theoremstyle{definition}
\newtheorem{definition}[theorem]{Definition}
\newtheorem{example}[theorem]{Example}
\newcommand\restr[2]{{% we make the whole thing an ordinary symbol
  \left.\kern-\nulldelimiterspace % automatically resize the bar with \right
  #1 % the function
  %\vphantom{\big|} % pretend it's a little taller at normal size
  \right|_{#2} % this is the delimiter
}}
\newcommand{\R}{\mathbb{R}}
\newcommand{\Cinfty}{\mathscr{C}^\infty}
\newcommand{\T}{\mathrm{T}}
\newcommand{\cT}{\mathrm{T}^\ast}
\newcommand{\bomega}{{\boldsymbol{\omega}}}
\newcommand{\bh}{{\boldsymbol{h}}}
\newcommand{\Rp}{\mathfrak{Re}}
\renewcommand{\d}{\mathrm{d}}
\newcommand{\Ip}{\mathfrak{Im}}
\newcommand{\bfJ}{\mathbf{J}}
\newcommand*{\inn}[1]{\iota_{#1}}
\newcommand{\Lg}{\mathfrak{g}}
\newcommand{\x}{\times}
\newcommand{\Lie}{\mathscr{L}}
\newcommand{\X}{\mathfrak{X}}
\newcommand{\parder}[2]{\frac{\partial #1}{\partial #2}}
\DeclareMathOperator{\Ima}{Im}
\DeclareMathOperator{\Hess}{Hess}
\DeclareMathOperator{\Ad}{Ad}
\DeclareMathOperator{\Coad}{Coad}
\DeclareMathAlphabet{\mathpzc}{OT1}{pzc}{m}{it}
\def\d{\mathrm{d}}
\DeclareMathOperator{\pr}{pr}
\DeclareMathOperator{\SO}{SO}
\newcommand\xqed[1]{%
    \leavevmode\unskip\penalty9999 \hbox{}\nobreak\hfill
	\quad\hbox{#1}}
\newcommand\demo{\xqed{$\triangle$}}
\numberwithin{equation}{section}
\begin{document}

\parskip=3pt

%%%%%%%%%%%%%%%%%%%%

\vspace{5em}

{\huge\sffamily\raggedright
\begin{spacing}{1.1}
    An energy-momentum method for ordinary differential equations with an underlying $k$-polysymplectic manifold
\end{spacing}
}
\vspace{2em}

{\large\raggedright
    \today
}

\vspace{3em}

{\Large\raggedright\sffamily
    Leonardo Colombo
}\vspace{1mm}\newline
{\raggedright
    Centro de Automática y Robótica (CSIC-UPM),\\ 
    Carretera de Campo Real, km 0, 200, 28500 Arganda del Rey, Spain.\\
    e-mail: \href{mailto:leonardo.colombo@car.upm-csic.es}{leonardo.colombo@car.upm-csic.es} --- orcid: \href{https://orcid.org/0000-0001-6493-6113}{0000-0001-6493-6113}
}

\bigskip

{\Large\raggedright\sffamily
    Javier de Lucas\footnote{Corresponding author}
}\vspace{1mm}\newline
{\raggedright
    Centre de Recherches Math\'ematiques, Universit\'e de Montr\'eal,\\
    Pavillon André--Aisenstadt, 2920, chemin de la Tour,  Montréal (Québec) Canada  H3T 1J4.\\
    \medskip
    
    Department of Mathematical Methods in Physics, University of Warsaw,\\ul. Pasteura 5, 02-093 Warszawa, Poland.\\
    e-mail: \href{mailto:javier.de.lucas@fuw.edu.pl}{javier.de.lucas@fuw.edu.pl} --- orcid: \href{https://orcid.org/0000-0001-8643-144X}{0000-0001-8643-144X}
}

\bigskip

{\Large\raggedright\sffamily
    Xavier Rivas
}\vspace{1mm}\newline
{\raggedright
    Escuela Superior de Ingeniería y Tecnología, Universidad Internacional de La Rioja,\\
    Av. de la Paz, 137, 26006 Logroño, La Rioja, Spain.\\
    e-mail: \href{mailto:xavier.rivas@unir.net}{xavier.rivas@unir.net} --- orcid: \href{https://orcid.org/0000-0002-4175-5157}{0000-0002-4175-5157}
}

\bigskip

{\Large\raggedright\sffamily
    Bartosz M. Zawora
}\vspace{1mm}\newline
{\raggedright
    Department of Mathematical Methods in Physics, University of Warsaw,\\
    ul. Pasteura 5, 02-093 Warszawa, Poland.\\
    e-mail: \href{mailto:b.zawora@uw.edu.pl}{b.zawora@uw.edu.pl} --- orcid: \href{https://orcid.org/0000-0003-4160-1411}{0000-0003-4160-1411}
}

\vspace{3em}

{\large\bf\raggedright
    Abstract
}\vspace{1mm}\newline
{\raggedright
    This work presents a comprehensive review of the $k$-polysymplectic Marsden--Weinstein reduction theory, rectifying prior errors and inaccuracies in the literature while introducing novel findings. It also emphasises the genuine practical significance of seemingly minor technical details. On this basis, we introduce a novel $k$-polysymplectic energy-momentum method, new related stability analysis techniques, and apply them to Hamiltonian systems of ordinary differential equations relative to a $k$-polysymplectic manifold. We provide detailed examples of both physical and mathematical significance, including the study of complex Schwarz equations related to the Schwarz derivative, a series of isotropic oscillators, integrable Hamiltonian systems,  quantum oscillators with dissipation, affine systems of differential equations, and polynomial dynamical systems.
}
\bigskip

{\large\bf\raggedright
    Keywords:
}
energy-momentum method, $k$-polysymplectic manifold, Lie system, Marsden--Weinstein reduction, relative equilibrium point, stability. 
\medskip

{\large\bf\raggedright
    MSC2020:
}
34A26, %Geometric methods in ordinary differential equations, 
34D20, %Stability of solutions to ordinary differential equations, 
37J39 (primary) %Relations of finite-dimensional Hamiltonian and Lagrangian systems with topology, geometry and differential geometry (symplectic geometry, Poisson geometry, etc.) [See also 53D20], 
%53Z05, %Applications of differential geometry to physics, 
53B50, %Applications of local differential geometry to the sciences, 
53C15 (secondary). %General geometric structures on manifolds (almost complex, almost product structures, etc.)

\bigskip

%%%%%%%%%%%%%%%%%%%%

\newpage

{\setcounter{tocdepth}{2}
\def\baselinestretch{1}
\small
% hack per a eliminar l'espai vertical 1em
\def\addvspace#1{\vskip 1pt}
\parskip 0pt plus 0.1mm
\tableofcontents
}

\pagestyle{myheadings}

\markright{{\rm
    L. Colombo {\it et al.}
}
\ ------ \ \ 
{\sl
    A $k$-polysymplectic energy-momentum method
}}

\section{Introduction}

The classical energy-momentum method is a technique for analysing a Hamiltonian system on a symplectic manifold, particularly in the region near solutions whose evolution is induced by the Lie symmetries of the Hamiltonian system (see \cite{Blo_15} for a historical introduction and \cite{MS_88} for one of its foundational works). More specifically, it explores whether, over time, solutions converge towards or diverge from the solutions associated with the Lie symmetries of the Hamiltonian system. The classical energy-momentum method is grounded in the symplectic Marsden--Weinstein reduction theory and utilises stability analysis techniques. 

The main ideas behind the energy-momentum method can be traced back to Routh, Poincar\'e, Lyapunov, Arnold, Lewis, and Smale, among others (see \cite[Section 3.14]{Blo_15}). Then,
the classical energy-momentum method, devised and developed mainly by J. C. Simo and J. E. Marsden \cite{MS_88}, was successfully applied to many problems by numerous researchers \cite{AH_87, MPS_90, MR_99, MSLP_89, OPR_05, SLM_91, ZBM_98}. Over the years, the energy-momentum method was extended to deal with more general differential equations, e.g. stochastic Hamiltonian systems \cite{BZ_14}, discrete systems \cite{MR_99,ST_92}, etcetera  \cite{WK_92}. In this work, we develop a new energy-momentum method for Hamiltonian systems related to $k$-polysymplectic manifolds \cite{Awa_92,LMS_88}. 

 {\it $k$-Symplectic geometry} is a generalisation of symplectic geometry introduced by A. Awane \cite{Awa_92,AG_00}. Posteriorly, M. de León et al. \cite{LMOS_97,LMS_88,LMS_88a} and L. K. Norris \cite{MN_00,Nor_93} utilised $k$-symplectic geometry  to describe first-order field theories \cite{BBLSV_15,EMR_96,RSV_07}. $k$-Symplectic geometry is the same as the {\it polysymplectic geometry} described by G. C. Günther \cite{Gun_87}, but differs from the polysymplectic geometry introduced by G. Sardanashvily et al. \cite{GMS_97,Sar_95} and I. V. Kanatchikov \cite{Kan_98}. $k$-Symplectic manifolds have been widely used to study physical systems governed by systems of partial differential equations. In particular, it gives a geometric description of the Euler--Lagrange and the Hamilton--de Donder--Weyl field equations 
and the systems described by them. For instance, $k$-symplectic geometry enables us to describe their symmetries, conservation laws, reductions, etcetera \cite{Awa_92, Gun_87, MRSV_15, RSV_07}. As there are many $k$-symplectic-like definitions with related but mainly different and even contradictory meanings, it is relevant to fix properly the terminology. Hereafter, we will deal with {\it $k$-polysymplectic manifolds}, i.e. manifolds endowed with a closed nondegenerate differential two-form taking values in a $k$-dimensional vector space.

 Remarkably, $k$-polysymplectic geometry has proved to be useful in the analysis of systems of ordinary differential equations and their so-called superposition rules \cite{LV_15}. It is also worth stressing that the study of systems of ordinary differential equations via $k$-polysymplectic geometry differs substantially from the standard framework, which is focused on systems of partial differential equations, and leads to new lines of research.

More specifically, this work focuses on studying systems of first-order differential equations describing the integral curves of a vector field. Moreover, we assume that the vector field is Hamiltonian relative to a $k$-polysymplectic manifold, which here amounts to the fact that it is Hamiltonian relative to a series of presymplectic forms whose kernels have zero intersection. We aim to develop an energy-momentum method for such systems of ordinary differential equations with an underlying $k$-polysymplectic geometry. To achieve this goal, we will begin by reviewing and improving previous works on $k$-polysymplectic Marsden--Weinstein reductions \cite{Bla_19,LRVZ_23,GM_23,MRSV_15,MRS_04}, which is one of the basis of our $k$-polysymplectic energy-momentum method. Hopefully, our review will solve previous problems and inaccuracies in the $k$-polysymplectic reduction literature, and will allow us to understand the meaning of some of the findings of this work.  

The first $k$-polysymplectic reduction was developed by G\"unther \cite{Awa_92,LSV_15,Gun_87}. Unfortunately, his work was flawed due to the improper analysis of the double orthogonal relative to a $k$-polysymplectic form. More specifically, \cite[Lemma 7.5 and Theorem 7.7]{Gun_87} contain main G\"unther's mistakes, while \cite[Section 2.2]{MRSV_15} provides an interesting counterexample showing G\"unther's error\footnote{There is a typo in \cite[pg. 4]{MRSV_15} as its authors refer to Theorem 7.6 in \cite{Gun_87}, which is not a theorem, but a definition. Flawed G\"unther's reduction theorem is described in Theorem 7.7.}. Another similarly flawed attempt to develop a $k$-polysymplectic reduction was accomplished in \cite{MRS_04}. These mistakes were fixed in \cite{MRSV_15}, where sufficient conditions to accomplish a $k$-polysymplectic reduction were established. Despite that, \cite[Lemma 3.4]{MRSV_15} implicitly suggests that the Sard's Theorem justifies that it is enough to assume that the $k$-polysymplectic momentum map is a submersion. Although this assumption works very well in the classical symplectic Marsden--Weinstein reduction theory and Sard's Theorem can be used to justify it \cite{BR_04},  
our work proves that this condition is far from ideal in the $k$-polysymplectic geometry realm and why the Sard's Theorem cannot be used in this new context. Moreover, practical examples showing that it is convenient to assume that the momentum map in $k$-polysymplectic geometry is not a submersion are provided. Then, we stress that it is appropriate to use a formalism with $k$-polysymplectic momentum maps that admit only weak regular points, as accomplished in \cite{LRVZ_23}. We highlight that this provides a practical generalisation of the $k$-polysymplectic Marsden--Weinstein reduction and it completes the analysis performed in \cite{Bla_19,GM_23,Gun_87,MRSV_15}.

Necessary and sufficient conditions for a $k$-polysymplectic Marsden--Weinstein reduction were described implicitly in \cite[pg. 12]{MRSV_15} and spelled out in detail by Blacker in \cite{Bla_19}. Unfortunately, one of Blacker's main theorems, namely \cite[Theorem 3.22]{Bla_19}, has a small misleading typo in the statement of the conditions (as pointed out in \cite{GM_23}), in its proof, and it presents other minor technical issues concerning the existence of certain submanifold structures. These latter facts are shown and explained in this work for the first time. It is also worth noting that Blacker analyses the occurrence of orbifolds in $k$-polysymplectic Marsden--Weinstein reductions for regular values of momentum maps related to pathological Lie group actions.  

The need for the use of $\Ad^{*k}$-equivariant momentum maps in the $k$-polysymplectic Marsden--Weinstein reductions was removed in \cite{LRVZ_23} by extending to the $k$-polysymplectic realm the classical theory of affine Lie group actions on symplectic manifolds \cite{OR_04}. Next, Garc\'ia-Tora\~no and Mestdag reviewed in \cite{GM_23} the sufficient conditions for the $k$-polysymplectic Marsden--Weinstein reduction devised in \cite{MRSV_15}. They claimed that just one of the sufficient conditions for the $k$-polysymplectic reduction given in \cite[Theorem 3.17, condition (3.6)]{MRSV_15} is enough to ensure the existence of a $k$-polysymplectic Marsden--Weinstein reduction. In this work, we show a mistake in the proof of one of the main results in \cite{GM_23}, used to justify the previous claim. Indeed, we here point out that \cite[Lemma 3.1]{GM_23} is false via a counterexample, and prove the general independence of the conditions in \cite[Theorem 3.17]{MRSV_15}. Moreover, our work also explains other properties relative to such sufficient conditions.  

In order to illustrate a relevant example of $k$-polysymplectic Marsden--Weinstein reduction, we review the construction of a $k$-polysymplectic manifold induced by $k$ symplectic manifolds and a related $k$-polysymplectic Marsden--Weinstein reduction. It is worth noting that, in this case, and in our applications in 
Section \ref{Sec::examples}, the sufficient conditions for the $k$-polysymplectic Marsden--Weinstein reduction given in \cite{MRSV_15} are generally simpler to apply than  Blacker's necessary and sufficient conditions, as the conditions in \cite{MRSV_15} do not depend on double $k$-polysymplectic orthogonal spaces and can be verified using structures easily available in our examples.

Next, an energy-momentum method for Hamiltonian $k$-polysymplectic systems is developed. This entails the definition and characterisation of a relative equilibrium notion for $k$-polysymplectic Hamiltonian systems. In short, a relative equilibrium point for a $k$-polysymplectic Hamiltonian 
 system is a point at which the dynamics is determined by a Hamiltonian Lie symmetry of the $k$-polysymplectic Hamiltonian system.
Our $k$-polysymplectic energy-momentum method also requires the development of an appropriate modification of known symplectic stability techniques to a $k$-polysymplectic realm. In particular, the stability of relative equilibrium points for $k$-polysymplectic Hamiltonian systems is characterised by analysing the character of $k$ different functions having, mainly, degenerate critical points, namely their Hessians are degenerate at critical points. As in the symplectic case, a formal stability giving sufficient but not necessary conditions for the stability are given. The interest in our formal stability condition is justified by our applications. Although the formal stability condition is easy to verify and can be used in many cases, it is worth noting that proving its properties is quite more difficult than in the symplectic case. Moreover, we here just sketch that it is possible to develop many other alternative sufficient conditions to ensure stability.

Then, some applications of our $k$-polysymplectic energy-momentum method are developed. In particular, the theory of Lie systems is used to transform certain automorphic Lie systems \cite{CL_11,CGM_00,CGM_07,Win_83} into $k$-polysymplectic Hamiltonian systems. A Lie system is a non-autonomous system of first-order differential equations whose general solution can be written as an autonomous function, a {\it superposition rule}, of a generic family of particular solutions and some constants. Lie systems are very important due to their applications and mathematical properties \cite{CL_11,LS_20}.   Automorphic Lie systems are Lie systems in Lie groups of special relevance, in particular, in control theory \cite{CR_03}. 
A $k$-polysymplectic manifold is used to study complex Schwarz equations, which are here studied through the theory of Lie systems and $k$-polysymplectic geometry for the first time (see \cite{LS_20} for the analysis of the real, simpler, case). It is worth noting that the complex Schwarz equation provides the description, when written as a first-order system of differential equations, of certain properties of the Schwarz derivative, which has applications in string theory, modular forms, hypergeometric functions \cite{GR_07,Hil_97,Leh_79}, and other related equations \cite{BC_20}. Automorphic Lie systems related to quantum oscillators with dissipative terms are also studied via $k$-polysymplectic techniques. 
We develop methods to study certain dynamical systems via Hamiltonian $k$-polysymplectic systems. This is applied to a family of $k$ particles in a three-dimensional space, that are under the effect of different isotropic potentials and have no interaction between them. In this case, the techniques of our $k$-polysymplectic energy-momentum method are illustrated. Furthermore, a particular type of affine Lie system is used to show certain aspects of our $k$-polysymplectic energy-momentum method. Potentially, the ideas used in this latter example could be used to study affine control systems of a similar type \cite{CR_03}. Other examples related to differential equations with polynomial coefficients are presented and analysed.

The structure of the paper goes as follows. Section \ref{Se::Fun} presents the basic notions and terminology to be used in our work. More particularly, Section \ref{Sec::Stab} provides a review of the fundamentals of Lyapunov stability.
In Section \ref{Sec::moment-maps}, we delve into the theory of $k$-polysymplectic manifolds, introducing the concept of an $\bomega$-Hamiltonian vector field and function on such a manifold in Section \ref{OHam}, and studying $k$-polysymplectic momentum maps in Section \ref{Sec::MomentuMap}. 
Section \ref{Sec::reduction} is dedicated to enhancing the existing Marsden--Weinstein reduction procedures for $k$-polysymplectic manifolds and presenting a $k$-polysymplectic Marsden--Weinstein reduction of the dynamics governed by an $\bomega$-Hamiltonian vector field. Note that this implies that some previous results, like \cite[Theorem 4.4]{MRSV_15}, are here slightly modified to analyse more efficiently systems of ordinary differential equations. Relevantly, this section surveys and corrects many inaccuracies and mistakes in the previous literature.
Section \ref{Sec::energy-momentum} introduces an energy-momentum method for systems of ordinary differential equations (ODEs) with an underlying $k$-polysymplectic structure. We define and characterise the concept of a relative equilibrium point for such systems. A theory of stability for the analysis of relative equilibrium points for $k$-polysymplectic Hamiltonian systems is presented. 
In Section \ref{Sec::examples}, we thoroughly examine several relevant examples, including the complex Schwarz equation, the product of multiple symplectic manifolds along with a related family of isotropic oscillators, an affine first-order system of differential equations related to Lie systems and, potentially, to control systems, and  quantum harmonic oscillators with dissipative terms.
Finally, Section \ref{Sec::Conclusions} summarises the conclusions of our work and offers insights into potential avenues for further development.

\section{Fundamentals}\label{Se::Fun}

Let us set some general assumptions and notation to be used throughout this work. It is hereafter assumed that all structures are smooth. Manifolds are real, Hausdorff, connected, paracompact, and finite-dimensional. Differential forms are assumed to have constant rank unless otherwise stated. Summation over crossed repeated indices is understood, although it can be explicitly detailed at times to improve the clarity of our presentation. 
All our considerations are local to stress our main ideas and to avoid technical problems concerning the global manifold structure of quotient spaces and similar issues. Hereafter, $\mathfrak{X}(P)$ and $\Omega^k(P)$ stand for the $\Cinfty(P)$-modules of vector fields and differential $k$-forms on a manifold $P$.

\subsection{Lyapunov stability}\label{Sec::Stab}

Let us establish some fundamental notions and theorems on the stability of dynamical systems used in our $k$-polysymplectic formulation of the energy-momentum method \cite{LZ_21, Zaw_21}.

Since all manifolds considered in this work are paracompact and Hausdorff, they admit a Riemannian metric $\mathbf{g}$ \cite{Lee_09}. The topology induced by $\mathbf{g}$ is the one of the manifold \cite{Lee_09,Lee_12, Zaw_21}. The metric $\mathbf{g}$ induces a distance in $P$ so that the distance between two points $x_1,x_2\in P$ is given by 
\begin{equation}
    d_{\mathbf{g}}(x_1,x_2):=\inf\left\{\ell_{\mathbf{g}}(\gamma) \ \mid\ \gamma:[0,1]\rightarrow P\,,\ \ \gamma(0)=x_1\,,\ \ \gamma(1)=x_2\right\}\,,
\end{equation}
where $\ell_{\mathbf{g}}(\gamma)$ is the length of the smooth curve $\gamma:[0,1]\rightarrow  P$ relative to the metric $\mathbf{g}$. Moreover, consider 
\begin{equation}\label{Eq::NonAutDyn}
     \frac{\d x}{\d t} = X(x)\,,\qquad \forall x\in P\,,
\end{equation}
where $X$ is a vector field on $P$.

A point $x_e\in P$ is an {\it equilibrium point} of \eqref{Eq::NonAutDyn}, or indistinctly $X$, if $X(x_e)=0$. Furthermore, $x_e$ is {\it stable} if, for every ball $B_{x_e,\varepsilon}:=\{x\in P\mid d_{\mathbf{g}}(x,x_e)<\epsilon\}$, there exists a radius $\delta(\varepsilon,x_e)$ such that every solution $x(t)$ of \eqref{Eq::NonAutDyn} with initial condition $x(t_0)=x_0\in B_{x_e,\delta(\varepsilon,x_e)}$ for some $t_0\in \mathbb{R}$ is contained in $B_{x_e,\varepsilon}$ for $t>t_0$. An equilibrium point $x_e\in P$ is {\it unstable} if it is not stable. 

The fact that the topology of a manifold is the same as the topology induced for any metric on it allows one to show that every $d_{\mathbf{g}}$, independently of the associated $\mathbf{g}$, induces the same stable and unstable points for \eqref{Eq::NonAutDyn}.

Lyapunov theory studies the stability of equilibrium points of first-order differential equations. Let $\dot{\mathcal{M}}:P\rightarrow \mathbb{R}$ be defined as follows
\[
\dot{\mathcal{M}}(x):=(X\mathcal{M})(x)\,,%\sum^{\dim P}_{i=1}\frac{\partial \mathcal{M}}{\partial x^i}(x)X^i(x)
\qquad \forall x\in P\,.
\]

Let us recall the basic Lyapunov theorem for autonomous systems \eqref{Eq::NonAutDyn}.

\begin{theorem}
\label{Th::LyapunovStabilityTheory}
Let $x_e$ be an equilibrium point of \eqref{Eq::NonAutDyn} and let $\mathcal{M}:P\rightarrow \mathbb{R}$ be a continuous function such that $\mathcal{M}(x_e)=0$, $\mathcal{M}(x)>0$, and $\dot{\mathcal{M}}(x)\leq 0$ for every $x\in B_{x_e,r}$ and some $r\in \mathbb{R}^+$. Then, $x_e$ is stable.
\end{theorem}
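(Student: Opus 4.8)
The plan is to prove the classical Lyapunov stability theorem by a sublevel-set trapping argument. The core idea is that the Lyapunov function $\mathcal{M}$, being continuous, positive away from $x_e$, and nonincreasing along the flow of $X$, provides a family of positively invariant neighbourhoods of $x_e$ that can be fitted inside any prescribed ball $B_{x_e,\varepsilon}$.

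First I would fix $\varepsilon>0$ small enough that $B_{x_e,\varepsilon}\subseteq B_{x_e,r}$, where $r$ is the radius on which the hypotheses on $\mathcal{M}$ hold. Consider the boundary sphere $S_\varepsilon:=\{x\in P\mid d_{\mathbf{g}}(x,x_e)=\varepsilon\}$. Since $P$ is locally compact and $S_\varepsilon$ is closed and bounded, I would arrange $\varepsilon$ small enough that $\overline{B_{x_e,\varepsilon}}$ is compact, so $S_\varepsilon$ is compact. As $\mathcal{M}$ is continuous and strictly positive on $S_\varepsilon$ (which does not contain $x_e$), it attains a positive minimum there; set
\[
m:=\min_{x\in S_\varepsilon}\mathcal{M}(x)>0\,.
\]
Next, using continuity of $\mathcal{M}$ together with $\mathcal{M}(x_e)=0$, I would choose $\delta\in(0,\varepsilon)$ such that $\mathcal{M}(x)<m$ for all $x\in B_{x_e,\delta}$. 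This $\delta$ is the radius we seek.

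The key step is to show that any solution $x(t)$ with $x(t_0)\in B_{x_e,\delta}$ stays in $B_{x_e,\varepsilon}$ for all $t>t_0$. Since $\dot{\mathcal{M}}=X\mathcal{M}\leq 0$ on $B_{x_e,r}$, the function $t\mapsto\mathcal{M}(x(t))$ is nonincreasing as long as the solution remains in $B_{x_e,r}$, so $\mathcal{M}(x(t))\leq\mathcal{M}(x(t_0))<m$. Now I would argue by contradiction: if the solution were to leave $B_{x_e,\varepsilon}$, then by continuity of $t\mapsto d_{\mathbf{g}}(x(t),x_e)$ and the intermediate value theorem there would be a first time $t_1>t_0$ with $x(t_1)\in S_\varepsilon$. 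At that point $\mathcal{M}(x(t_1))\geq m$, contradicting $\mathcal{M}(x(t_1))<m$. Hence the solution never reaches $S_\varepsilon$ and remains trapped in $B_{x_e,\varepsilon}$, establishing stability.

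The main obstacle is the interplay between the local hypotheses and the possibly incomplete flow of $X$. One must ensure that the solution remains within $B_{x_e,r}$ long enough for the monotonicity of $\mathcal{M}$ to apply; this is handled by the same first-exit-time argument, since the solution cannot escape the smaller ball $B_{x_e,\varepsilon}$ without first crossing $S_\varepsilon$. The trapping also guarantees that the solution does not blow up in finite time within the compact set $\overline{B_{x_e,\varepsilon}}$, so it is defined for all $t>t_0$, which is needed for the stability definition to be meaningful. A subtle point worth checking is that the minimum $m$ and the choice of $\delta$ are genuinely independent of the particular solution, which they are, as both depend only on $\mathcal{M}$, $\varepsilon$, and the metric $\mathbf{g}$.
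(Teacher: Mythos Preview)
The paper does not actually prove this theorem; it is stated without proof as a classical result, with a citation to the literature (Vidyasagar \cite{Vid_02}). Your proposal is the standard sublevel-set trapping argument for Lyapunov stability and is correct, including your remarks on compactness of $\overline{B_{x_e,\varepsilon}}$, the first-exit-time contradiction, and forward completeness of the trapped solution.
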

In the literature, the function $\mathcal{M}$ is called a {\it Lyapunov function} \cite{Vid_02}.

\subsection{On \texorpdfstring{$k$}{}-polysymplectic manifolds}\label{Sec::moment-maps}

This section recalls the basic notions in $k$-polysymplectic geometry to be used later on. This is relevant as a single term may refer to different not equivalent geometric concepts in the literature. 

Hereafter, we work with differential $\ell$-forms on $P$ that take values in $\R^k$. The space of such forms is denoted by $\Omega^\ell(P,\R^k)$, while its elements will be written in bold. Moreover, $\R^k$ has a fixed basis $\{e_1,\ldots,e_k\}$ giving rise to a dual basis $\{e^1,\ldots, e^k\}$ in $\R^{k*}$. Hence, an element $\bomega\in\Omega^\ell(P,\R^k)$ can always be written as $\bomega = \omega^\alpha\otimes e_\alpha$ for some uniquely defined differential $\ell$-forms $\omega^1,\ldots,\omega^k$ on $P$. A differential $\ell$-form on $P$ taking values in $\mathbb{R}^k$, let us say $\bomega$, is nondegenerate if  
$$ 
\ker \boldsymbol{\omega}=\ker (\omega^\alpha\otimes e_\alpha):=\bigcap_{\alpha = 1}^k \ker\omega^\alpha = 0\,. 
$$

Let us introduce the following definition that will be useful to simplify the notation of our further work. Let $\bm\vartheta = \vartheta^\alpha\otimes e_\alpha\in\Omega^\ell(P,\R^k)$ be an $\R^k$-valued differential $\ell$-form on $P$. Then, the contraction of $\bm \vartheta$ with a vector field $X\in\mathfrak{X}(P)$ is defined as
\[
\iota_X \bm\vartheta := (\iota_X\vartheta^\alpha)\otimes e_\alpha=:\langle \bm\vartheta,X\rangle\in\Omega^{\ell-1}(P,\R^k)\,.
\]
In short, the exterior differential, the Lie derivative with respect to vector fields, and many other operations on differential forms can naturally be extended to $\ell$-differential forms taking values in vector spaces by considering the natural action of the above-mentioned operations on the components of valued differential forms and extending them to $\Omega^{\ell}(P,\mathbb{R}^k)$ by linearity.

A $k$-vector field on a manifold $P$ is, essentially, a family of $k$ vector fields on $P$. We write $\mathfrak{X}(P,\mathbb{R}^k)$ for the space of $k$-vector fields on $P$ and its elements will be written in bold. Moreover, a $k$-vector field, let us say ${\boldsymbol{X}}$, can always be written in a unique manner as ${\boldsymbol{X
}}=X_\alpha\otimes e_\alpha$ for a family $X_1,\ldots, X_k$ of vector fields on $P$. The contraction of a $k$-vector field ${\boldsymbol{X}}=X_\alpha\otimes e_\alpha$ with a $k$-differential form $\boldsymbol{\omega}=\omega^\alpha\otimes e_\alpha$ is the function on $P$ defined as follows
$$
\iota_{\boldsymbol{X}}\boldsymbol{\omega}:=\iota_{X_\alpha}\omega^\alpha=:\langle \boldsymbol{\omega},{\boldsymbol{X}}\rangle\,.
$$

Now, let us turn to one of the main fundamental notions to be studied in this paper.

\begin{definition}
A {\it $k$-polysymplectic form} on $P$ is a closed nondegenerate $\mathbb{R}^k$-valued differential two-form $\boldsymbol{\omega}$ on $P$. The pair $(P,\boldsymbol{\omega})$ is called a {\it $k$-polysymplectic manifold}.
\end{definition}
Consider a $k$-polysymplectic manifold $(P,\bm\omega)$, and let $W_p\subset\T_pP$ at some $p\in P$. The {\it k-polysymplectic orthogonal complement} of $W_p$ with respect to $(P,\bm\omega)$ is 
\[
W_p^{\perp,k} := \{v_p\in \T_p P\mid \bm\omega(w_p,v_p)=0\,,\,\, \forall w_p\in W_p\}.
\]
$k$-Polysymplectic manifolds are called, for simplicity, polysymplectic manifolds in the literature \cite{MRSV_15}. Nevertheless, the latter term may be misleading as refers here to a different concept shown below. Hence, to avoid confusion, we will use the full term \textit{$k$-polysymplectic manifold}. Let us define polysymplectic manifolds, $k$-polysymplectic manifolds, and related notions. 

\begin{definition}\label{Def::CoNotions}
    Let $P$ be an $n(k+1)$-dimensional manifold. Then,
    \begin{itemize}
        \item A {\it polysymplectic form} on $P$ is a nondegenerate differential two-form, $\bomega$, taking values in $\mathbb{R}^k$. 
        We call $(P,\boldsymbol{\omega})$ a {\it polysymplectic manifold}.
        \item A {\it $k$-symplectic structure} on $P$ is a pair $(\boldsymbol{\omega},\mathcal{D})$, where $(P,\boldsymbol{\omega})$ is a polysymplectic manifold and $\mathcal{D}\subset \T P$ is an integrable distribution on $P$ of rank $nk$ such that
        \[
        \restr{\boldsymbol{\omega}}{\mathcal{D}\times \mathcal{D}} = 0\,.
        \]
        In this case, $(P,\boldsymbol{\omega},\mathcal{D})$ is a {\it $k$-symplectic manifold}. We call $\mathcal{D}$ a {\it polarisation} of $(P,\boldsymbol{\omega})$.
    \end{itemize}
    If the two-form $\boldsymbol{\omega}$ is exact, namely $\boldsymbol{\omega} = \d\boldsymbol{\theta}$ for some $\boldsymbol{\theta} \in\Omega^1(P,\mathbb{R}^k)$, in any of the notions in Definition \ref{Def::CoNotions}, then such concepts are said to be {\it exact}.
\end{definition}

Note that the difference between polysymplectic and $k$-polysymplectic manifolds relies on the fact that in the polysymplectic case, the dimension of the manifold is proportional to $k+1$ if the polysymplectic form takes values in $\mathbb{R}^k$.

\subsection{On \texorpdfstring{$\bomega$}{}-Hamiltonian functions and vector fields}\label{OHam}

Let us survey the basic theory on $k$-polysymplectic vector fields and functions. Recall that we will not be concerned with the local or global character of the structures to be defined next.

\begin{definition}
Given a $k$-polysymplectic manifold $(P,\bomega=\omega^\alpha\otimes e_\alpha)$, a vector field $Y\in\X(P)$ is {\it $\boldsymbol{\omega}$-Hamiltonian} if it is Hamiltonian with respect to all the presymplectic forms $\omega^1,\dotsc,\omega^k$, namely $\inn{Y}\omega^\alpha$ is closed for $\alpha = 1,\dotsc,k$. Let us denote by $\X_\bomega(P)$ the space of $\bomega$-Hamiltonian vector fields in a $k$-polysymplectic manifold $(P,\bomega)$.

\end{definition}

Note that if $\iota_Y\omega^\alpha$ is closed, then it generally  admits a potential function only locally. Anyhow, this work is mainly concerned with local aspects and the fact that the potential function may not be globally defined will not have any repercussions in what follows. 

It is convenient for the study of $\bomega$-Hamiltonian vector fields to introduce some generalisation of the Hamiltonian function notion for presymplectic forms to deal simultaneously with all associated $h^1,\ldots,h^k$ (see \cite{Awa_92, LV_15} for details). 

\begin{definition}
    Given a $k$-polysymplectic manifold $(P,\bomega=\omega^\alpha\otimes e_\alpha)$, we say that $\bh = h^\alpha\otimes e_\alpha$ is an \textit{$\bomega$-Hamiltonian function} if there exists a vector field $X_\bh$ on $P$ such that  $\inn{X_\bh}\bomega=
    \d \bh$, namely $\inn{X_\bh}\omega^\alpha=\d h^\alpha$ for $\alpha = 1,\dotsc,k$. In this case, we call $\bh$ an \textit{$\boldsymbol{\omega}$-Hamiltonian function} for $X_\bh$. We write $\Cinfty_{\boldsymbol{\omega}}(P)$ for the space of $\boldsymbol{\omega}$-Hamiltonian functions of $(P,\bomega)$.
\end{definition}

An ${\bomega}$-Hamiltonian vector field (resp. function) will be simply called $k$-Hamiltonian at times, if $\bomega$ is understood from context or its specific expression is not relevant. 
%An $\bomega$-Hamiltonian function is a certain type of $\R^k$-valued Hamiltonian function.
In \cite{Mer_97}, the author defined the $k$-Hamiltonian system associated with the $\R^k$-valued Hamiltonian function $\bh$ as the vector field $X_\bh$ of the above definition. Moreover, A. Awane \cite{Awa_92} called $\bh$ a Hamiltonian map of $X$ when $X$ is additionally an infinitesimal automorphism of a certain distribution on which it is assumed that the presymplectic forms of the $k$-symplectic distribution vanish.

\begin{example}\label{ex1}
    Consider the two-polysymplectic manifold $(\R^3,\bomega)$, where $\{u,v,w\}$ are linear coordinates on $\R^3$ and $\bomega = \omega^1\otimes e_1 + \omega^2\otimes e_2$, where     (see \cite{LV_15} for details)
    $$
        \omega^1 =- \frac{4w}{v^2}\d u\wedge\d w + \frac{1}{v}\d v\wedge\d w + \frac{4w^2}{v^3}\d u\wedge\d v\,,\qquad \omega^2 = -\frac{4}{v^2}\d u\wedge\d w + \frac{8w}{v^3}\d u\wedge\d v\,,
    $$
    is a two-polysymplectic form. The vector fields
    $$
        X_1 = 4u^2\parder{}{u} + 4uv\parder{}{v} + v^2\parder{}{w}\,,\qquad X_2 = \parder{}{u}\,,%\qquad X_3 = 2u\parder{}{u} + v\parder{}{v}
    $$
    are $\bomega$-Hamiltonian with  $\bomega$-Hamiltonian functions
    \begin{gather}
        \boldsymbol{f} = \Big(4uw-8\displaystyle\frac{u^2w^2}{v^2}-\displaystyle\frac{v^2}{2}\Big)\otimes e_1 + \Big(4u-16\displaystyle\frac{u^2w}{v^2}\Big) \otimes e_2\,,\quad
        \boldsymbol{g} = -2\displaystyle\frac{w^2}{v^2}\otimes e_1-4\frac{w}{v^2}\otimes e_2\,,%\qquad \bh = \Big(w-4\frac{uw^2}{v^2}\Big)\otimes e_1 - 8\frac{uw}{v^2}\otimes e_2\,,
    \end{gather}
    respectively, relative to the two-polysymplectic form $\bomega$.\demo
\end{example}

   %If $\bomega = \sum_{\alpha=1}^k\omega^\alpha\otimes e_\alpha$ is a $k$-polysymplectic form, then 
   Every $\bomega$-Hamiltonian vector field is associated with at least one $\bomega$-Hamiltonian function. Conversely, every $\bomega$-Hamiltonian function induces a unique $\bomega$-Hamiltonian vector field.

\begin{proposition} The space $\Cinfty_\bomega(P)$ relative to $k$-polysymplectic manifold $(P,\bomega)$  becomes a Lie algebra when endowed with the natural operations
\begin{equation*}
\boldsymbol{h} + \boldsymbol{g}:=(h^\alpha + g^\alpha)\otimes e_\alpha\,,\qquad \lambda \cdot \bh:= \lambda h^\alpha\otimes e_\alpha\,,
\end{equation*}
where $\bh =  h^\alpha\otimes e_\alpha$, $\boldsymbol{g} = g^\alpha\otimes e_\alpha\in \Cinfty_\bomega(P)$, $\lambda\in\R$, and the Lie bracket $\{\cdot,\cdot\}_\bomega:\Cinfty_\bomega(P)\times \Cinfty_\bomega(P)\rightarrow \Cinfty_\bomega(P)$ of the form
\begin{equation}\label{LieB}
    \{\bh,\boldsymbol{g}\}_\bomega=\{h^1,g^1\}_{\omega^1}\otimes e_1 + \dotsb + \{h^k,g^k\}_{\omega^k}\otimes e_k\,,
\end{equation}
where $\{\cdot,\cdot\}_{\omega^\alpha}$ is the Poisson bracket naturally induced by the presymplectic form $\omega^\alpha$, with $\alpha=1,\ldots,k$.
\end{proposition}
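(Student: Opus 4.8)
The plan is to check three things in turn: that $\Cinfty_\bomega(P)$ is a real vector space under the stated operations, that the bracket \eqref{LieB} is valued in $\Cinfty_\bomega(P)$ and well defined, and that it satisfies the Lie algebra axioms. The structural fact underlying everything is that nondegeneracy, $\ker\bomega=\bigcap_{\alpha}\ker\omega^\alpha=0$, forces the $\bomega$-Hamiltonian vector field $X_\bh$ of a given $\bh$ to be unique, even though each presymplectic form $\omega^\alpha$ is separately degenerate. For the vector space part, if $\inn{X_\bh}\bomega=\d\bh$ and $\inn{X_{\boldsymbol{g}}}\bomega=\d\boldsymbol{g}$, then $\R$-linearity of the contraction gives $\inn{X_\bh+X_{\boldsymbol{g}}}\omega^\alpha=\d(h^\alpha+g^\alpha)$ and $\inn{\lambda X_\bh}\omega^\alpha=\d(\lambda h^\alpha)$ for all $\alpha$; hence $\bh+\boldsymbol{g}$ and $\lambda\cdot\bh$ stay in $\Cinfty_\bomega(P)$, with Hamiltonian vector fields $X_\bh+X_{\boldsymbol{g}}$ and $\lambda X_\bh$. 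The remaining vector-space axioms are inherited from the ambient space $\Cinfty(P)\otimes\R^k$, so only this closure needs checking.

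For the bracket I would first note that each component $h^\alpha$ of $\bh\in\Cinfty_\bomega(P)$ is an admissible function for $\omega^\alpha$, admitting $X_\bh$ as a Hamiltonian vector field; thus $\{h^\alpha,g^\alpha\}_{\omega^\alpha}:=\omega^\alpha(X_\bh,X_{\boldsymbol{g}})$ is defined, and it does not depend on the chosen Hamiltonian vector fields, since any two differ by an element of $\ker\omega^\alpha$ on which $\omega^\alpha(\,\cdot\,,X_{\boldsymbol{g}})$ vanishes. The crux is closure: that $\{\bh,\boldsymbol{g}\}_\bomega\in\Cinfty_\bomega(P)$. The natural candidate for its Hamiltonian vector field is $-[X_\bh,X_{\boldsymbol{g}}]$. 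Indeed, since $\bomega$ is closed each $\omega^\alpha$ is closed, so Cartan's formula gives $\Lie_{X_\bh}\omega^\alpha=0$, and therefore
$$\inn{[X_\bh,X_{\boldsymbol{g}}]}\omega^\alpha=\Lie_{X_\bh}\inn{X_{\boldsymbol{g}}}\omega^\alpha=\Lie_{X_\bh}\d g^\alpha=\d(X_\bh g^\alpha)=-\d\{h^\alpha,g^\alpha\}_{\omega^\alpha}$$
for every $\alpha$ (the last equality uses $X_\bh g^\alpha=\omega^\alpha(X_{\boldsymbol{g}},X_\bh)$ and the convention $\{h^\alpha,g^\alpha\}_{\omega^\alpha}=\omega^\alpha(X_\bh,X_{\boldsymbol{g}})$). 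Thus the single vector field $-[X_\bh,X_{\boldsymbol{g}}]$ is simultaneously Hamiltonian for all the components $\{h^\alpha,g^\alpha\}_{\omega^\alpha}$, so $\{\bh,\boldsymbol{g}\}_\bomega\in\Cinfty_\bomega(P)$ with $X_{\{\bh,\boldsymbol{g}\}_\bomega}=-[X_\bh,X_{\boldsymbol{g}}]$. This is precisely the step where the existence of a common Hamiltonian vector field for an $\bomega$-Hamiltonian function is indispensable.

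Finally, the Lie algebra axioms reduce to componentwise statements. Bilinearity and antisymmetry of \eqref{LieB} follow at once from the same properties of each $\{\cdot,\cdot\}_{\omega^\alpha}$. For the Jacobi identity, the diagonal form of \eqref{LieB} shows that the Jacobiator of $\bh,\boldsymbol{g},\boldsymbol{k}$ equals $\sum_{\alpha}\big(\{\{h^\alpha,g^\alpha\}_{\omega^\alpha},k^\alpha\}_{\omega^\alpha}+\text{cyclic}\big)\otimes e_\alpha$, so it suffices that each presymplectic bracket $\{\cdot,\cdot\}_{\omega^\alpha}$ obeys Jacobi on admissible functions; this follows from $\d\omega^\alpha=0$ together with the identity $X_{\{h^\alpha,g^\alpha\}_{\omega^\alpha}}=-[X_\bh,X_{\boldsymbol{g}}]$ proved above and the Jacobi identity for commutators of vector fields. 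I expect this last point to be the main obstacle: one must run the Jacobi computation in the degenerate presymplectic setting, carefully tracking the non-uniqueness of the $\omega^\alpha$-Hamiltonian vector fields and checking that every intermediate expression descends to a well-defined quantity, which is exactly what makes the closedness of $\bomega$ indispensable.
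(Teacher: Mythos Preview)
Your argument is correct. The paper does not actually give a proof of this proposition: it is stated as part of a survey of known material on $\bomega$-Hamiltonian functions (with the reader implicitly directed to references such as \cite{Awa_92,LV_15}), and the text moves on immediately to discuss the failure of the $\star$-product to preserve $\Cinfty_\bomega(P)$. Your proof is precisely the standard one and the one those references contain: closure of $\Cinfty_\bomega(P)$ under the linear operations via $X_{\bh+\boldsymbol{g}}=X_\bh+X_{\boldsymbol{g}}$, closure under the bracket via the key identity $X_{\{\bh,\boldsymbol{g}\}_\bomega}=-[X_\bh,X_{\boldsymbol{g}}]$ obtained from $\Lie_{X_\bh}\omega^\alpha=0$, and the Jacobi identity from $\d\omega^\alpha=0$ componentwise. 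One minor remark: once you invoke the nondegeneracy $\ker\bomega=0$ to get uniqueness of $X_\bh$, the well-definedness discussion for $\{h^\alpha,g^\alpha\}_{\omega^\alpha}$ becomes superfluous in the $\bomega$-setting (there is no choice to track), though it is of course needed if one wants to identify the componentwise bracket with the intrinsic presymplectic bracket of $\omega^\alpha$ as the statement does.
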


The product of $\bomega$-Hamiltonian functions  
\[ 
\bh\star\boldsymbol{g} = (h^1g^1)\otimes e_1 + \dotsb + (h^kg^k)\otimes e_k\,,
\]
is not in general an $\bomega$-Hamiltonian function \cite[pg. 2239]{LV_15}. Hence,
  $(\Cinfty_\bomega(P),\star,\{\cdot,\cdot\}_\bomega)$ is not in general a Poisson algebra \cite[pg. 2239]{LV_15}. Moreover, the map $\{\bh,\cdot\}_\bomega:\boldsymbol{g}\in \Cinfty_\bomega(P)\mapsto \{\boldsymbol{g},\bh\}_\bomega\in \Cinfty_\bomega(P)$, with $\bh\in \Cinfty_\bomega(P)$, is not, in general, a derivation with respect to $\star$ neither. Hence, $k$-polysymplectic geometry is quite different from Poisson and presymplectic geometry. Nevertheless, $\{\bh,\boldsymbol{g}\}_{\bomega}=0$ for every locally constant function $\boldsymbol{g}\in\Cinfty_\bomega(P)$ and any $\boldsymbol{h}\in \Cinfty_\bomega(P)$. This Lie algebra admits other properties, as shown next.

\begin{proposition}
\label{Prop::Hamkfun}Consider a $k$-polysymplectic manifold $(P,\bomega)$. Every $\bomega$-Hamiltonian vector field $X_{\bh}$ acts as a derivation on the Lie algebra $(\Cinfty_\bomega(P),\{\cdot,\cdot\}_\bomega)$ in the form
$$
    X_{\bh}\boldsymbol{f} = \{\boldsymbol{f},\bh\}_\bomega\,,\qquad \forall \boldsymbol{f}\in \Cinfty_\bomega(P)\,,
$$
where $\bh$ is an $\bomega$-Hamiltonian function for $X_{\bh}$.
\end{proposition}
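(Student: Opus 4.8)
The plan is to compute $X_{\bh}\boldsymbol{f}$ directly from the definitions and match it, component by component, against the Lie bracket \eqref{LieB}. Since everything is defined via the components $h^\alpha, f^\alpha$ and the presymplectic forms $\omega^\alpha$, the strategy is to reduce the $\R^k$-valued statement to $k$ separate presymplectic identities, one for each index $\alpha$, and then invoke the standard relationship between Hamiltonian vector fields and Poisson brackets in the presymplectic setting.

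First I would fix an arbitrary $\boldsymbol{f}=f^\alpha\otimes e_\alpha\in\Cinfty_\bomega(P)$ with an associated $\bomega$-Hamiltonian vector field $X_{\boldsymbol f}$, so that $\inn{X_{\boldsymbol f}}\omega^\alpha=\d f^\alpha$ for each $\alpha$, and similarly $\inn{X_{\bh}}\omega^\alpha=\d h^\alpha$. The key observation is that $X_{\bh}\boldsymbol{f}$ must be interpreted componentwise as $X_{\bh}\boldsymbol{f}=(X_{\bh}f^\alpha)\otimes e_\alpha$, so it suffices to show $X_{\bh}f^\alpha=\{f^\alpha,h^\alpha\}_{\omega^\alpha}$ for each fixed $\alpha$ (no summation). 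Now $X_{\bh}f^\alpha=\inn{X_{\bh}}\d f^\alpha=\inn{X_{\bh}}\inn{X_{\boldsymbol f}}\omega^\alpha=\omega^\alpha(X_{\boldsymbol f},X_{\bh})$. On the other hand, the Poisson bracket induced by the presymplectic form satisfies $\{f^\alpha,h^\alpha\}_{\omega^\alpha}=\omega^\alpha(X_{\boldsymbol f},X_{\bh})$ by its very definition (as the evaluation of $\omega^\alpha$ on the two presymplectic-Hamiltonian vector fields). Comparing these two expressions gives the desired componentwise identity, and reassembling over $\alpha$ via $e_\alpha$ yields $X_{\bh}\boldsymbol{f}=\{\boldsymbol{f},\bh\}_\bomega$, which is exactly \eqref{LieB}.

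The one point that requires genuine care, rather than routine manipulation, is the \emph{well-definedness} of the presymplectic Poisson bracket when the forms $\omega^\alpha$ are merely presymplectic (degenerate) rather than symplectic. In that setting $X_{\boldsymbol f}$ is not uniquely determined by $f^\alpha$ alone --- it is fixed only up to a vector field in $\ker\omega^\alpha$ --- so one must check that $\omega^\alpha(X_{\boldsymbol f},X_{\bh})$ does not depend on the chosen representatives. This follows because shifting $X_{\boldsymbol f}$ by $Z\in\ker\omega^\alpha$ changes the value by $\omega^\alpha(Z,X_{\bh})=0$, and symmetrically for $X_{\bh}$; the fact that $f^\alpha,h^\alpha$ are genuine Hamiltonian functions for the presymplectic form guarantees $\d f^\alpha$ and $\d h^\alpha$ annihilate the respective kernels, so the bracket is consistently defined. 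This is the main obstacle, and it is precisely the presymplectic subtlety that distinguishes the $k$-polysymplectic case from the nondegenerate symplectic one.

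Finally, I would remark that the derivation property --- that $X_{\bh}$ respects $\{\cdot,\cdot\}_\bomega$ --- is then either immediate from the formula $X_{\bh}\boldsymbol{f}=\{\boldsymbol{f},\bh\}_\bomega$ combined with the Jacobi identity for $\{\cdot,\cdot\}_\bomega$ established in the preceding proposition, or can be read off componentwise from the corresponding derivation property of each presymplectic Poisson bracket $\{\cdot,\cdot\}_{\omega^\alpha}$. Since the statement as phrased only asserts the explicit formula for the action $X_{\bh}\boldsymbol{f}$, the proof is essentially complete once the componentwise identity is verified and its well-definedness justified.
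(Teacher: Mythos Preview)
Your proof is correct and follows the natural componentwise reduction to the presymplectic setting. The paper states this proposition without proof, so there is no approach to compare against; your argument is the standard one and is complete. The well-definedness discussion is the right subtlety to flag, and your observation that $X_{\bh}$ and $X_{\boldsymbol f}$ themselves serve as valid presymplectic-Hamiltonian vector fields for each $\omega^\alpha$ (among possibly many) is exactly what makes the componentwise computation go through.
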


% To simplify the notation that will be used later on, let us introduce the following definition.

% \begin{definition}
%      Let $(P,\bm\omega)$ be a $k$-polysymplectic manifold, let $\bm h$ be a $\bm \omega$-Hamiltonian function for $X_{\bm h}$. Then a triple $(P,\bomega,\bm h)$ is called the $\bomega$-Hamiltonian system.

%  \end{definition}

\subsection{\texorpdfstring{$k$}{}-Polysymplectic momentum maps}\label{Sec::MomentuMap}

Let us survey the theory of $k$-polysymplectic momentum maps. Note that the presented results are not restricted to $\Ad^{*k}$-equivariant momentum maps (see \cite{LRVZ_23} for further details).

\begin{definition}
        A Lie group action $\Phi\colon G\times P\to P$ on a $k$-polysymplectic manifold $(P,\boldsymbol\omega)$ is  a \textit{$k$-polysymplectic Lie group action} if $\Phi_g^*\boldsymbol\omega=\boldsymbol\omega$ for each $g\in G$. In other words,
\begin{equation}
    \Lie_{\xi_P}\bomega = 0\,,\qquad \forall \xi\in\mathfrak{g}\,,
\end{equation}
where $\xi_P$ is the fundamental vector field of $\Phi$ related to $\xi\in\mathfrak{g}$, namely $\xi_P(p) = \restr{\frac{\d}{\d t}}{t=0}\Phi(\exp(t\xi),p)$ for any $p\in P$.
\end{definition} 

\begin{definition}
\label{Def::PolysymMomentumMap}
A {\it $k$-polysymplectic momentum map} for a Lie group action $\Phi: G\times P\rightarrow P$ with respect to a $k$-polysymplectic manifold $(P,\boldsymbol\omega)$ is a mapping $\mathbf{J}^\Phi:P\rightarrow (\mathfrak{g}^*)^k$ such that
\begin{equation}
\label{Eq::PolycoMomentumMap}
\inn{\xi_P}\boldsymbol\omega=(\inn{\xi_P}\omega^\alpha)\otimes e_\alpha=\d\left\langle \mathbf{J}^\Phi,\xi\right\rangle \,,\qquad \forall \xi\in \mathfrak{g}\,.
\end{equation}
%where $\boldsymbol\xi_P=\sum^k_{\alpha=1}\xi_P\otimes e^\alpha$ and $\boldsymbol\xi=\sum^k_{\alpha=1}\xi\otimes e^\alpha$.
\end{definition}
Equation \eqref{Eq::PolycoMomentumMap} implies that $\mathbf{J}^\Phi:P\rightarrow(\Lg^*)^k$ satisfies 
\begin{equation}
\label{Eq:EqCond}
\inn{\boldsymbol{\xi}_P}\boldsymbol{\omega}=\d\left\langle \mathbf{J}^\Phi,\boldsymbol{\xi}\right\rangle \,,\qquad\forall\boldsymbol{\xi}\in\Lg^k\,.
\end{equation}
%where ${\boldsymbol \xi}_P$ is the $k$-vector field on $P$ whose $k$ components are the fundamental vector fields of $\Phi$ related to the $k$ components of ${\boldsymbol \xi}\in \mathfrak{g}^k$.% If we denote ${\boldsymbol \xi}=(0,\ldots,  \overset{(\alpha)}{\xi} ,\ldots,0)\in \mathfrak{g}^k$ for any $\xi\in \mathfrak{g}$ and $\alpha=1,\ldots,k$ and require \eqref{Eq:EqCond} to hold for a basis $\{\xi_1,\ldots,\xi_r\}$ of $\mathfrak{g}$, we obtain $kr$ conditions. These conditions uniquely determine the value of the $kr$ coordinates of ${\bf J}^\Phi$.
and conversely. For simplicity, we will write $\langle {\bf J}^\Phi,\bm\xi\rangle=:{\bf J}^\Phi_{\bm\xi}$.%, equation \eqref{Eq::PolycoMomentumMap} evaluated on the previously mentioned basis of $\mathfrak{g}$ imposes $r$ conditions for each of the $k$ components of $\mathbf{J}^\Phi$, resulting in a total of $kr$ conditions.

Before continuing studying $k$-polysymplectic momentum maps, recall that every Lie group $G$ gives rise to Lie group action $I:(g,h)\in G\times G\mapsto I_{g}(h)=ghg^{-1}\in G$, such that $I_g:h\in G\mapsto I(g,h)\in G$ for every $g\in G$. Then, the {\it adjoint action} of $G$ on its Lie algebra, $\mathfrak{g}$, reads $\Ad:(g,v)\in G\times\mathfrak{g}\mapsto \Ad_g(v)=\T_eI_g(v)\in  \mathfrak{g}$. In turn, the {\it co-adjoint action} becomes $\Ad^*:(g,\vartheta)\in G\times \mathfrak{g}^*\mapsto \Ad^*_{g^{-1}}\vartheta= \vartheta\circ \Ad_{g^{-1}}\in \mathfrak{g}^*$.

The following definition has been widely used in the literature \cite{MRSV_15}, although we will see that the $\Ad^{*k}$-equivariance condition is no longer necessary (see \cite{LRVZ_23} for details). Moreover, we have changed the standard notation $\Coad^k$ to $\Ad^{*k}$ to shorten it.

\begin{definition}
    A $k$-polysymplectic momentum map $\mathbf{J}^\Phi:P\rightarrow (\Lg^*)^k$ is {\it $\Ad^{*k}$-equivariant} if
    \[
    \mathbf{J}^\Phi\circ \Phi_g =\Ad^{*k}_{g^{-1}}\circ\,\, \mathbf{J}^\Phi\,,\quad \forall g\in G\,,
    \]
    \begin{minipage}{12cm}
    where $\Ad^{*k}_{g^{-1}}=\Ad^*_{g^{-1}}\stackrel{}\otimes \overset{(k)}{\dotsb}\otimes \Ad^*_{g^{-1}}$ and
    \[
    \begin{array}{rccc}
    \Ad^{*k}&:G\x(\Lg^*)^k & \longrightarrow & (\Lg^*)^k\\
    & (g,\boldsymbol \mu) &\longmapsto & \Ad^{*k}_{g^{-1}}\boldsymbol\mu\,
    \end{array}.
    \]
In other words, the diagram aside is commutative for every $g\in G$.
\end{minipage}
\begin{minipage}{4cm}
    \begin{tikzcd}
    P
    \arrow[r,"\mathbf{J}^\Phi"]
    \arrow[d,"\Phi_g"]& (\mathfrak{g}^*)^k
    \arrow[d,"\Ad^{*k}_{g^{-1}}"]\\
    P
    \arrow[r,"\mathbf{J}^\Phi"]&
    (\Lg^*)^k.
    \end{tikzcd}
\end{minipage}
\end{definition}

To simplify the notation, let us introduce the following definition. 

\begin{definition}
 A {\it$G$-invariant $\bomega$-Hamiltonian system} is a tuple $(P,\boldsymbol\omega,\bm h,{\bf J}^\Phi)$, where $(P,\boldsymbol{\bomega})$ is a $k$-polysymplectic manifold, $\bm h$ is a $\bomega$-Hamiltonian function associated with $X_{\bm h}$, the map $\Phi:G\x P\rightarrow P$ is a $k$-polysymplectic Lie group action satisfying $\Phi_g^*\bm h=\bm h$ for every $g\in G$, and $\mathbf{J}^{\Phi}$ is a $k$-polysymplectic momentum map related to $\Phi$. An {\it $\Ad^{*k}$-equivariant $G$-invariant $\bomega$-polysymplectic Hamiltonian system} is a $G$-invariant $\bomega$-Hamiltonian system $(P,\boldsymbol\omega,\bm h,{\bf J}^\Phi)$ such that ${\bf J}^\Phi$  is $\Ad^{*k}$-equivariant.
\end{definition}

For simplicity, one sometimes calls {\it $\bomega$-Hamiltonian system} a triple $(P,\bomega,\bh)$ for a certain $\bomega$-Hamiltonian function $\bh$. 

Let us provide the formalism needed to avoid the $\Ad^{*k}$-equivariantness.

\begin{proposition}
\label{Prop::PsiConstant}
Let $(P,\boldsymbol\omega,\bm h,\mathbf{J}^\Phi)$ be a $G$-invariant $\bomega$-Hamiltonian system. If
\[
    \boldsymbol\psi _{g,\boldsymbol{\xi}}:P\ni x\longmapsto  {\bf J}^\Phi_{\boldsymbol{\xi}}(\Phi_g(x))-{\bf J}^\Phi_{\Ad_{g^{-1}}^k\boldsymbol{\xi}}(x)\in\mathbb{R}\,,\quad\forall g\in G\,,\quad\forall \boldsymbol{\xi}\in\mathfrak{g}^k\,,
\]
then $\boldsymbol\psi_{g,\boldsymbol{\xi}}$ is constant on $P$ for every $g\in G$ and $\boldsymbol{\xi}\in\mathfrak{g}^k$. Moreover, $\boldsymbol\sigma:G\ni g\mapsto {\boldsymbol \sigma}(g)\in (\mathfrak{g}^*)^k$, which is uniquely determined by the condition $\langle \boldsymbol\sigma(g),\boldsymbol{\xi}\rangle =\boldsymbol\psi_{g,\boldsymbol{\xi}}$ for every $
{\bm \xi}\in \mathfrak{g}^k$, satisfies
\[    \boldsymbol\sigma(g_1g_2)=\boldsymbol\sigma(g_1)+\Ad^{*k}_{g_1^{-1}}\boldsymbol\sigma(g_2)\,,\quad\forall g_1,g_2\in G\,.
\]
\end{proposition}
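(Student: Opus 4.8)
The plan is to establish the constancy of $\boldsymbol\psi_{g,\boldsymbol\xi}$ by showing that its exterior differential vanishes, and then to derive the cocycle identity for $\boldsymbol\sigma$ by a direct group-theoretic computation using the equivariance of fundamental vector fields. First I would compute $\d\boldsymbol\psi_{g,\boldsymbol\xi}$ directly from the definition. Recall the defining property of the momentum map, $\inn{\boldsymbol\xi_P}\boldsymbol\omega = \d\langle\mathbf J^\Phi,\boldsymbol\xi\rangle$ from \eqref{Eq:EqCond}. The first term is $\mathbf J^\Phi_{\boldsymbol\xi}\circ\Phi_g = \Phi_g^*\mathbf J^\Phi_{\boldsymbol\xi}$, so that $\d(\Phi_g^*\mathbf J^\Phi_{\boldsymbol\xi}) = \Phi_g^*\d\langle\mathbf J^\Phi,\boldsymbol\xi\rangle = \Phi_g^*(\inn{\boldsymbol\xi_P}\boldsymbol\omega)$. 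Using that $\Phi$ is a $k$-polysymplectic action, so $\Phi_g^*\boldsymbol\omega = \boldsymbol\omega$, together with the naturality identity $\Phi_g^*(\inn{\boldsymbol\xi_P}\boldsymbol\omega) = \inn{(\Phi_{g^{-1}})_*\boldsymbol\xi_P}(\Phi_g^*\boldsymbol\omega)$, this becomes $\inn{(\Phi_{g^{-1}})_*\boldsymbol\xi_P}\boldsymbol\omega$.

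The key algebraic input is the standard relation between the pushforward of a fundamental vector field and the adjoint action, namely $(\Phi_{g^{-1}})_*\boldsymbol\xi_P = (\Ad_{g^{-1}}\boldsymbol\xi)_P$, applied componentwise to the $k$-tuple $\boldsymbol\xi\in\mathfrak g^k$. Substituting this gives $\d(\Phi_g^*\mathbf J^\Phi_{\boldsymbol\xi}) = \inn{(\Ad_{g^{-1}}^k\boldsymbol\xi)_P}\boldsymbol\omega = \d\langle\mathbf J^\Phi,\Ad_{g^{-1}}^k\boldsymbol\xi\rangle = \d\mathbf J^\Phi_{\Ad_{g^{-1}}^k\boldsymbol\xi}$. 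Hence $\d\boldsymbol\psi_{g,\boldsymbol\xi} = \d\mathbf J^\Phi_{\boldsymbol\xi}\circ\Phi_g - \d\mathbf J^\Phi_{\Ad_{g^{-1}}^k\boldsymbol\xi} = 0$, and since $P$ is connected, $\boldsymbol\psi_{g,\boldsymbol\xi}$ is constant on $P$ for each fixed $g$ and $\boldsymbol\xi$. Because $\boldsymbol\psi_{g,\boldsymbol\xi}$ is linear in $\boldsymbol\xi\in\mathfrak g^k$, the assignment $\boldsymbol\xi\mapsto\boldsymbol\psi_{g,\boldsymbol\xi}$ defines an element $\boldsymbol\sigma(g)\in(\mathfrak g^*)^k$ by $\langle\boldsymbol\sigma(g),\boldsymbol\xi\rangle = \boldsymbol\psi_{g,\boldsymbol\xi}$.

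For the cocycle identity, I would evaluate $\langle\boldsymbol\sigma(g_1g_2),\boldsymbol\xi\rangle = \mathbf J^\Phi_{\boldsymbol\xi}\circ\Phi_{g_1g_2} - \mathbf J^\Phi_{\Ad_{(g_1g_2)^{-1}}^k\boldsymbol\xi}$ at an arbitrary point, using $\Phi_{g_1g_2} = \Phi_{g_1}\circ\Phi_{g_2}$ and the functoriality $\Ad_{(g_1g_2)^{-1}}^k = \Ad_{g_2^{-1}}^k\circ\Ad_{g_1^{-1}}^k$. The strategy is to insert and subtract the intermediate term $\mathbf J^\Phi_{\Ad_{g_1^{-1}}^k\boldsymbol\xi}\circ\Phi_{g_2}$, which splits the expression into $\bigl(\mathbf J^\Phi_{\boldsymbol\xi}\circ\Phi_{g_1} - \mathbf J^\Phi_{\Ad_{g_1^{-1}}^k\boldsymbol\xi}\bigr)\circ\Phi_{g_2}$ plus $\bigl(\mathbf J^\Phi_{\Ad_{g_1^{-1}}^k\boldsymbol\xi}\circ\Phi_{g_2} - \mathbf J^\Phi_{\Ad_{g_2^{-1}}^k\Ad_{g_1^{-1}}^k\boldsymbol\xi}\bigr)$. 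The first bracket is the constant $\langle\boldsymbol\sigma(g_1),\boldsymbol\xi\rangle$ (constancy lets me drop the $\Phi_{g_2}$), and the second is $\langle\boldsymbol\sigma(g_2),\Ad_{g_1^{-1}}^k\boldsymbol\xi\rangle = \langle\Ad_{g_1^{-1}}^{*k}\boldsymbol\sigma(g_2),\boldsymbol\xi\rangle$ by the definition of the coadjoint action on $(\mathfrak g^*)^k$. Since $\boldsymbol\xi\in\mathfrak g^k$ is arbitrary, this yields $\boldsymbol\sigma(g_1g_2) = \boldsymbol\sigma(g_1) + \Ad_{g_1^{-1}}^{*k}\boldsymbol\sigma(g_2)$.

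I expect the main obstacle to be a purely notational one: carefully tracking the componentwise action of $\Ad_{g^{-1}}^k$ on the $k$-tuple $\boldsymbol\xi\in\mathfrak g^k$ and verifying that the naturality identity $(\Phi_{g^{-1}})_*\boldsymbol\xi_P = (\Ad_{g^{-1}}\boldsymbol\xi)_P$ interacts correctly with the $\R^k$-valued contraction $\inn{\boldsymbol\xi_P}\boldsymbol\omega$, so that the pairing $\langle\mathbf J^\Phi,\cdot\rangle$ remains $\R$-valued after all identifications. The genuine mathematical content is light—the argument is the $k$-polysymplectic analogue of the classical proof that the failure of coadjoint equivariance of a momentum map is a group one-cocycle—but the bookkeeping of the $k$ indices and the distinction between $\Ad^*$ acting on a single copy of $\mathfrak g^*$ versus $\Ad^{*k}$ acting diagonally on $(\mathfrak g^*)^k$ must be handled with care to avoid sign or index errors.
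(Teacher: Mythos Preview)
The paper does not actually supply a proof of this proposition; it is stated as a known result, with the surrounding text referring the reader to \cite{LRVZ_23} for details on the non-$\Ad^{*k}$-equivariant formalism. Your argument is the standard one---showing $\d\boldsymbol\psi_{g,\boldsymbol\xi}=0$ via $\Phi_g^*\boldsymbol\omega=\boldsymbol\omega$ and the identity $(\Phi_{g^{-1}})_*\boldsymbol\xi_P=(\Ad_{g^{-1}}^k\boldsymbol\xi)_P$, then deriving the cocycle relation by the telescoping split---and it is correct; this is precisely the $k$-polysymplectic analogue of the classical symplectic proof (cf.\ \cite{OR_04}) that the cited reference \cite{LRVZ_23} adapts, so your approach coincides with what the literature does.
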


The map $\boldsymbol\sigma:G\rightarrow (\mathfrak{g}^*)^k$ of the form
\[
\boldsymbol\sigma(g)=\mathbf{J}^\Phi\circ \Phi_g-\Ad^{*k}_{g^{-1}}\mathbf{J}^\Phi,\qquad g\in G\,,
\]
is called the {\it co-adjoint cocycle} associated with the $k$-polysymplectic momentum map $\mathbf{J}^\Phi$ on $P$. Moreover, ${\bf J}^\Phi$ is an $\Ad^{*k}$-equivariant $k$-polysymplectic momentum map if and only if $\boldsymbol\sigma=0$. 

A map $\boldsymbol\sigma:G\rightarrow(\mathfrak{g}^*)^k$ is a {\it coboundary} if there exists $\boldsymbol\mu\in(\mathfrak{g}^*)^k$ such that
\[
\boldsymbol\sigma(g)=\boldsymbol\mu-\Ad_{g^{-1}}^{*k}\boldsymbol\mu\,,\qquad \forall g\in G\,.
\]

\begin{proposition}\label{Prop::GenEqJPolySym}
Let ${\bf J}^\Phi:P\rightarrow (\mathfrak{g}^*)^k$ be a $k$-polysymplectic momentum map related to a $k$-polysymplectic action $\Phi:G\times P\rightarrow P$ with co-adjoint cocycle $\boldsymbol\sigma$. Then, 

\begin{minipage}{12cm}
\begin{enumerate}[{\rm(1)}]
    \item there exists a Lie group action of $G$ on $(\mathfrak{g}^*)^k$ of the form
    \[
    \boldsymbol\Delta:G\times (\mathfrak{g}^*)^k\ni(g,\boldsymbol \mu)\mapsto \boldsymbol\sigma(g)+\Ad^{*k}_{g^{-1}}\boldsymbol\mu=:\boldsymbol\Delta_g(\boldsymbol\mu)\in(\mathfrak{g}^*)^k\,,
    \]   
    \item the $k$-polysymplectic momentum map ${\bf J}^\Phi$ is equivariant with respect to $\boldsymbol\Delta$, in other words, for every $g\in G$, one has the commutative diagram aside.
\end{enumerate}
\end{minipage}
\begin{minipage}{4cm}
\begin{center}
    \begin{tikzcd}
    P
    \arrow[r,"\mathbf{J}^\Phi"]
    \arrow[d,"\Phi_g"]& (\mathfrak{g}^*)^k
    \arrow[d,"\boldsymbol\Delta_g"]\\
    P
    \arrow[r,"\mathbf{J}^\Phi"]&
    (\mathfrak{g}^*)^k\,.
    \end{tikzcd}
    \end{center}
\end{minipage}
\end{proposition}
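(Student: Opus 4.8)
The plan is to derive both assertions directly from the cocycle identity for $\boldsymbol\sigma$ established in Proposition~\ref{Prop::PsiConstant}, together with the elementary fact that $\Ad^{*k}$ is itself a left action of $G$ on $(\Lg^*)^k$. No analytic input beyond these two ingredients is needed, so the proof is essentially a bookkeeping argument.

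For assertion (1), I would first check smoothness: since $\boldsymbol\sigma(g)=\mathbf{J}^\Phi(\Phi_g(x_0))-\Ad^{*k}_{g^{-1}}\mathbf{J}^\Phi(x_0)$ for any fixed $x_0\in P$ (the right-hand side being independent of $x_0$ by Proposition~\ref{Prop::PsiConstant}), the map $\boldsymbol\sigma$ is smooth, and hence so is $\boldsymbol\Delta$. Next I would verify the action axioms. Setting $g_1=g_2=e$ in the cocycle relation $\boldsymbol\sigma(g_1g_2)=\boldsymbol\sigma(g_1)+\Ad^{*k}_{g_1^{-1}}\boldsymbol\sigma(g_2)$ gives $\boldsymbol\sigma(e)=0$, so $\boldsymbol\Delta_e=\mathrm{id}$. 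For compatibility with products, I would expand $\boldsymbol\Delta_{g_1}(\boldsymbol\Delta_{g_2}(\boldsymbol\mu))=\boldsymbol\sigma(g_1)+\Ad^{*k}_{g_1^{-1}}\boldsymbol\sigma(g_2)+\Ad^{*k}_{g_1^{-1}}\Ad^{*k}_{g_2^{-1}}\boldsymbol\mu$, then use that $\Ad^{*k}$ is a left action to replace $\Ad^{*k}_{g_1^{-1}}\Ad^{*k}_{g_2^{-1}}$ by $\Ad^{*k}_{(g_1g_2)^{-1}}$, and finally apply the cocycle identity to collapse the first two summands into $\boldsymbol\sigma(g_1g_2)$. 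This yields $\boldsymbol\Delta_{g_1}\circ\boldsymbol\Delta_{g_2}=\boldsymbol\Delta_{g_1g_2}$, as required for a left action.

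Assertion (2) is then immediate. Using the definition $\boldsymbol\sigma(g)=\mathbf{J}^\Phi\circ\Phi_g-\Ad^{*k}_{g^{-1}}\mathbf{J}^\Phi$, which is valid pointwise precisely because the difference is $P$-constant by Proposition~\ref{Prop::PsiConstant}, I would substitute into the definition of $\boldsymbol\Delta_g$ to obtain, for every $x\in P$, the chain $\boldsymbol\Delta_g(\mathbf{J}^\Phi(x))=\boldsymbol\sigma(g)+\Ad^{*k}_{g^{-1}}\mathbf{J}^\Phi(x)=\mathbf{J}^\Phi(\Phi_g(x))$. This is exactly the commutativity of the displayed diagram. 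The only point that demands care is the left/right bookkeeping: one must confirm that the convention $\Ad^*_{g^{-1}}\vartheta=\vartheta\circ\Ad_{g^{-1}}$ adopted in the excerpt makes $\Ad^{*k}$ a left action, so that the ordering in $\Ad^{*k}_{g_1^{-1}}\Ad^{*k}_{g_2^{-1}}=\Ad^{*k}_{(g_1g_2)^{-1}}$ is the one used above; once this is settled, there is no genuine obstacle and everything reduces to direct substitution into the cocycle identity.
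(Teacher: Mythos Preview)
Your argument is correct and is precisely the standard one: the cocycle identity from Proposition~\ref{Prop::PsiConstant} together with the fact that $\Ad^{*k}$ is a left action are exactly what is needed, and your bookkeeping with the identity element and the composition $\boldsymbol\Delta_{g_1}\circ\boldsymbol\Delta_{g_2}$ is clean. The paper itself does not supply a proof of this proposition; it is stated as a known result with a reference to \cite{LRVZ_23}, where the classical symplectic theory of affine actions \cite{OR_04} is extended componentwise to the $k$-polysymplectic setting, which is essentially what you have written out.
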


Proposition \ref{Prop::GenEqJPolySym} ensures that every $k$-polysymplectic momentum map $\mathbf{J}^\Phi$ gives rise to an equivariant $k$-polysymplectic momentum map relative to a new action $\boldsymbol\Delta: G\times(\mathfrak{g}^*)^k\rightarrow(\mathfrak{g}^*)^k$, called a {\it $k$-polysymplectic affine Lie group action}. Note that a $k$-polysymplectic affine Lie group action can also be expressed by writing $\boldsymbol\Delta(g,(\mu^1,
\ldots,\mu^k))=(\Delta^1_g\mu^1,\ldots,\Delta^k_g\mu^k)\in (\mathfrak{g}^*)^k$, where the mappings $\Delta^1,\ldots,\Delta^k$ take the form $\Delta^\alpha: G\times \mathfrak{g}^*\ni(g,\vartheta)\mapsto \Ad_{g^{-1}}^*\vartheta+\sigma^\alpha(g)=\Delta_g^\alpha(\vartheta)\in\mathfrak{g}^*$  and ${\boldsymbol \sigma}(g)=(\sigma^1(g),\ldots,\sigma^k(g))$, where  $\sigma^\alpha(g)=\mathbf{J}^\Phi_\alpha\circ\Phi_g-\Ad^*_{g^{-1}}\mathbf{J}^\Phi_\alpha$ for $\alpha=1,\ldots,k$ and ${\bf J}^\Phi_1,\ldots, {\bf J}^\Phi_k$ are the coordinates of ${\bf J}^\Phi$.

\section{\texorpdfstring{$k$}{}-Polysymplectic Marsden--Weinstein reduction}\label{Sec::reduction}

Let us now review previous results in the literature for the $k$-polysymplectic Marsden--Weinstein reduction to correct previous mistakes and inaccuracies. Furthermore, we introduce the reduction of the dynamical system governed by an $\bomega$-Hamiltonian vector field. This concept is novel, as prior research has focused on dynamical systems given by Hamiltonian $k$-vector fields \cite{Bla_19, MRSV_15}. In particular, this section  first reviews the previous $k$-polysymplectic Marsden--Weinstein reduction theory and explains some, only apparently, minor inaccuracies. After that, we focus on solving a mistake in one of the main results in \cite{GM_23}, concerning the conditions to obtain a $k$-polysymplectic reduction. Finally, in Subsection \ref{Subsec::Conditions}, we analyse the relations between the conditions for the $k$-polysymplectic reduction given in \cite{MRSV_15}.

\subsection{A review on the \texorpdfstring{$k$}{}-polysymplectic Marsden--Weinstein reduction}
Let us recall several definitions that are useful for what follows. Some technical assumptions will be first set to improve the applicability of $k$-polysymplectic Marsden--Weinstein reductions. A {\it weak regular value} of a mapping $\phi:M\rightarrow N$ is a point $x_0\in N$ such that $\phi^{-1}(x_0)$ is a submanifold of $M$ and $\ker \T_p\phi = \T_p[\phi^{-1}(x_0)]$ for every $p\in \phi^{-1}(x_0)$. In particular, regular values of $\phi$ are weak regular values too. Moreover, a Lie group action $\Phi: G\times M\rightarrow M$ is {\it quotientable} \cite{Alb_89} when the space of orbits of the action of $G$ on $M$, let us say $M/G$, is a manifold and the projection $\pi: M\rightarrow M/G$ is a submersion. In particular, this occurs when $\Phi$ is free and proper.

Let us comment on the regular values of $k$-polysymplectic momentum maps. The codomain of a $k$-polysymplectic momentum map ${\bf J}^{\Phi}:P\rightarrow \mathfrak{g}^{*k}$ may have a large dimension, even larger than the dimension of $P$, for instance, due to the presence of $k$ copies of $\mathfrak{g}^*$. This implies that it may be impossible for ${\bf J}^\Phi$ to be a submersion when $k$ is large enough. Being a submersion is the typical condition used in many types of Marsden--Weinstein reductions \cite{GM_23, MRSV_15}. But this property is harder to satisfy in $k$-polysymplectic geometry. Note that it is sometimes assumed in the literature that the Sard's Theorem ensures that ${\bf J}^\Phi$ is frequently a submersion because the set of singular points in $P$ of ${\bf J}^\Phi$, i.e. the set of points where ${\bf J}^\Phi$ is not a submersion, has an image with zero measure (see \cite[Lemma 3.4]{MRSV_10} or \cite{BR_04}). Nevertheless, the whole image of ${\bf J}^\Phi$ may also be a zero measure subset and, in this case, it may happen that ${\bf J}^\Phi$ is not a submersion at points in a dense subset of $P$. Indeed, ${\bf J}^\Phi$ is not a submersion at any point in $P$ when $ k\dim\mathfrak{g}^{*}>\dim P$. In such a case, ${\bf J}^\Phi$ has no regular points in $\mathfrak{g}^{*k}$. That is one of the reasons why the analysis of weak regular values for $k$-polysymplectic momentum maps in \cite{LRVZ_23} is relevant. It also explains why in the symplectic case, when $k=1$, the assumption of ${\bf J}^\Phi$ being a submersion is not so problematic. Note also that one has to assume some regularity conditions on the coordinates of ${\bf J}_1^{\Phi},\ldots,{\bf J}_k^{\Phi}$ to ensure that their level sets are submanifolds, but such mappings do not use to have regular values in $k$-polysymplectic problems for $k>1$.

It is also worth stressing that Blacker in \cite[Theorem 3.22]{Bla_19} does not provide any explicit assumption in the structure of $\mathbf{J}^{\Phi-1}(\bm\mu)$, although it is implicitly assumed that $\mathbf{J}^{\Phi-1}(\bm\mu)$ is a manifold. 
In general, Blacker's work \cite{Bla_19} does not analyse in detail the technical conditions on the manifold structure of ${\bf J}^{\Phi-1}(\bm \mu)$. Notwithstanding, the structure of spaces of the form ${\bf J}^{\Phi-1}(\bm \mu)/G_{\bm \mu}$ is investigated.

Lemma \ref{Lemm::NonAdPerpPS} below will be used to characterise in the next section the so-called $k$-polysymplectic relative equilibrium points of $G$-invariant $\bomega$-Hamiltonian systems. More importantly, Lemma \ref{Lemm::NonAdPerpPS} is introduced to prove $k$-polysymplectic Marsden--Weinstein reduction theorems. The proof of Lemma \ref{Lemm::NonAdPerpPS} appears in \cite{LRVZ_23}.  Interestingly, G\"unther's wrong version of Lemma \ref{Lemm::NonAdPerpPS}  made his $k$-polysymplectic reduction to be flawed. In fact, G\"unther states in \cite[Lemma 7.5]{Gun_87} a wrong expression for condition (1) in Lemma 
\ref{Lemm::NonAdPerpPS}. In his work, G\"unther implicitly claims that, as in the symplectic case, one has
$$
\ker \iota_{\bm \mu}^*\bm\omega=\T_{p}\left({\bf J}^{\Phi-1}({\bm\mu})\right)^{\perp,k}\cap \T_{p}\left({\bf J}^{\Phi-1}({\bm\mu})\right)=\T_{p}(G p)\cap \T_{p}\left({\bf J}^{\Phi-1}({\bm\mu})\right)=\T_{p}(G^{\bm\Delta}_{\bm\mu} p),
$$
but the equality between the second and the third expressions is only an inclusion $\supset$ (see \cite[pg. 12]{MRSV_15}). It is worth stressing that G\"unther justifies his Lemma 7.5 by merely saying that its proof is like in the symplectic case \cite[pg. 48]{Gun_87}. 
Moreover, \cite{MRS_04} includes a related mistake. Finally, we refer to \cite[Sections 1 and  2.2]{MRSV_15} for a comment on these errors. 

\begin{lemma}
\label{Lemm::NonAdPerpPS}
Let $(P,\bomega, \bm h,\mathbf{J}^\Phi)$ be a $G$-invariant $\bomega$-Hamiltonian system and let $\bm\mu\in(\Lg^*)^k$ be a weak regular value of  ${\bf J}^\Phi:P\rightarrow (\Lg^*)^k$. Then, for every $p\in {\bf J}^{\Phi-1}(\bm\mu)$, one has 
\begin{enumerate}[{\rm(1)}]
\item $\T_{p}(G^{\bm\Delta}_{\bm\mu} p) =\T_{p}(G p)\cap \T_{p}\left({\bf J}^{\Phi-1}({\bm\mu})\right)$, 
\item $\T_{p}({\bf J}^{\Phi-1}(\bm\mu)) = \T_{p}(Gp)^{\perp,k}$.
\end{enumerate}
\end{lemma}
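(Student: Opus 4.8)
The plan is to prove (2) first and then obtain (1) from it together with the $\bm\Delta$-equivariance of $\mathbf{J}^\Phi$ supplied by Proposition \ref{Prop::GenEqJPolySym}. For (2), I would unwind both sides at the level of tangent vectors. Since $\T_p(Gp)=\{\xi_P(p)\mid \xi\in\Lg\}$, a vector $v_p\in\T_pP$ lies in $\T_p(Gp)^{\perp,k}$ if and only if $\omega^\alpha(\xi_P(p),v_p)=0$ for all $\alpha=1,\dots,k$ and all $\xi\in\Lg$. Using the defining relation of the momentum map, $\iota_{\xi_P}\omega^\alpha=\d\langle\mathbf{J}^\Phi_\alpha,\xi\rangle$ (Definition \ref{Def::PolysymMomentumMap}), evaluation at $v_p$ turns this into $\langle\T_p\mathbf{J}^\Phi_\alpha(v_p),\xi\rangle=0$ for all $\alpha$ and $\xi$. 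As $\xi$ ranges over all of $\Lg$, nondegeneracy of the pairing forces $\T_p\mathbf{J}^\Phi_\alpha(v_p)=0$ for each $\alpha$, i.e. $v_p\in\ker\T_p\mathbf{J}^\Phi$. Finally, since $\bm\mu$ is a \emph{weak} regular value, $\ker\T_p\mathbf{J}^\Phi=\T_p(\mathbf{J}^{\Phi-1}(\bm\mu))$, which is precisely (2).

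For (1) the central point is an infinitesimal characterisation: for $\xi\in\Lg$ and $p\in\mathbf{J}^{\Phi-1}(\bm\mu)$, I claim $\xi_P(p)\in\T_p(\mathbf{J}^{\Phi-1}(\bm\mu))$ if and only if $\xi\in\Lg^{\bm\Delta}_{\bm\mu}:=\mathrm{Lie}(G^{\bm\Delta}_{\bm\mu})$. To see this, I would differentiate the equivariance relation $\mathbf{J}^\Phi\circ\Phi_g=\bm\Delta_g\circ\mathbf{J}^\Phi$ along the curve $t\mapsto\mathbf{J}^\Phi(\Phi_{\exp(t\xi)}(p))=\bm\Delta_{\exp(t\xi)}(\bm\mu)$ at $t=0$, obtaining $\T_p\mathbf{J}^\Phi(\xi_P(p))=\xi_{(\Lg^*)^k}(\bm\mu)$, the value at $\bm\mu$ of the fundamental vector field of $\bm\Delta$ attached to $\xi$. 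By the weak regular value hypothesis again, $\xi_P(p)\in\T_p(\mathbf{J}^{\Phi-1}(\bm\mu))=\ker\T_p\mathbf{J}^\Phi$ is equivalent to $\xi_{(\Lg^*)^k}(\bm\mu)=0$. The latter says $\bm\mu$ is a fixed point of the flow $t\mapsto\bm\Delta_{\exp(t\xi)}(\bm\mu)$ of that fundamental vector field; since $\bm\Delta$ is a genuine left action (its cocycle property in Proposition \ref{Prop::PsiConstant} gives $\bm\Delta_{g_1}\bm\Delta_{g_2}=\bm\Delta_{g_1g_2}$), this holds if and only if $\exp(t\xi)\in G^{\bm\Delta}_{\bm\mu}$ for all $t$, i.e. if and only if $\xi\in\Lg^{\bm\Delta}_{\bm\mu}$.

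With this equivalence, (1) follows by a double inclusion. For $\subseteq$: any $w\in\T_p(G^{\bm\Delta}_{\bm\mu}p)$ is $w=\xi_P(p)$ with $\xi\in\Lg^{\bm\Delta}_{\bm\mu}$, so $w\in\T_p(Gp)$ trivially and $w\in\T_p(\mathbf{J}^{\Phi-1}(\bm\mu))$ by the equivalence. For $\supseteq$: any $w$ in the intersection is $w=\xi_P(p)$ for some $\xi\in\Lg$ (membership in $\T_p(Gp)$) and also lies in $\T_p(\mathbf{J}^{\Phi-1}(\bm\mu))$, so the equivalence forces $\xi\in\Lg^{\bm\Delta}_{\bm\mu}$ and hence $w\in\T_p(G^{\bm\Delta}_{\bm\mu}p)$.

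The main obstacle, and the reason the weak regular value hypothesis is indispensable, is the identification $\T_p(\mathbf{J}^{\Phi-1}(\bm\mu))=\ker\T_p\mathbf{J}^\Phi$ used in both parts: under a mere submanifold assumption only $\subseteq$ is automatic, and here $\mathbf{J}^\Phi$ typically fails to be a submersion, so regular values need not exist. I would also emphasise that the argument deliberately bypasses the double $k$-polysymplectic orthogonal. Although (2) yields $\T_p(\mathbf{J}^{\Phi-1}(\bm\mu))=\T_p(Gp)^{\perp,k}$, one may \emph{not} conclude $\T_p(\mathbf{J}^{\Phi-1}(\bm\mu))^{\perp,k}=\T_p(Gp)$, because $(W^{\perp,k})^{\perp,k}\supsetneq W$ in general; this is exactly the gap in G\"unther's reasoning, and computing (1) directly via equivariance avoids it.
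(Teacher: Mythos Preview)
Your argument is correct. The paper does not actually supply a proof of this lemma in the text; it states that the proof appears in \cite{LRVZ_23} and only discusses the surrounding history (G\"unther's error, etc.). Your approach is the standard one adapted to the $k$-polysymplectic setting with the affine action $\bm\Delta$: part (2) via the momentum map identity $\iota_{\xi_P}\omega^\alpha=\d\langle\mathbf{J}^\Phi_\alpha,\xi\rangle$ together with the weak regular value hypothesis $\ker\T_p\mathbf{J}^\Phi=\T_p(\mathbf{J}^{\Phi-1}(\bm\mu))$, and part (1) via differentiating the $\bm\Delta$-equivariance of $\mathbf{J}^\Phi$. This is precisely the line of argument one expects in \cite{LRVZ_23}, and your closing remark correctly isolates why the double-orthogonal shortcut fails and why the equivariance route is needed instead.
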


Let us review the conditions of the $k$-polysymplectic Marsden--Weinstein reduction theorem, which will be crucial in the $k$-polysymplectic energy-momentum method to correct a mistake in one of the main results in \cite{GM_23}, in fact, the one, \cite[Proposition 1]{GM_23}, giving the name to the paper. 

Recall that the first correct $k$-polysymplectic Marsden--Weinstein reduction theory can be found in \cite{MRSV_15}. The necessary and sufficient conditions to perform a reduction were given by C. Blacker in \cite{Bla_19}, although there is a relevant typo in his theorem, as commented in \cite{GM_23}. The $k$-polysymplectic Marsden--Weinstein reduction theorem was proved in \cite{MRSV_15} assuming that the $k$-polysymplectic momentum map $\mathbf{J}^\Phi:P\rightarrow (\mathfrak{g}^*)^k$ is $\Ad^{*k}$-equivariant. A version of the $k$-polysymplectic Marsden--Weinstein reduction theorem without this condition was accomplished in \cite{LRVZ_23}. In its correct and most modern form, the reduction theorem reads as in Theorem \ref{Th::PolisymplecticReductionJ} below (see \cite[Theorem 5.10]{LRVZ_23} for details). Note that when we say that $\bm \mu$ is a weakly regular value of ${\bf J}^\Phi$, we also assume that all the components of $\bm\mu$ are weakly regular too. It is worth stressing that even if $\bm \mu$ is a  regular value of ${\bf J}^\Phi$, then each component $\mu^
\alpha$ of $\bm\mu$ does not need to be regular for ${\bf J}^\Phi_\alpha$ since ${\bf J}_\alpha^{\Phi-1}(\mu^\alpha)\supset {\bf J}^{\Phi-1}(\bm\mu)$.

\begin{theorem}[$k$-polysymplectic Marsden--Weinstein reduction theorem]\label{Th::PolisymplecticReductionJ}
    Consider a $G$-invariant $\bomega$-Hamiltonian system  $(P,\bomega,\bm h,\mathbf{J}^\Phi)$. Assume that $\boldsymbol\mu=(
    \mu^1,\ldots,
    \mu^k)\in (\Lg^*)^k$ is a weak regular value of $\mathbf{J}^\Phi$ and $G_{\boldsymbol\mu}^{\boldsymbol\Delta}$ acts in a quotientable manner on $\mathbf{J}^{\Phi-1}(\boldsymbol\mu)$. Let $G^{
    \Delta^\alpha}_{\mu^\alpha}$ denote the isotropy group at $\mu^\alpha$ of the Lie group action $\Delta^\alpha:(g,\vartheta)\in G\x\mathfrak{g}^*\mapsto  \Delta^\alpha(g,\vartheta)\in \mathfrak{g}^*$ for $\alpha=1,\ldots,k$. Moreover, let the following (sufficient) conditions hold
    \begin{equation}\label{Eq::PolysymplecticReduction1eq}
        \ker (\T_p\mathbf{J}_\alpha^\Phi) = \T_p(\mathbf{J}^{\Phi-1}(\boldsymbol\mu))+\ker\omega^\alpha_p + \T_p(G^{\Delta^\alpha}_{\mu^\alpha} p)\,,\qquad \alpha=1,\ldots,k\,,
    \end{equation}
    \begin{equation}\label{Eq::PolysymplecticReduction2eq}
        \T_p(G_{\boldsymbol\mu}^{\boldsymbol\Delta} p) = \bigcap^k_{\alpha=1}\left(\ker\omega^\alpha_p+\T_p(G^{\Delta^\alpha}_{\mu^\alpha}p)\right)\cap \T_p(\mathbf{J}^{\Phi-1}(\boldsymbol\mu))\,,
    \end{equation}
    for every $p\in {\bf J}^{\Phi-1}({\boldsymbol \mu})$. Then,  $(\mathbf{J}^{\Phi-1}(\boldsymbol\mu)/G^{\boldsymbol\Delta}_{\boldsymbol\mu},\bomega_{\boldsymbol\mu})$ is a $k$-polysymplectic manifold, with ${\bomega}_{\boldsymbol \mu}$ being uniquely determined by
    \[ \pi_{\boldsymbol\mu}^*\bomega_{\boldsymbol\mu}=\jmath_{\boldsymbol\mu}^*\bomega
    \]
    where $\jmath_{\boldsymbol\mu}:\mathbf{J}^{\Phi-1}(\boldsymbol\mu)\hookrightarrow P$ is the canonical immersion and $\pi_{\boldsymbol\mu}:\mathbf{J}^{\Phi-1}(\boldsymbol\mu)\rightarrow \mathbf{J}^{\Phi-1}(\boldsymbol\mu)/G^{\boldsymbol\Delta}_{\boldsymbol\mu}$ is the canonical projection.
\end{theorem}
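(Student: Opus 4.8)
The plan is to establish three things in turn: that $\jmath_{\boldsymbol\mu}^*\bomega$ descends to a unique $\mathbb{R}^k$-valued two-form $\bomega_{\boldsymbol\mu}$ on the quotient satisfying $\pi_{\boldsymbol\mu}^*\bomega_{\boldsymbol\mu}=\jmath_{\boldsymbol\mu}^*\bomega$, that $\bomega_{\boldsymbol\mu}$ is closed, and that it is nondegenerate; only the last step uses the hypotheses \eqref{Eq::PolysymplecticReduction1eq}--\eqref{Eq::PolysymplecticReduction2eq}. Throughout I abbreviate $N:=\mathbf{J}^{\Phi-1}(\boldsymbol\mu)$ and, for a subspace $W\subseteq\T_pP$, write $W^{\perp_{\omega^\alpha}}$ for its orthogonal complement relative to the presymplectic form $\omega^\alpha$, so that $W^{\perp,k}=\bigcap_{\alpha=1}^k W^{\perp_{\omega^\alpha}}$. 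For the descent I would check that $\jmath_{\boldsymbol\mu}^*\bomega$ is basic for $\pi_{\boldsymbol\mu}$: it is $G^{\boldsymbol\Delta}_{\boldsymbol\mu}$-invariant because $G^{\boldsymbol\Delta}_{\boldsymbol\mu}$ preserves $N$ (by the $\boldsymbol\Delta$-equivariance of $\mathbf{J}^\Phi$ from Proposition \ref{Prop::GenEqJPolySym}) and $\Phi_g^*\bomega=\bomega$; and it is horizontal because, for $\xi$ in the Lie algebra of $G^{\boldsymbol\Delta}_{\boldsymbol\mu}$, the generator $\xi_P$ is tangent to $N$ and $\jmath_{\boldsymbol\mu}^*(\iota_{\xi_P}\omega^\alpha)=\jmath_{\boldsymbol\mu}^*\d\langle\mathbf{J}^\Phi_\alpha,\xi\rangle=\d\langle\mu^\alpha,\xi\rangle=0$, since $\mathbf{J}^\Phi_\alpha\equiv\mu^\alpha$ on $N$. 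As $\pi_{\boldsymbol\mu}$ is a surjective submersion (the quotientability hypothesis), a basic form descends uniquely, giving $\bomega_{\boldsymbol\mu}$ and its uniqueness. Closedness is then automatic: $\pi_{\boldsymbol\mu}^*\d\bomega_{\boldsymbol\mu}=\d\jmath_{\boldsymbol\mu}^*\bomega=\jmath_{\boldsymbol\mu}^*\d\bomega=0$ and $\pi_{\boldsymbol\mu}^*$ is injective on forms.

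For nondegeneracy, note that $\ker\T_p\pi_{\boldsymbol\mu}=\T_p(G^{\boldsymbol\Delta}_{\boldsymbol\mu}p)$ and $\pi_{\boldsymbol\mu}^*\bomega_{\boldsymbol\mu}=\jmath_{\boldsymbol\mu}^*\bomega$ together show that $\T_p\pi_{\boldsymbol\mu}(v)\in\ker\bomega_{\boldsymbol\mu}$ if and only if $v\in\T_pN\cap(\T_pN)^{\perp,k}$; hence nondegeneracy is equivalent to the identity $\T_pN\cap(\T_pN)^{\perp,k}=\T_p(G^{\boldsymbol\Delta}_{\boldsymbol\mu}p)$ for every $p\in N$. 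The inclusion $\supseteq$ is straightforward, since the generators of $G^{\boldsymbol\Delta}_{\boldsymbol\mu}$ are tangent to $N$ and lie in every $(\T_pN)^{\perp_{\omega^\alpha}}$ (orbit directions are $\omega^\alpha$-orthogonal to $\ker\T_p\mathbf{J}^\Phi=\T_pN$). Using \eqref{Eq::PolysymplecticReduction2eq}, the reverse inclusion reduces to the componentwise claim that $\T_pN\cap(\T_pN)^{\perp_{\omega^\alpha}}\subseteq\ker\omega^\alpha_p+\T_p(G^{\Delta^\alpha}_{\mu^\alpha}p)$ for each $\alpha$.

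To prove this componentwise claim I would use two ingredients. First, the momentum map relation gives $\ker\T_p\mathbf{J}^\Phi_\alpha=(\T_p(Gp))^{\perp_{\omega^\alpha}}$, and combined with the presymplectic double-orthogonal identity $(W^{\perp_{\omega^\alpha}})^{\perp_{\omega^\alpha}}=W+\ker\omega^\alpha_p$, taking the $\omega^\alpha$-orthogonal of \eqref{Eq::PolysymplecticReduction1eq} yields $\T_p(Gp)+\ker\omega^\alpha_p=(\T_pN)^{\perp_{\omega^\alpha}}\cap(\T_p(G^{\Delta^\alpha}_{\mu^\alpha}p))^{\perp_{\omega^\alpha}}$. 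Second, a purely equivariance-theoretic computation (needing no regularity of $\mu^\alpha$ for $\mathbf{J}^\Phi_\alpha$) gives the isotropy identity $\T_p(G^{\Delta^\alpha}_{\mu^\alpha}p)=\T_p(Gp)\cap(\T_p(Gp))^{\perp_{\omega^\alpha}}$. Now take $v\in\T_pN\cap(\T_pN)^{\perp_{\omega^\alpha}}$: as $\T_p(G^{\Delta^\alpha}_{\mu^\alpha}p)\subseteq(\T_pN)^{\perp_{\omega^\alpha}}$ and $v\in\T_pN$, one has $v\in(\T_p(G^{\Delta^\alpha}_{\mu^\alpha}p))^{\perp_{\omega^\alpha}}$, so $v\in\T_p(Gp)+\ker\omega^\alpha_p$ by the displayed identity. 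Writing $v=o+\kappa$ with $o\in\T_p(Gp)$ and $\kappa\in\ker\omega^\alpha_p$, and using $\T_pN\subseteq(\T_p(Gp))^{\perp_{\omega^\alpha}}$, I get $o=v-\kappa\in(\T_p(Gp))^{\perp_{\omega^\alpha}}$, hence $o\in\T_p(Gp)\cap(\T_p(Gp))^{\perp_{\omega^\alpha}}=\T_p(G^{\Delta^\alpha}_{\mu^\alpha}p)$ and $v\in\ker\omega^\alpha_p+\T_p(G^{\Delta^\alpha}_{\mu^\alpha}p)$, as desired. Intersecting over all $\alpha$ and with $\T_pN$, and invoking \eqref{Eq::PolysymplecticReduction2eq}, yields the required identity.

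The main obstacle is the nondegeneracy step. Its difficulty is exactly that the $k$-polysymplectic orthogonal does not obey $(W^{\perp,k})^{\perp,k}=W$, so the one-line symplectic argument fails; one is forced to argue one presymplectic form $\omega^\alpha$ at a time, to keep careful track of the kernels $\ker\omega^\alpha_p$ through the double-orthogonal identity, and to use both conditions \eqref{Eq::PolysymplecticReduction1eq} and \eqref{Eq::PolysymplecticReduction2eq} (together with Lemma \ref{Lemm::NonAdPerpPS} and the isotropy identity above). This is precisely the point at which earlier accounts erred, and I would take particular care to verify that the componentwise passage does not secretly assume that each $\mu^\alpha$ is a weak regular value of $\mathbf{J}^\Phi_\alpha$, which the hypotheses do not grant.
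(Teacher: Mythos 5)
Your proposal is correct. Note that the paper itself does not reprove Theorem \ref{Th::PolisymplecticReductionJ} --- it imports it from the reference cited alongside the statement --- so the comparison is with the route taken there and in the earlier literature the paper surveys. That route constructs the reduced presymplectic forms through the auxiliary maps $\widetilde{\pi}^\alpha_p$ (recalled in Section \ref{Subsec::Conditions}), using that \eqref{Eq::PolysymplecticReduction1eq} is equivalent to surjectivity of each $\widetilde{\pi}^\alpha_p$ and \eqref{Eq::PolysymplecticReduction2eq} to $\bigcap_\alpha\ker\widetilde{\pi}^\alpha_p=0$. You instead reduce nondegeneracy of $\bomega_{\bm\mu}$ to the pointwise identity $\T_pN\cap(\T_pN)^{\perp,k}=\T_p(G^{\bm\Delta}_{\bm\mu}p)$ --- which, via $\T_pN=\T_p(Gp)^{\perp,k}$, is exactly Blacker's necessary-and-sufficient condition \eqref{Eq::BlackerCondition} --- and then derive it from \eqref{Eq::PolysymplecticReduction1eq} and \eqref{Eq::PolysymplecticReduction2eq} by componentwise presymplectic linear algebra. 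The individual steps check out: the basic-form descent and closedness are standard; the identities $\ker\T_p\mathbf{J}^\Phi_\alpha=(\T_p(Gp))^{\perp_{\omega^\alpha}}$ and $\T_p(G^{\Delta^\alpha}_{\mu^\alpha}p)=\T_p(Gp)\cap(\T_p(Gp))^{\perp_{\omega^\alpha}}$ follow from \eqref{Eq::PolycoMomentumMap} and $\Delta^\alpha$-equivariance without any regularity of $\mu^\alpha$ for $\mathbf{J}^\Phi_\alpha$ (your caveat at the end is well taken, since the paper stresses that regularity of $\bm\mu$ does not pass to its components); and the double-orthogonal identity $(W^{\perp_{\omega^\alpha}})^{\perp_{\omega^\alpha}}=W+\ker\omega^\alpha_p$ is valid for presymplectic forms in finite dimensions. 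What your approach buys is that it makes explicit the implication ``\eqref{Eq::PolysymplecticReduction1eq} and \eqref{Eq::PolysymplecticReduction2eq} together imply \eqref{Eq::BlackerCondition}'', using both hypotheses --- precisely the implication that the flawed argument criticised in Section \ref{Subsec::Conditions} tried to obtain from \eqref{Eq::PolysymplecticReduction2eq} alone; the $\widetilde{\pi}^\alpha_p$ route keeps the two conditions in their natural functional-analytic roles but hides this relation.
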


The following theorem shows the reduction of the dynamics given by an $\bomega$-Hamiltonian vector field $X_{\bh}$ on $P$ as a consequence of Theorem \ref{Th::PolisymplecticReductionJ}, which will be essential for our $k$-polysymplectic energy-momentum method. Note that in  previous works on $k$-polysymplectic Marsden--Weinstein reductions, the $k$-polysymplectic Marsden--Weinstein  reduction theorem has been applied to reduce the dynamics given by an ${\bm \omega}$-Hamiltonian $k$-vector field \cite[Theorem 4.4]{MRSV_15}. Nevertheless, since our $k$-polysymplectic Marsden--Weinstein reduction theorem concerns just ${\bm \omega}$-Hamiltonian vector fields, the conditions of that theorem can be simplified as follows.

\begin{theorem}
\label{Th::Xreduction}
    Let $(P,\bomega,\bm h,\mathbf{J}^\Phi)$ be a $G$-invariant $\bomega$-Hamiltonian system and let $\Phi_{g*}{\bm h}={\bm h}$ for each $g\in G$. Then, the one-parametric group of diffeomorphisms $F_t$ of the vector field $X_{\bm h}$ induces the one-parametric group of diffeomorphisms $\mathcal{F}_t$ of the vector field $X_{\boldsymbol{f}_{\boldsymbol{\mu}}}$ on $\mathbf{J}^{\Phi-1}(\boldsymbol{\mu})/G^{\boldsymbol{\Delta}}_{\boldsymbol{\mu}}$ such that $\iota_{X_{\boldsymbol{f}_{\boldsymbol{\mu}}}}\boldsymbol{\omega}_{\boldsymbol{\mu}}=\d\boldsymbol{f}_{\boldsymbol{\mu}}$ and $\jmath_{\boldsymbol{\mu}}^*\boldsymbol{h}=\pi_{\boldsymbol{\mu}}^*\boldsymbol{f}_{\boldsymbol{\mu}}$.
\end{theorem}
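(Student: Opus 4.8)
The plan is to run a reduction-of-dynamics argument modelled on the symplectic case, carried out $\mathbb{R}^k$-componentwise, under the standing hypotheses of Theorem \ref{Th::PolisymplecticReductionJ} (so that $(\mathbf{J}^{\Phi-1}(\boldsymbol\mu)/G^{\boldsymbol\Delta}_{\boldsymbol\mu},\bomega_{\boldsymbol\mu})$ is a $k$-polysymplectic manifold and $\pi_{\boldsymbol\mu}^*\bomega_{\boldsymbol\mu}=\jmath_{\boldsymbol\mu}^*\bomega$). First I would prove a $k$-polysymplectic Noether identity showing that $X_{\bm h}$ is tangent to $\mathbf{J}^{\Phi-1}(\boldsymbol\mu)$. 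For each $\alpha$ and each $\xi\in\mathfrak{g}$, using $\inn{\xi_P}\omega^\alpha=\d\langle\mathbf{J}^\Phi_\alpha,\xi\rangle$ and $\inn{X_{\bm h}}\omega^\alpha=\d h^\alpha$, one computes $X_{\bm h}\langle\mathbf{J}^\Phi_\alpha,\xi\rangle=\inn{X_{\bm h}}\inn{\xi_P}\omega^\alpha=-\inn{\xi_P}\d h^\alpha=-\xi_P(h^\alpha)$, which vanishes because $\Phi_g^*\bm h=\bm h$ yields $\Lie_{\xi_P}h^\alpha=0$. Hence $\T_p\mathbf{J}^\Phi(X_{\bm h}(p))=0$, and since $\boldsymbol\mu$ is a weak regular value we have $\ker\T_p\mathbf{J}^\Phi=\T_p(\mathbf{J}^{\Phi-1}(\boldsymbol\mu))$, so $X_{\bm h}$ is tangent to the level set and its flow $F_t$ preserves it.

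Next I would produce the projected objects. By the $\boldsymbol\Delta$-equivariance of $\mathbf{J}^\Phi$ (Proposition \ref{Prop::GenEqJPolySym}), every $g\in G^{\boldsymbol\Delta}_{\boldsymbol\mu}$ maps $\mathbf{J}^{\Phi-1}(\boldsymbol\mu)$ into itself, so $G^{\boldsymbol\Delta}_{\boldsymbol\mu}$ acts on the level set, and the quotientability hypothesis makes $\pi_{\boldsymbol\mu}$ a surjective submersion. The condition $\Phi_{g*}\bm h=\bm h$ together with $\Phi_g^*\bomega=\bomega$ and the nondegeneracy of $\bomega$ forces $\Phi_{g*}X_{\bm h}=X_{\bm h}$ for all $g\in G$; in particular the restriction of $X_{\bm h}$ to $\mathbf{J}^{\Phi-1}(\boldsymbol\mu)$ is $G^{\boldsymbol\Delta}_{\boldsymbol\mu}$-invariant and hence projects to a well-defined vector field $\widetilde X$ on the quotient, whose flow $\mathcal F_t$ is characterised by $\pi_{\boldsymbol\mu}\circ F_t=\mathcal F_t\circ\pi_{\boldsymbol\mu}$ on $\mathbf{J}^{\Phi-1}(\boldsymbol\mu)$. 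Likewise, $\jmath_{\boldsymbol\mu}^*\bm h$ is $G^{\boldsymbol\Delta}_{\boldsymbol\mu}$-invariant, so there is a unique $\boldsymbol f_{\boldsymbol\mu}$ with $\pi_{\boldsymbol\mu}^*\boldsymbol f_{\boldsymbol\mu}=\jmath_{\boldsymbol\mu}^*\bm h$, which is exactly the second asserted identity.

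It then remains to identify $\widetilde X$ with $X_{\boldsymbol f_{\boldsymbol\mu}}$. Since $X_{\bm h}$ is tangent to $\mathbf{J}^{\Phi-1}(\boldsymbol\mu)$ it is $\jmath_{\boldsymbol\mu}$-related to its restriction, which by construction is $\pi_{\boldsymbol\mu}$-related to $\widetilde X$; applying the contraction identity $f^*(\inn{\widetilde Y}\beta)=\inn{Y}(f^*\beta)$ for $f$-related vector fields twice and using $\pi_{\boldsymbol\mu}^*\bomega_{\boldsymbol\mu}=\jmath_{\boldsymbol\mu}^*\bomega$, I would obtain (writing $X_{\bm h}$ also for its restriction)
\[
\pi_{\boldsymbol\mu}^*(\inn{\widetilde X}\bomega_{\boldsymbol\mu})=\inn{X_{\bm h}}(\jmath_{\boldsymbol\mu}^*\bomega)=\jmath_{\boldsymbol\mu}^*(\inn{X_{\bm h}}\bomega)=\jmath_{\boldsymbol\mu}^*\d\bm h=\d(\jmath_{\boldsymbol\mu}^*\bm h)=\pi_{\boldsymbol\mu}^*\d\boldsymbol f_{\boldsymbol\mu}\,.
\]
Because $\pi_{\boldsymbol\mu}$ is a surjective submersion, $\pi_{\boldsymbol\mu}^*$ is injective on $\mathbb{R}^k$-valued forms, whence $\inn{\widetilde X}\bomega_{\boldsymbol\mu}=\d\boldsymbol f_{\boldsymbol\mu}$, and the nondegeneracy of $\bomega_{\boldsymbol\mu}$ yields $\widetilde X=X_{\boldsymbol f_{\boldsymbol\mu}}$. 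The $\pi_{\boldsymbol\mu}$-relatedness of the flows then gives the claimed relation between $F_t$ and $\mathcal F_t$.

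The hard part, I expect, will be the last step: correctly threading the two pullbacks — along the immersion $\jmath_{\boldsymbol\mu}$ and the submersion $\pi_{\boldsymbol\mu}$ — while keeping track of tangency and relatedness, and, above all, exploiting that it is the full $\mathbb{R}^k$-valued nondegenerate $\bomega_{\boldsymbol\mu}$ which pins down $X_{\boldsymbol f_{\boldsymbol\mu}}$ uniquely, since the individual reduced presymplectic components $\omega^\alpha_{\boldsymbol\mu}$ are degenerate and do not. By contrast, the simplification relative to the reduction of an $\bomega$-Hamiltonian $k$-vector field in \cite[Theorem 4.4]{MRSV_15} is precisely that tangency and invariance here each reduce to a single condition on the one vector field $X_{\bm h}$ rather than to $k$ compatibility conditions.
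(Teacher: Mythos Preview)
Your proposal is correct and follows essentially the same route as the paper: prove tangency of $X_{\bm h}$ to $\mathbf{J}^{\Phi-1}(\boldsymbol\mu)$ via the Noether identity, use $G$-invariance of $X_{\bm h}$ (the paper additionally verifies $[\xi_P,X_{\bm h}]=0$ via $\iota_{[\xi_P,X_{\bm h}]}\bomega=\Lie_{\xi_P}\iota_{X_{\bm h}}\bomega-\iota_{X_{\bm h}}\Lie_{\xi_P}\bomega=0$, but this is equivalent to your global $\Phi_{g*}X_{\bm h}=X_{\bm h}$) to project to the quotient, and then run the same chain $\pi_{\boldsymbol\mu}^*(\inn{\widetilde X}\bomega_{\boldsymbol\mu})=\jmath_{\boldsymbol\mu}^*\d\bm h=\pi_{\boldsymbol\mu}^*\d\boldsymbol f_{\boldsymbol\mu}$ to conclude via injectivity of $\pi_{\boldsymbol\mu}^*$ and nondegeneracy of $\bomega_{\boldsymbol\mu}$.
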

\begin{proof}
   First, note that $\Phi_g^*\boldsymbol{h}=\boldsymbol{h}$ and our assumptions, in particular $\Phi_g^*\boldsymbol{\omega}=\boldsymbol{\omega}$,  yield $\Phi_{g*}X_{\boldsymbol{h}}=X_{\boldsymbol{h}}$ for each $g\in G$. Therefore,
   \[
   \inn{X_{\bm h}}\d \langle \mathbf{J}^\Phi,\xi\rangle=-\iota_{\xi_P}\iota_{X_{\bm h}}\boldsymbol{\omega}=-\iota_{\xi_P}\d\boldsymbol{h}=0\,,\qquad \forall \xi\in \mathfrak{g}\,.
   \]
    Hence, $X_{\bm h}$ is tangent to $\mathbf{J}^{\Phi-1}(\boldsymbol{\mu})$.
    Next, for every $\xi\in\mathfrak{g}$, we have
    \[\inn{[\xi_P,X_{\bm h}]}\boldsymbol{\omega}=\Lie_{\xi_P}\inn{X_{\bm h}}\boldsymbol{\omega}-\inn{\xi_P}\Lie_{X_{\bm h}}\boldsymbol{\omega}=0\,,
    \]
    so by the virtue of $\ker\boldsymbol{\omega}=0$, we obtain that $[\xi_P,X_{\bm h}]=0$. Thus, the vector field $X_{\bm h}$ projects onto a vector field $Y$ on the reduced manifold $\mathbf{J}^{\Phi-1}(\boldsymbol{\mu})/G^{\boldsymbol{\Delta}}_{\boldsymbol{\mu}}$. In other words, the one-parametric group of diffeomorphisms  $F_t$ of $X_{\bm h}$ induces the one-parametric group of diffeomorphisms $\mathcal{F}_t$ of $Y$ so that $\pi_{\boldsymbol{\mu}}\circ F_t = \mathcal{F}_t\circ \pi_{\boldsymbol{\mu}}$ for each $t\in \mathbb{R}$. Then, by Theorem \ref{Th::PolisymplecticReductionJ}, one has
    \begin{equation}
    \label{Eq::Xfreduction}
    \jmath_{\boldsymbol{\mu}}^*\d\boldsymbol{h} = \jmath^*_{\boldsymbol{\mu}}(\inn{X_{\bm h}}\boldsymbol{\omega}) = \inn{X_{\bm h}}\jmath^*_{\boldsymbol{\mu}}\boldsymbol{\omega} = \inn{X_{\bm h}}\pi^*_{\boldsymbol{\mu}}\boldsymbol{\omega}_{\boldsymbol{\mu}}, = \pi^*_{\boldsymbol{\mu}}(\inn{Y}\boldsymbol{\omega}_{\boldsymbol{\mu}})\,,
    \end{equation}
    where we denoted by $X_{\bm h}$ both the vector field $X_{\bm h}$ on $P$ itself and its restriction to $\mathbf{J}^{\Phi-1}(\boldsymbol{\mu})$. The same slight abuse of notation will be hereafter done to simplify the notation. 
    
    Due to the invariance of ${\boldsymbol{h}}$ relative to $G_{\boldsymbol{
    \mu}}^{\boldsymbol{\Delta}}$, there exists a reduced $\R^k$-valued function $\boldsymbol{f}_{\bm\mu}$ on ${\bf J}^{\Phi-1}(\bm\mu)$ such that $\jmath_{\bm\mu}^*{\bh}=\pi_{\bm\mu}^*{\bm f}_{\bm \mu}$.     
    Finally,  expression \eqref{Eq::Xfreduction} gives
    \begin{equation*}
\pi_{\bm\mu}^*\d\boldsymbol{f}_{\boldsymbol{\mu}}=\jmath_{\boldsymbol{\mu}}^*\d\boldsymbol{h}=\pi_{\boldsymbol{\mu}}^*\iota_{Y}\boldsymbol{\omega}_{\boldsymbol{\mu}}%=(\iota_{X_{{\bm k}_{\boldsymbol{\mu}}}}\boldsymbol{\omega}_{\boldsymbol{\mu}})(\pi_{\boldsymbol{\mu}}(p))\,,\quad\forall p\in\mathbf{J}^{\Phi-1}(\boldsymbol{\mu})\,,
    \end{equation*}
    which shows that $Y=X_{\boldsymbol{f}_{\bm\mu}}$ is an $\bm{\omega}_{\boldsymbol{\mu}}$-Hamiltonian vector field and $\boldsymbol{f}_{\bm\mu}$ is an $\bm\omega_{\bm\mu}$-Hamiltonian function associated with $X_{\boldsymbol{f}_{\bm\mu}}$.
\end{proof}

Now, let us recall the sufficient and necessary conditions for a $k$-polysymplectic reduction given by Blacker in \eqref{Eq::BlackerCondition}. His main result is described in Theorem \ref{Th::BlackerReduction} with our notation and we have corrected the typo in \cite[Theorem 3.22]{Bla_19} on the $k$-polysymplectic Marsden--Weinstein reduction. It is worth noting that the typo also appears in the proof of \cite[Theorem 3.22]{Bla_19} and is evident after applying \cite[Theorem 2.14]{Bla_19} to ${\bm\omega}_x$. Theorem \ref{Th::BlackerReduction} also adds certain essential technical conditions that were not explicitly written in \cite[Theorem 3.22]{Bla_19}. As remarked by Blacker in \cite{Bla_09}, but apparently not noticed by Mestdag and Garc\'ia-Tora\~no in \cite{GM_23}, the sufficient and necessary condition \eqref{Eq::BlackerCondition} also appeared previously to Blacker in a different more implicit manner in \cite[pg. 12]{MRSV_15}. It is worth seeing also the related work \cite{Ni15} treating the reduction of poly-Poisson structures.

\begin{theorem}
\label{Th::BlackerReduction}Let $(P,\boldsymbol{\omega},\bh,\mathbf{J}^\Phi)$ be an $\Ad^{*k}$-equivariant $G$-invariant $\bomega$-Hamiltonian system and let $\bm\mu\in (\mathfrak{g}^*)^k$ be a fixed regular value of $\mathbf{J}^\Phi$. If the stabiliser subgroup $G_{\bm\mu}$ of $\bm\mu$ under the $\Ad^{*k}$ action is connected, and  $P_{\bm\mu}=\mathbf{J}^{\Phi-1}(\bm\mu)/G_{\bm\mu}$ is a smooth manifold, then there is a unique $\mathbb{R}^k$-valued two-form
$\bm\omega_{\bm\mu}\in \Omega^2(P_{\bm\mu},\mathbb{R}^k)$ such that
$\pi_{\bm\mu}^*\bomega_{\bm\mu}=\jmath^*_{\bm\mu}\bomega$ where $\jmath_{\bm\mu} : \mathbf{J}^{\Phi-1}
(\bm\mu)\hookrightarrow P$ is the inclusion and $\pi_{\bm\mu}:\mathbf{J}^{\Phi-1}
(\bm\mu) \rightarrow P_{\bm\mu}$ is the canonical projection. The form
$\bomega_{\bm\mu}$ is closed and nondegenerate if and only if 
\begin{equation}
\label{Eq::BlackerCondition}
\T_{p}(G_{\bm\mu}p)=(\T_{p}(Gp)^{\perp,k})^{\perp,k}\cap \T_{p}(Gp)^{\perp,k}\,,\qquad \forall p\in {\bf J}^{\Phi-1}({\bm \mu})\,.
\end{equation}
\end{theorem}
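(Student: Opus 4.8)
The plan is to reduce everything to the pointwise linear-algebraic statement about the $\R^k$-valued two-form $\bomega_p$ on $\T_pP$, and then apply Blacker's pointwise reduction result \cite[Theorem~2.14]{Bla_19} fibrewise. First I would establish the \emph{existence and uniqueness} of $\bomega_{\bm\mu}$. Since $\bm\mu$ is a regular value, $\mathbf{J}^{\Phi-1}(\bm\mu)$ is a submanifold with $\jmath_{\bm\mu}^*\bomega$ a well-defined $\R^k$-valued two-form on it; because the action is $\Ad^{*k}$-equivariant, $G_{\bm\mu}$ preserves $\mathbf{J}^{\Phi-1}(\bm\mu)$ and, by $\Phi_g^*\bomega=\bomega$, also preserves $\jmath_{\bm\mu}^*\bomega$. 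As $P_{\bm\mu}$ is assumed to be a smooth manifold and $\pi_{\bm\mu}$ a (surjective) submersion, descent of $\jmath_{\bm\mu}^*\bomega$ to a unique $\bomega_{\bm\mu}$ with $\pi_{\bm\mu}^*\bomega_{\bm\mu}=\jmath_{\bm\mu}^*\bomega$ amounts to checking that $\jmath_{\bm\mu}^*\bomega$ is basic, i.e. both $G_{\bm\mu}$-invariant (just shown) and horizontal. Horizontality, $\iota_{\xi_P}\jmath_{\bm\mu}^*\bomega=0$ for $\xi$ in the Lie algebra of $G_{\bm\mu}$, follows from Lemma~\ref{Lemm::NonAdPerpPS}(2): the tangent spaces to the $G_{\bm\mu}$-orbits sit inside $\T_p(Gp)\cap\T_p(\mathbf{J}^{\Phi-1}(\bm\mu))$, which by part~(2) lies in $\T_p(Gp)^{\perp,k}$, so the orbit directions are $\bomega$-orthogonal to all of $\T_p(\mathbf{J}^{\Phi-1}(\bm\mu))$. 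This gives existence and uniqueness; closedness of $\bomega_{\bm\mu}$ is then immediate since $\d$ commutes with $\pi_{\bm\mu}^*$ and $\jmath_{\bm\mu}^*$ and $\bomega$ is closed.

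Next I would set up the pointwise picture for nondegeneracy. Fix $p\in\mathbf{J}^{\Phi-1}(\bm\mu)$ and write $V:=\T_p(\mathbf{J}^{\Phi-1}(\bm\mu))$, $K:=\T_p(G_{\bm\mu}p)$, with the $\R^k$-valued form $\bomega_p$ restricted to $V$. By Lemma~\ref{Lemm::NonAdPerpPS}(2) one has $V=\T_p(Gp)^{\perp,k}$, and then the double orthogonal taken \emph{inside} $V$ with respect to the restricted form $\jmath_{\bm\mu}^*\bomega$ is exactly $(\T_p(Gp)^{\perp,k})^{\perp,k}\cap\T_p(Gp)^{\perp,k}$, which appears on the right-hand side of \eqref{Eq::BlackerCondition}. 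The tangent space to the fibre of $\pi_{\bm\mu}$ is $K$, so the reduced tangent space is $V/K$. The nondegeneracy of the descended form $\bomega_{\bm\mu}$ at $\pi_{\bm\mu}(p)$ is the statement that the radical of $\jmath_{\bm\mu}^*\bomega$ on $V$, i.e. the set of $v\in V$ annihilated by $\bomega_p(\cdot,\cdot)$ against all of $V$, coincides exactly with $K$. By the identification just made, this radical is precisely $(\T_p(Gp)^{\perp,k})^{\perp,k}\cap\T_p(Gp)^{\perp,k}$. Hence $\bomega_{\bm\mu}$ is nondegenerate at $\pi_{\bm\mu}(p)$ if and only if $K$ equals that intersection, which is exactly condition \eqref{Eq::BlackerCondition}.

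The two nontrivial inputs to glue together are therefore Lemma~\ref{Lemm::NonAdPerpPS} (relating the fibre tangent spaces and orbit directions to $k$-polysymplectic orthogonals) and the fibrewise reduction \cite[Theorem~2.14]{Bla_19} characterising when the restriction of an $\R^k$-valued two-form descends nondegenerately to the quotient by its radical; the main body of the argument is checking that Blacker's abstract condition translates into \eqref{Eq::BlackerCondition} under these identifications. I expect the main obstacle to be the careful bookkeeping of the two different orthogonals: the orthogonal $\perp,k$ is taken in the ambient $\T_pP$, yet nondegeneracy of the reduced form is governed by the radical of the \emph{restricted} form on $V$, and one must verify that the inner/outer double orthogonal in \eqref{Eq::BlackerCondition} genuinely computes that radical. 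This is exactly the point where \cite[Theorem~3.22]{Bla_19} carried a misleading typo; so the delicate step is ensuring the correct placement of the two $\perp,k$ operations and confirming the connectedness hypothesis on $G_{\bm\mu}$ is used precisely where it guarantees that the infinitesimal orbit $K$ integrates to the full fibre of $\pi_{\bm\mu}$, so that the pointwise nondegeneracy statement is equivalent to the global one.
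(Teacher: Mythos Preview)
The paper does not supply its own proof of Theorem~\ref{Th::BlackerReduction}; it is presented there as a corrected restatement of Blacker's result \cite[Theorem~3.22]{Bla_19}, with the remark that the typo in the original ``is evident after applying \cite[Theorem~2.14]{Bla_19} to $\bomega_x$''. Your proposal reconstructs exactly that argument: descent of $\jmath_{\bm\mu}^*\bomega$ via the basic-form criterion, closedness by naturality of $\d$, and the nondegeneracy characterisation by identifying the radical of the restricted form on $V=\T_p(\mathbf{J}^{\Phi-1}(\bm\mu))=\T_p(Gp)^{\perp,k}$ (Lemma~\ref{Lemm::NonAdPerpPS}(2)) with $V^{\perp,k}\cap V=(\T_p(Gp)^{\perp,k})^{\perp,k}\cap\T_p(Gp)^{\perp,k}$, which is precisely the fibrewise application of \cite[Theorem~2.14]{Bla_19}. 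This is correct and matches the approach the paper points to.

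One minor point: your justification for where connectedness of $G_{\bm\mu}$ enters is slightly off. The identification $\ker\T_p\pi_{\bm\mu}=\T_p(G_{\bm\mu}p)$ holds whenever $P_{\bm\mu}$ is a manifold with $\pi_{\bm\mu}$ a submersion, regardless of connectedness, and the $G_{\bm\mu}$-invariance of $\jmath_{\bm\mu}^*\bomega$ follows from the global hypothesis $\Phi_g^*\bomega=\bomega$ for all $g\in G$, not from an infinitesimal computation. The connectedness hypothesis is inherited from Blacker's original statement and is not actually needed for the argument you outline; it plays no role in the pointwise radical computation or in descent.
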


For the sake of completeness, let us now consider the first example of a $k$-polysymplectic Marsden--Weinstein reduction related to a non-regular value of a $k$-polysymplectic momentum map. More examples with potential practical applications will be shown in Section \ref{Sec::examples}. Let us analyse  the completely integrable, and separable in variables, system in $\mathbb{R}^{2k}$ of the form
\begin{equation}\label{eq:ComInSys}
\frac{\d I_\alpha}{\d t}=0\,,\qquad \frac{\d\theta_\alpha}{\d t} = F_\alpha(I_\alpha)\,,\qquad \alpha=1,\ldots,k>1\,,
\end{equation}
for some arbitrary functions $F_1,\ldots,F_k:
\mathbb{R}
\rightarrow \mathbb{R}$. 
This related to an $\bomega$-polysymplectic Hamiltonian system on $\mathbb{R}^{2k}$ relative to the $k$-polysymplectic form $\bomega = \omega^\alpha\otimes e_\alpha$, where $\omega^1,\ldots,\omega^k$ are the presymplectic forms
$$
\omega^\alpha = \d\theta^\alpha\wedge \d I^\alpha\,,\qquad \alpha=1,\ldots,k\,,
$$
where it is important to stress that the right-hand side is not summed over the indices $\alpha=1,\ldots,k$. 
One has the basis of fundamental vector fields  $\partial/\partial \theta^1,\ldots,\partial/\partial \theta^{k-1}$ associated with the Lie group action
$$
\Phi:(\lambda_1,\ldots,\lambda_{k-1};\theta_1,\ldots,\theta_k,I)\in \mathbb{R}^{k-1}\times\mathbb{R}^{2k}\mapsto (\lambda_1+\theta_1,\ldots,\lambda_{k-1}+\theta_{k-1},\theta_k,I)\in\mathbb{R}^{2k},
$$
with $I=(I_1,\ldots,I_k)\in \mathbb{R}^k$.  Note that the functions $F_1,\ldots,F_k$ have been chosen to be of the form $F_\alpha=F_\alpha(I_\alpha)$, with $\alpha=1,\ldots,k$, to ensure that \eqref{eq:ComInSys} is $\bm\omega$-Hamiltonian. The latter also explains why \eqref{eq:ComInSys} is called separable. One may now consider a $k$-polysymplectic momentum map
\[
{\bf J}^{\Phi}:(\theta,I)\in \mathbb{R}^{2k}\longmapsto (I_1,\ldots,0)\otimes e_1+\ldots+(0,\ldots,I_{k-1})\otimes e_{k-1}+(0,\ldots,0)\otimes e_{k} \in \left(\mathbb{R}^{(k-1)*}\right)^k\,,
\]
which has no regular points (the codomain of ${\bf J}^\Phi$ has dimension larger than its domain for $k>3$) and it is $\Ad^{*k}$-equivariant. 
Note that \eqref{eq:ComInSys} gives rise to an $\mathbb{R}^{k-1}$-invariant $\bomega$-Hamiltonian system. 

One may consider the reductions of $\bomega$ and \eqref{eq:ComInSys} for any value of $\bm \mu=(\mu^1,\ldots,0)\otimes e_1+\ldots+(0,\ldots,\mu^{k-1})\otimes e_{k-1}\in \left(\mathbb{R}^{(k-1)*}\right)^k$. Then,
\[
    {\bf J}^{\Phi-1}(\bm \mu) = \{(\theta_\alpha,I_\alpha)\in \mathbb{R}^{2k}\mid I_1=\mu^1,\ \dotsc,\ I_{k-1}=\mu^{k-1},\ \theta_1,\ldots, \theta_k,I_k\in \mathbb{R}\}\simeq \mathbb{R}^{k}\times \mathbb{R}\,.
\]
The isotropy subgroup $\mathbb{R}^{k-1}_{\bm\mu}\simeq \mathbb{R}^{k-1}$ acts on ${\bf J}^{\Phi-1}(\bm\mu)$ via $\Phi$ and the reduced manifold is diffeomorphic to  $\mathbb{R}^2$. The presymplectic forms $\omega^1,\ldots,\omega^{k-1}$ become zero after reducing, but  the reduction of $\omega^k$ is symplectic. Hence, $\bomega_{\bm\mu}$ becomes a $k$-polysymplectic form with only one symplectic form different from zero. Since the $\bomega$-Hamiltonian function of the initial system is a first integral of the $\theta_1,\ldots,\theta_{k-1}$, one can project the initial system onto 
\[
\frac{\d I_k}{\d t}=0\,,\qquad \frac{\d\theta_k}{\d t} = F_k(I_k)\,,
\]
which is Hamiltonian relative to $\d\theta_k\wedge \d I_k$, where $\theta_k, I_k$ are considered as variables in $\mathbb{R}^2$ in the natural manner.

\subsection{On the conditions for the \texorpdfstring{$k$}--polysymplectic Marsden--Weinstein reduction}
\label{Subsec::Conditions}

It was claimed in \cite[Proposition 1]{GM_23}  that condition \eqref{Eq::PolysymplecticReduction2eq} is enough to ensure that there exists a $k$-polysymplectic Marsden--Weinstein reduction. In this section, we first show that this is not true and the proposition in \cite[Proposition 1]{GM_23} is incorrect. This is done by pointing out a mistake in the proof of \cite[Proposition 1]{GM_23} and then giving a counterexample where \eqref{Eq::PolysymplecticReduction2eq} is satisfied, but there is no $k$-polysymplectic Marsden--Weinstein reduction and, indeed, \eqref{Eq::PolysymplecticReduction1eq} does not hold. Next, we illustrate that it may happen that \eqref{Eq::PolysymplecticReduction1eq} is satisfied, but \eqref{Eq::PolysymplecticReduction2eq} is not. Finally, we prove an example of a possible $k$-polysymplectic reduction where \eqref{Eq::PolysymplecticReduction1eq} and \eqref{Eq::PolysymplecticReduction2eq} are not simultaneously satisfied. To keep our exposition simple and highlight our main ideas, we restrict in this subsection all $k$-polysymplectic momentum maps to the ${\rm Ad}^{*k}$-invariant case, as done in \cite{GM_23,MRSV_15}.

First, the proof for \cite[Proposition 1]{GM_23} has a mistake, as there is an inclusion written in the opposite way. In particular, since $\T_p(\mathbf{J}^{\Phi-1}(\bm\mu))\subset \T_p(\mathbf{J}_\alpha^{\Phi-1}(\mu^\alpha))$ for $\alpha=1,\ldots,k$ and every $p\in \mathbf{J}^{\Phi-1}(\bm\mu)$ for a regular $\bm\mu\in \mathfrak{g}^{*k}$, one has
\begin{multline*}
    \left\{v\in \T_pP\mid\omega^1(v,\T_p\mathbf{J}^{\Phi-1}_1(\mu^1))=\dotsb=\omega^k(v,\T_p\mathbf{J}^{\Phi-1}_k(\mu^k))=0\right\} \\
    \subset \left\{v\in \T_pP\mid\omega^1(v,\T_p\mathbf{J}^{\Phi-1}(\bm\mu))=\dotsb=\omega^k(v,\T_p\mathbf{J}^{\Phi-1}(\bm\mu)) = 0\right\}
\end{multline*}
instead of
\begin{multline*}
    \left\{v\in \T_pP\mid \omega^1(v,\T_p\mathbf{J}^{\Phi-1}_1(\mu^1))=\dotsb=\omega^k(v,\T_p\mathbf{J}^{\Phi-1}_k(\mu^k))=0\right\}\\
    \supset \left\{v\in \T_pP\mid \omega^1(v,\T_p\mathbf{J}^{\Phi-1}(\bm\mu))=\dotsb=\omega^k(v,\T_p\mathbf{J}^{\Phi-1}(\bm\mu)) = 0\right\}
\end{multline*}
as claimed at the end of page 8 in the proof of \cite[Proposition 1]{GM_23}. In other words, if $v$ is perpendicular to $\T_p(\mathbf{J}^{\Phi-1}(\bm\mu))$ relative to each $\omega^\alpha$, one cannot infer that $v$ is perpendicular to each $\T_p(\mathbf{J}^{\Phi-1}_\alpha(\mu^\alpha))$ relative to $\omega^\alpha$ for $\alpha=1,\dotsc,k$, since the latter conditions are more restrictive.
Then, the proof of \cite[Proposition 1]{GM_23} only gives
$$
    \bigcap_{\alpha=1}^k(\ker \jmath^*_{\mu^\alpha}\omega^\alpha|_p)\cap \T_p{\mathbf{J}^{\Phi-1}(\bm\mu)}\subset (\T_{p}(Gp)^{\perp,k})^{\perp,k}\cap \T_{p}(Gp)^{\perp,k}\,,\qquad \forall p\in \mathbf{J}^{\Phi-1}(\bm\mu)\,,
$$
instead of the claimed
$$
\bigcap_{\alpha=1}^k(\ker \jmath^*_{\mu^\alpha}\omega^\alpha|_p)\cap \T_p{\mathbf{J}^{\Phi-1}(\bm\mu)}\supset (\T_{p}(Gp)^{\perp,k})^{\perp,k}\cap \T_{p}(Gp)^{\perp,k}\,,\qquad \forall p\in \mathbf{J}^{\Phi-1}(\bm\mu)\,,
$$
which makes the proof of Proposition 1 fail in proving \eqref{Eq::BlackerCondition}, namely the $k$-polysymplectic Marsden--Weinstein reduction necessary and sufficient condition, and, therefore, the statement of Proposition 1. Indeed, the above mistake is ultimately due to the fact that \cite[Proposition 1]{GM_23} is false and the comments that follow in \cite{GM_23} contain some inaccuracies.

Let us provide a counterexample to show that \cite[Proposition 1]{GM_23} does not hold. More specifically, we here describe an $\mathbb{R}$-invariant $\boldsymbol{\omega}$-Hamiltonian system relative to a two-symplectic form satisfying condition \eqref{Eq::PolysymplecticReduction2eq} but not giving rise to a $k$-polysymplectic Marsden--Weinstein reduction. Before that, it is convenient to recall some results from 
\cite{MRSV_15}.

It was proved in \cite{MRSV_15} that $\ker\omega_p^\alpha\subset \ker \T_p\mathbf{J}^\Phi_\alpha$ on $\mathbf{J}^{\Phi-1}(\bm\mu)$, which allows one to define the following commutative diagram (see \cite[pg. 12]{MRSV_15})
\begin{center}
    \begin{tikzcd}[column sep=2cm]
    \T_p(\bfJ^{\Phi\,-1}(\bm\mu)) \arrow[r, "\jmath"] \arrow[rr, bend left=20, "\pi^\alpha_p"] & \ker\T_p\bfJ_\alpha^\Phi \arrow[r, "\pi"] & \dfrac{\ker\T_p\bfJ_\alpha^\Phi}{\ker\omega^\alpha_p}
    \end{tikzcd}
\end{center}
for all $p\in\mathbf{J}^{\Phi-1}(\bm\mu)$, where $\jmath$ and $\pi$ are the canonical injection and projections, respectively. For simplicity, the equivalence class of an element $v$ in a quotient will be denoted by $[v]$. To avoid making the notation too complicated, the specific meaning of $[v]$ will be understood from context. According to Proposition 3.12 in \cite{MRSV_15}, the above diagram induces the maps
$$
    \widetilde{\pi}_p^\alpha:\frac{\T_p(\mathbf{J}^{\Phi-1}(\bm\mu))}{\T_p(G_{\bm\mu} p)}\longrightarrow \frac{\frac{\ker \T_p\mathbf{J}^\Phi_\alpha}{\ker \omega_p^\alpha}}{\{[ (\xi_P)_p]\mid\xi\in \mathfrak{g}_{\mu^\alpha}\}}\,,\qquad \alpha=1,\ldots,k\,,\qquad \forall p\in\mathbf{J}^{\Phi-1}(\bm\mu)\,,
$$
where $\mathfrak{g}_{\mu^\alpha}$ is the Lie algebra of $G_{\mu^\alpha}$ and $\{[ (\xi_P)_p]\mid\xi\in \mathfrak{g}_{\mu^\alpha}\}=\pr^{P}_\alpha(\{ (\xi_P)_p\mid\xi\in \mathfrak{g}_{\mu^\alpha}\})$ and ${\rm pr}^P_\alpha:\T_pP\rightarrow \T_pP/\ker \omega_p^\alpha$ is the canonical projection onto the quotient.

The conditions \eqref{Eq::PolysymplecticReduction1eq} at $p\in P$ are equivalent to each $\widetilde{\pi}_p^\alpha$ being surjective, respectively \cite[Lemma 3.15]{MRSV_15}, while \eqref{Eq::PolysymplecticReduction2eq} amounts to $0=\bigcap_{\alpha=1}^k\ker \widetilde{\pi}_p^\alpha$ (see \cite[Lemma 3.16]{MRSV_15}).

Consider $P=\mathbb{R}^4$ with linear coordinates $\{x,y,z,t\}$ and the presymplectic forms
$$
    \omega^1 = \d x\wedge \d y\,,\qquad \omega^2 = \d x\wedge \d t + \d y\wedge\d z\,,
$$
which give rise to a two-polysymplectic form $\bm\omega=\omega^1\otimes e_1+\omega^2\otimes e_2$, because $\omega^2$ is a symplectic form and $\ker\omega^1\cap \ker \omega^2=0$.
Consider the Lie group action 
$\Phi:(\lambda;x,y,z,t)\in \mathbb{R}\times \mathbb{R}^4\mapsto (x+\lambda,y,z,t)\in \mathbb{R}^4$. The Lie algebra of fundamental vector fields of $\Phi$ is $V=\langle \partial_x\rangle\simeq \mathbb{R}$. Moreover, $\Phi$ admits a two-polysymplectic momentum map relative to $(\mathbb{R}^4,\bomega)$ given by 
$$
    \mathbf{J}^\Phi:(x,y,z,t)\in \mathbb{R}^4\longmapsto \bm\mu=(y,t)\in (\mathbb{R}^*)^2,
$$
which is clearly $\Ad^{*2}$-equivariant. Additionally, $\mathbf{J}^\Phi$ is regular for every value of $(\mathbb{R}^*)^2$. Hence,
$\mathbf{J}^{\Phi-1}(y,t)=\{(x,y,z,t)\in \mathbb{R}^4:x,z\in \mathbb{R}\}\simeq \mathbb{R}^2$ is a submanifold for every $(y,t)\in (\mathbb{R}^*)^2$ and
$$
    \T_p(\mathbf{J}^{\Phi-1}(y,t))=\langle \partial_x,\partial_z\rangle\,,\qquad \forall p\in \mathbf{J}^{\Phi-1}(y,t)\,.
$$
Moreover, $G_{\bm\mu}=\mathbb{R}$ for each ${\bm\mu}=(y,t)\in (\mathbb{R}^*)^2$ and $G_{\bm\mu}$ acts freely and properly on $\mathbf{J}^{\Phi-1}(\bm\mu)$. 
Let us prove that condition \eqref{Eq::PolysymplecticReduction2eq} does not imply nor the reduction of $\boldsymbol{\omega}$, namely \eqref{Eq::BlackerCondition},  neither \eqref{Eq::PolysymplecticReduction1eq}.

In our example, one has $\bm\mu=(y,t)$ with $\mu^1=y$ and $\mu^2=t$, while
$$
    \ker \T_p\mathbf{J}^\Phi_1 = \langle \partial_x,\partial_z,\partial_t\rangle\,,\quad \ker \omega^1=\langle \partial_t,\partial_z\rangle\,,\quad \ker \T_p\mathbf{J}^\Phi_2 = \langle \partial_x,\partial_y,\partial_z\rangle\,,\quad \ker \omega^2 = 0\,,
$$
and
$$
    \{ [(\xi_P)_p]:\xi\in \mathfrak{g}_{\mu^1}\}=\langle [\partial_x]\rangle\,,\qquad \{ [(\xi_P)_p]:\xi\in \mathfrak{g}_{\mu^2}\}=\langle [\partial_x]\rangle\,
$$
on ${\bf J}^{\Phi-1}(\bm\mu)$. Then, we have the mappings
$$
    \widetilde{\pi}_p^1:\langle [\partial_z]\rangle = \T_p(\mathbf{J}^{\Phi-1}(\bm\mu))/\T_p(G_{\bm\mu} p)\longmapsto \langle 0\rangle= (\ker \T_p\mathbf{J}^\Phi_1/\ker \omega_p^1)/\langle [\partial_x]\rangle
$$
and
$$
    \widetilde{\pi}_p^2:\langle [\partial_z]\rangle = \T_p(\mathbf{J}^{\Phi-1}(\bm\mu))/\T_p(G_{\bm\mu} p)\longmapsto \langle [\partial_y],[\partial_z]\rangle=(\ker \T_p\mathbf{J}^\Phi_2/\ker \omega_p^2)/\langle [\partial_x]\rangle\,.
$$
As $\widetilde{\pi}_p^2([\partial_z])=[\partial_z]$, we have
$$
    \ker \widetilde{\pi}_p^1=\langle [\partial_z]\rangle\,,\qquad \ker \widetilde{\pi}_p^2=\langle 0\rangle\,.
$$
Hence, $\ker\widetilde{\pi}_p^1\cap \ker \widetilde{\pi}_p^2 = 0$ and condition \eqref{Eq::PolysymplecticReduction2eq} is satisfied. But $\Ima\widetilde{\pi}_p^2=\langle [\partial_z]\rangle$ and $\widetilde{\pi}_p^2$ is not surjective. Thus, condition \eqref{Eq::PolysymplecticReduction1eq} does not hold for $\alpha=2$ in our example. In fact, $\omega^1,\omega^2$ become isotropic when restricted to $\mathbf{J}^{\Phi-1}(\bm\mu)$ and give rise to two zero differential two-forms on $\mathbf{J}^{\Phi-1}(\bm\mu)/G_{\bm\mu}$, which is a one-dimensional manifold. Hence, no two-symplectic manifold is induced on $\mathbf{J}^{\Phi-1}(\bm\mu)/G_{\bm\mu}$ despite that condition \eqref{Eq::PolysymplecticReduction2eq} is satisfied.

One can directly prove that condition \eqref{Eq::PolysymplecticReduction2eq} is satisfied in the previous example, but condition \eqref{Eq::PolysymplecticReduction1eq} is not. This shows more easily that Proposition 1 in \cite{GM_23} is false and  that \eqref{Eq::PolysymplecticReduction2eq} does not imply \eqref{Eq::PolysymplecticReduction1eq}, but our previous approach illustrates how we obtained our counterexample. Indeed, in our present counterexample, the fact that $\widetilde{\pi}_p^2$ is not surjective implies that \eqref{Eq::PolysymplecticReduction1eq} does not hold. Recall that 
$$
    \ker \T_p\mathbf{J}^\Phi_2=\langle \partial_x,\partial_y,\partial_z\rangle\,,\qquad \forall p\in \mathbf{J}^{\Phi-1}(\bm\mu)\,,
$$
while 
$$
    \T_p(\mathbf{J}^{\Phi-1}(\bm\mu)) + \ker\omega^2_p + \T_p(G_{\mu^2}p) = \langle \partial_x,\partial_z\rangle+\{0\}+\langle \partial_x\rangle = \langle \partial_x,\partial_z\rangle \,,\qquad \forall p\in \mathbf{J}^{\Phi-1}(\bm\mu)\,.
$$
On the other hand, condition \eqref{Eq::PolysymplecticReduction2eq} is satisfied since
$$
    \T_p(G_{\bm\mu} p) = \langle \partial_x\rangle
$$
and
$$
    (\ker \omega_p^1 + \T_p(G_{\mu^1}p))\cap(\ker \omega_p^2 + \T_p(G_{\mu^2}p))\cap  \T_p(\mathbf{J}^{\Phi-1}(\bm\mu)) = \langle \partial_x\rangle\,,
$$
reads
$$
(\langle \partial_t,\partial_z\rangle +\langle \partial_x\rangle)\cap (\langle 0 \rangle +\langle \partial_x\rangle)\cap \langle \partial_x,\partial_z\rangle=\langle \partial_x\rangle\,.
$$
Since this example is constructed so as to obtain a one-dimensional reduced manifold, it is known that the reduction of the two-polysymplectic form is not a $k$-polysymplectic form.

The following examples illustrate some relations between the conditions \eqref{Eq::PolysymplecticReduction1eq}, \eqref{Eq::PolysymplecticReduction2eq}
and the existence of $k$-polysymplectic Marsden--Weinstein reductions.

\begin{example} This example shows that if condition \eqref{Eq::PolysymplecticReduction1eq} is satisfied, then condition \eqref{Eq::PolysymplecticReduction2eq} does not need to hold.  
Consider a two-polysymplectic manifold $(\mathbb{R}^6,\boldsymbol{\omega})$. Let  $\{x_1,x_2,x_3,x_4,x_5,x_6\}$ be global linear coordinates on  $\mathbb{R}^6$ and define$$
    \boldsymbol{\omega} = \omega^1\otimes e_1+\omega^2\otimes e_2 = (\d x_1\wedge \d x_2 + \d x_5\wedge \d x_6)\otimes e_1 + (\d x_3\wedge \d x_4 + \d x_5\wedge \d x_6)\otimes e_2\,.
$$
Then, $\ker\omega^1_p=\langle \partial_{3},\partial_4\rangle$, $\ker\omega^2_p=\langle \partial_1,\partial_2\rangle$, and $\ker\omega^1_p\cap\ker\omega^2_p=0$ for every $p\in\mathbb{R}^6$. This turns $\boldsymbol{ \omega}$ into a two-polysymplectic form. 

Let us provide now a Lie group action proving our initial claim. Given the Lie group action $\Phi:(\lambda;x_1,x_2,x_3,x_4,x_5,x_6)\in\mathbb{R}\times \mathbb{R}^6\mapsto (x_1+\lambda,x_2, x_3+\lambda,x_4,x_5,x_6)\in\mathbb{R}^6$, its Lie algebra of fundamental vector fields reads $\langle\partial_1+\partial_3\rangle$. The two-polysymplectic momentum map associated with $\Phi$ is given by
$$
    \mathbf{J}^\Phi:(x_1,x_2,x_3,x_4,x_5,x_6)\in\mathbb{R}^6\longmapsto (x_2,x_4)=\bm\mu\in (\mathbb{R}^*)^2,
$$
which is $\Ad^{*2}$-equivariant. Moreover, every $\bm\mu=(x_2,x_4)\in (\mathbb{R}^*)^2$ is a regular value of $\mathbf{J}^\Phi$. Therefore,  $\mathbf{J}^{\Phi-1}(\bm\mu)=\{(x_1,x_2,x_3,x_4,x_5,x_6)\in\mathbb{R}^6\,:\, x_1,x_3,x_5,x_6\in \mathbb{R}\}\simeq \mathbb{R}^4$ is a submanifold of $\mathbb{R}^6$ for every $\bm\mu\in(\mathbb{R}^*)^2$ and
$$
    \T_p(\mathbf{J}^{\Phi-1}(\bm\mu))=\langle \partial_1,\partial_3,\partial_5,\partial_6\rangle\,,\qquad \forall p\in \mathbf{J}^{\Phi-1}(\bm\mu)\,.
$$
Hence, $\ker \T_p\mathbf{J}^{\Phi}_1=\langle\partial_1,\partial_3,\partial_4,\partial_5,\partial_6\rangle$ while $\ker \T_p\mathbf{J}^{\Phi}_2=\langle\partial_1,\partial_2,\partial_3,\partial_5,\partial_6\rangle$. Condition \eqref{Eq::PolysymplecticReduction1eq} holds because both sides of the condition are equal to
$$
\begin{gathered}
\langle \partial_1,\partial_3,\partial_4,\partial_5,\partial_6\rangle=  \langle \partial_1,\partial_3,\partial_5,\partial_6\rangle+\langle \partial_3,\partial_4\rangle+\langle \partial_1+\partial_3\rangle\,,\\
\langle \partial_1,\partial_2,\partial_3,\partial_5,\partial_6\rangle=  \langle \partial_1,\partial_3,\partial_5,\partial_6\rangle+\langle \partial_1,\partial_2\rangle+\langle \partial_1+\partial_3\rangle\,,
\end{gathered}
$$
for $\mathbf{J}^\Phi_1,\mathbf{J}^\Phi_2$, respectively.
However, condition \eqref{Eq::PolysymplecticReduction2eq} is not satisfied, namely
\begin{multline*}
    \bigcap^{2}_{\alpha=1}\left(\ker\omega^\alpha_p + \T_p(G_{\mu^\alpha}p) \right)\cap \T_p\mathbf{J}^{\Phi-1}(\bm\mu) \\ = \left(\langle \partial_3,\partial_4\rangle +\langle \partial_1+\partial_3\rangle \right)\cap\left(\langle\partial_1,\partial_2\rangle+\langle\partial_1+\partial_3\rangle\right)
    \cap\langle \partial_1,\partial_3,\partial_5,\partial_6\rangle \\ =\langle\partial_1,\partial_3\rangle \neq \langle \partial_1+\partial_3\rangle = \T_p(G_{\bm\mu}p)\,,
\end{multline*}
for any $p\in \mathbf{J}^{\Phi-1}(\bm\mu)$. By \cite[Lemmas 3.15 and 3.16]{MRSV_15}, one has that $\widetilde{\pi}^1_p$ and 
$\widetilde{\pi}^2_p$ are surjective but $\ker\widetilde{\pi}^1_p\cap \ker\widetilde{\pi}^2_p\neq 0$. One can also verify this fact by computing $\widetilde{\pi}_p^\alpha$ for $\alpha=1,2$. Namely, this follows from
\[
\begin{gathered}
\widetilde{\pi}_p^1:\langle[\partial_1],[\partial_5],[\partial_6]\rangle\in \T_p(\mathbf{J}^{\Phi-1}(\bm\mu))/\T_p(G_{\bm\mu} p)\longmapsto \langle [\partial_5],[\partial_6]\rangle=(\ker \T_p\mathbf{J}^\Phi_1/\ker \omega_p^1)/\langle [\partial_1+\partial_3]\rangle\,.\\
\widetilde{\pi}_p^2:\langle[\partial_1],[\partial_5],[\partial_6]\rangle\in \T_p(\mathbf{J}^{\Phi-1}(\bm\mu))/\T_p(G_{\bm\mu} p)\longmapsto \langle [\partial_5],[\partial_6]\rangle=(\ker \T_p\mathbf{J}^\Phi_2/\ker \omega_p^2)/\langle [\partial_1+\partial_3]\rangle\,.
\end{gathered}
\]
for all $p\in\mathbf{J}^{\Phi-1}(\bm\mu)$. Note that $[\partial_1+\partial_3]=[\partial_1]$ in the first line, while $[\partial_1+\partial_3]=[\partial_3]$ in the second.
\demo
\end{example}

\begin{example} Let us prove that the $k$-polysymplectic Marsden--Weinstein reduction theorem in \cite{MRSV_15} gives sufficient, but not necessary conditions for the reduction to hold. In this respect, there are cases where the reduction is possible, condition \eqref{Eq::PolysymplecticReduction2eq} holds, while condition \eqref{Eq::PolysymplecticReduction1eq} does not.
To illustrate this, let us consider a two-polysymplectic manifold $(\mathbb{R}^7,\boldsymbol{\omega})$, where $\{x_1,\ldots,x_7\}$ are global linear coordinates and
\begin{align*}
    \boldsymbol{\omega} &= \omega^1\otimes e_1+ \omega^2\otimes e_2 \\
    &= \left(\d x_1\wedge \d x_2 + \d x_5\wedge \d x_7 + \d x_3\wedge \d x_6 \right)\otimes e_1 + \left(\d x_3\wedge \d x_4 + \d x_5\wedge \d x_6\right)\otimes e_2\,.
\end{align*}
This give rise to a two-polysymplectic structure on $\mathbb{R}^7$ since $\ker\omega^1=\langle \partial_4\rangle$, $\ker\omega^2=\langle\partial_1, \partial_2,\partial_7\rangle$ and $\ker\omega^1\cap\ker\omega^2=0$. Consider the Lie group action $\Phi:\mathbb{R}\times\mathbb{R}^7\rightarrow \mathbb{R}^7$ corresponding to translations along the $x_5$ coordinate. Then, its Lie algebra of fundamental vector fields is $\langle\partial_5\rangle$. A two-polysymplectic momentum map associated with $\Phi$ reads
$$
    \mathbf{J}^\Phi:(x_1,x_2,x_3,x_4,x_5,x_6,x_7)\in\mathbb{R}^7\longmapsto (x_7,x_6)=\bm\mu\in(\mathbb{R}^2)^*\,.
$$
Note that $\mathbf{J}^\Phi$ is $\Ad^{*2}$-equivariant and every $\bm\mu\in \mathbb{R}^{2*}$ is a regular value of $\mathbf{J}^\Phi$. Then, 
$$
    \mathbf{J}^{\Phi-1}(\bm\mu)=\{(x_1,x_2,x_3,x_4,x_5,x_6,x_7)\in\mathbb{R}^7\mid  x_1,x_2,x_3,x_4,x_5\in \mathbb{R}\}\simeq \mathbb{R}^5
$$
is a submanifold of $\mathbb{R}^7$ for every $\bm\mu=(x_7,x_6)\in\mathbb{R}^2$ and
$$
    \T_p(\mathbf{J}^{\Phi-1}(\bm\mu))=\langle \partial_1,\partial_2,\partial_3,\partial_4,\partial_5\rangle\,,\qquad \forall p\in \mathbf{J}^{\Phi-1}(\bm\mu)\,.
$$
Condition \eqref{Eq::PolysymplecticReduction2eq} is satisfied, while \eqref{Eq::PolysymplecticReduction1eq} for $\mathbf{J}^\Phi_1$ is not since
$$
    \widetilde{\pi}_p^1:\langle[\partial_1],[\partial_2],[\partial_3],[\partial_4]\rangle\in \T_p(\mathbf{J}^{\Phi-1}(\bm\mu))/\T_p(G_{\bm\mu} p)\mapsto \langle [\partial_1],[\partial_2],[\partial_3],[\partial_6]\rangle=(\ker \T_p\mathbf{J}^\Phi_1/\ker \omega_p^1)/\langle [\partial_5]\rangle\,.
$$
Therefore, $\widetilde{\pi}_p^1$ is not surjective. However, the reduced manifold $P_{\bm\mu} = \T_p(\mathbf{J}^{\Phi-1}(\bm\mu))/\T_p(G_{\bm\mu}p)\simeq \mathbb{R}^4$ inherits a two-polysymplectic form, namely
\[
\boldsymbol{\omega}_{\bm\mu} = \d x_1\wedge \d x_2\otimes e_1 + \d x_3\wedge \d x_4 \otimes e_2\,,\]
in the variables $x_1,x_2,x_3,x_4$ naturally defined in $P_{\bm\mu}$. 
In summary, both conditions \eqref{Eq::PolysymplecticReduction1eq} and \eqref{Eq::PolysymplecticReduction2eq} ensure a $k$-polysymplectic Marsden--Weinstein reduction. But they are not necessary, they are only sufficient.\demo
\end{example}

\subsection{On the \texorpdfstring{$k$}{}-polysymplectic manifold given by the product of \texorpdfstring{$k$}{} symplectic manifolds}
\label{Sec::SymplProduct}

Let us review a relevant example of $k$-polysymplectic manifold and apply Theorem \ref{Th::PolisymplecticReductionJ} to it (see \cite{MRSV_15} for details). This will illustrate how the $k$-polysymplectic reduction theorem works. Remarkably, many practical examples have a related $k$-polysymplectic manifold similar to the one in this section. Moreover, this structure will be used in one of the physical examples studied in Section \ref{Sec::examples}.

Let $P=P_1\times\dotsb\times P_k$ for some symplectic manifolds $(P_\alpha,\omega^\alpha)$ with $\alpha=1,\ldots,k$. If $\pr_\alpha: P\rightarrow P_\alpha$ is the canonical projection onto the $\alpha$-th component, $P_\alpha$, in $P$, then $(P,\bomega=\sum_{\alpha=1}^k\pr_\alpha^*\omega^\alpha\otimes e_\alpha)$ is a $k$-polysymplectic manifold. To simplify the notation, we will write $\pr_\alpha^\ast\omega^\alpha$ as $\omega^\alpha$. Moreover, assume that a Lie group action $\Phi^\alpha: G_\alpha\x P_\alpha\rightarrow P_\alpha$  admits a symplectic momentum map $\mathbf{J}^{\Phi^\alpha}: P_\alpha\rightarrow\mathfrak{g}^*_\alpha$ and each $\Phi^\alpha$ acts in a quotientable manner on the level sets given by weak regular values of ${\bf J}^{\Phi^\alpha}$ for each $\alpha=1,\ldots,k$.

Define the Lie group action 
\begin{equation}\label{eq:GrAck}
    \Phi:G\x P\ni (g_1,\ldots,g_k,x_1,\ldots,x_k)\longmapsto (\Phi^1_{g_1}(x_1),\ldots,\Phi^k_{g_k}(x_k))\in P\,.
\end{equation}
Then, $\mathfrak{g}=\mathfrak{g}_1\x\dotsb\x\mathfrak{g}_k$ is the Lie algebra of $G$ and we have the $k$-polysymplectic momentum map 
\[
\mathbf{J}:P\ni(x_1,\ldots,x_k)\longmapsto
(0,\ldots, {\bf J}^\alpha,\ldots,0)\otimes {e}_\alpha\in \mathfrak{g}^{*k}\,,
\]
where $\mathbf{J}^\alpha(x_1,\ldots,x_k)=\mathbf{J}^{\Phi^\alpha}(x_\alpha)$ for $\alpha=1,\ldots,k$ and $\mathfrak{g}^*=\mathfrak{g}_1^*\x\dotsb\x\mathfrak{g}_k^*$ is the dual space to $\mathfrak{g}$.
Suppose, that $\mu^\alpha\in\mathfrak{g}^*_\alpha$ is a weak regular value of $\mathbf{J}^{\Phi^\alpha}:P_\alpha\rightarrow\mathfrak{g}^*_\alpha$ for each $\alpha=1,\ldots,k$. Hence, $\bm{\mu}= (0,\ldots,\mu^\alpha,\ldots,0) \otimes e_\alpha\in(\mathfrak{g}^*)^k$ is a weak regular value of $\mathbf{J}$. Then, $\Phi$ acts in a quotientable on the level sets of ${\bf J}$.

Therefore, if $p=(x_1,\ldots,x_k)\in\mathbf{J}^{-1}(\bm\mu)$, it follows that
\begin{align}
    \ker\T_p \mathbf{J}^{\Phi^\alpha} &= \T_{x_1}P_1\oplus\dotsb\oplus \ker \T_{x_\alpha} \mathbf{J}^{\Phi^\alpha}\oplus\dotsb\oplus \T_{x_k}P_k\,,\\
    \T_p\left(\mathbf{J}^{-1}(\bm\mu)\right)&=\ker \T_{x_1}\mathbf{J}^{\Phi^1}\oplus\dotsb\oplus \ker \T_{x_k}\mathbf{J}^{\Phi^k}\,,\\
    \ker \omega_p^\alpha &= \T_{x_1}P_1\oplus\dotsb\oplus \T_{x_{\alpha-1}}P_{\alpha-1}\oplus \{0\}\oplus \T_{x_{\alpha+1}}P_{\alpha+1}\oplus\dotsb\oplus \T_{x_k}P_k\,,\\
    \T_p\left(G^{\Delta^\alpha}_{\mu^\alpha}p\right) &= \T_{x_1}\left(G_1 x_1\right)\oplus\dotsb\oplus \T_{x_\alpha}\left(G^{\Delta^\alpha}_{\alpha\mu^\alpha}x_\alpha\right)\oplus\dotsb\oplus \T_{x_k}\left(G_{k}x_k\right)\,,\\
    \T_p\left(G_{\bm\mu}^{\bm\Delta} p\right) &= \T_{x_1}\left(G^{\Delta^1}_{1\mu^1}x_1\right)\oplus\dotsb\oplus \T_{x_k}\left(G^{\Delta^k}_{k\mu^k}x_k\right)\,.
\end{align}
Then, it follows immediately that
$$
\ker\T_p\mathbf{J}^{\Phi^\alpha} = \T_p\left( \mathbf{J}^{-1}({\bf \mu})\right)+\ker\omega_p^\alpha+ \T_p\left(G^{\Delta^\alpha}_{\mu^\alpha}p\right) \,,\qquad \alpha=1,\ldots,k\,,
$$
and
$$
\T_p\left(G_{\bm\mu}^{\bm\Delta} p\right)=\bigcap^k_{\beta=1}\left(\ker\omega_p^\beta + \T_p\left(G^{\Delta^\beta}_{\mu^\beta}p\right)\right)\cap  \T_p\left(\mathbf{J}^{-1}(\bm\mu)\right)\,,
$$
for every weakly regular $\bm\mu\in (\mathfrak{g}^*)^k$ and $p\in {\bf J}^{-1}({\bm \mu})$. Recall that, by Theorem \ref{Th::PolisymplecticReductionJ}, these equations guarantee that the reduced space $\mathbf{J}^{-1}(\bm\mu)/G^{\bm\Delta}_{\bm\mu}$ can be endowed with a $k$-polysymplectic structure, while
\[
\mathbf{J}^{-1}(\bm\mu)/G^{\bm\Delta}_{\bm\mu}\simeq \mathbf{J}^{\Phi^1-1}(\mu^1)/G^{\Delta^1}_{1\mu^1}\x\dotsb\x \mathbf{J}^{\Phi^k-1}(\mu^k)/G^{\Delta^k}_{k\mu^k}\,.
\]
\demo

\section{The \texorpdfstring{$k$}{}-polysymplectic energy momentum-method}\label{Sec::energy-momentum}

\subsection{\texorpdfstring{$k$}{}-polysymplectic relative equilibrium points}

This section introduces the notion of $k$-polysymplectic relative equilibrium point relative to a $\bomega$-Hamiltonian vector field $X$. This notion is devised to analyse the relative stability of $\bomega$-Hamiltonian vector fields and extends the relative equilibrium point notion for symplectic manifolds to the $k$-symplectic realm (see \cite{AM_78} for details on the symplectic case). In brief, a relative equilibrium point for a dynamical system given by a vector field is a point whose evolution is given by a Lie group symmetry of the vector field. If the vector field is additionally Hamiltonian relative to some geometric structure, then it is usual to demand the Lie group symmetries to leave invariant the same geometric structure \cite{LMZ_23}. It is generally interesting to analyse the behaviour of solutions close to relative equilibrium points, i.e. if they get closer or move away from those points.

\begin{definition}
\label{Def::kEqPoint}
Let $(P,\bm\omega,\bh,\mathbf{J}^\Phi)$ be a $G$-invariant $\bomega$-Hamiltonian system. A point $z_e\in P$ is a {\it $k$-polysymplectic relative equilibrium point} of the $\bomega$-Hamiltonian vector field $X_\bh$ if there exists $\xi\in\mathfrak{g}$ so that
\begin{equation}
(X_{\bm h})(z_e)=(\xi_P)(z_e)\,.
\end{equation}
\end{definition}
The above definition retrieves, for $k=1$, the standard relative equilibrium notion for symplectic Hamiltonian systems \cite{AM_78}. Furthermore, Lemma \ref{Lemm::NonAdPerpPS} and the fact that $X_{\bh}$ is tangent to the level sets of ${\bf J}^{\Phi}$ show that $\xi\in\mathfrak{g}$ in Definition \ref{Def::kEqPoint} is, in fact, an element of $\mathfrak{g}^{\boldsymbol{\Delta}}_{\boldsymbol{\mu}_e}$, which is a Lie subalgebra of $\mathfrak{g}$, and ${\bm\mu_e}={\bf J}^\Phi(z_e)$.

Note that a $k$-polysymplectic relative equilibrium point $z_e\in P$ projects onto $\pi_{\boldsymbol{\mu}_e}(z_e)$, with $\boldsymbol{\mu}_e={\bf J}^\Phi(z_e)$, which becomes an equilibrium point of the vector field $X_{\boldsymbol{f}_{\boldsymbol{\mu}_e}}$, obtained by projection of $X_{\bh}$ onto the reduced space $\mathbf{J}^{\Phi-1}(\boldsymbol{\mu}_e)/G^{\boldsymbol{\Delta}}_{\boldsymbol{\mu}_e}$. This explains the term {\it relative} used in the \textit{relative equilibrium point} term.

The following theorem provides the characterisation of $k$-polysymplectic relative equilibrium points of an $\bomega$-Hamiltonian vector field $X_\bh$ by studying the critical points of a modified $\mathbb{R}^k$-valued function $\boldsymbol{h}_\xi$ on $P$. This is an application of the Lagrange multiplier theorem, where the role of the multiplier is played by $\xi\in\Lg$. %This is a crucial theorem of the energy-momentum method.

\begin{theorem}
\label{Th::ksymEM}
    Let $(P,\boldsymbol{\omega},\boldsymbol{h},\mathbf{J}^\Phi)$ be a $G$-invariant $\bomega$-Hamiltonian system. Then, $z_e\in P$ is a $k$-polysymplectic relative equilibrium point of $X_{\boldsymbol{h}}$ if and only if there exists $\xi\in\mathfrak{g}$ such that $z_e$ is a critical point of the following $\mathbb{R}^k$-valued function
    \begin{equation}
    \label{Eq::EMfunction}
    \boldsymbol{h}_\xi:=\boldsymbol{h}-\langle \mathbf{J}^\Phi-\boldsymbol{\mu}_e,\xi \rangle \,,
    \end{equation}
    where $\boldsymbol{\mu}_e:=\mathbf{J}^\Phi(z_e)\in(\mathfrak{g}^*)^k$.
\end{theorem}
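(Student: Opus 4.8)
The plan is to compute the differential of the $\mathbb{R}^k$-valued function $\boldsymbol{h}_\xi$ explicitly and then to invoke the nondegeneracy of $\boldsymbol{\omega}$; this is the $k$-polysymplectic incarnation of the classical Lagrange-multiplier characterisation, with $\xi\in\mathfrak{g}$ playing the role of the multiplier. Writing $\boldsymbol{h}_\xi=h_\xi^\alpha\otimes e_\alpha$ in components relative to the fixed basis $\{e_1,\dots,e_k\}$, the additive constant $\langle\boldsymbol{\mu}_e,\xi\rangle$ is annihilated by the exterior differential, so it suffices to analyse $\d h^\alpha-\d\langle\mathbf{J}^\Phi_\alpha,\xi\rangle$ for each $\alpha$. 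First I would use that $\boldsymbol{h}$ is an $\boldsymbol{\omega}$-Hamiltonian function for $X_{\boldsymbol{h}}$, which gives $\iota_{X_{\boldsymbol{h}}}\omega^\alpha=\d h^\alpha$, together with the defining relation of the momentum map from Definition \ref{Def::PolysymMomentumMap}, namely $\iota_{\xi_P}\omega^\alpha=\d\langle\mathbf{J}^\Phi_\alpha,\xi\rangle$. Combining these component-wise yields the compact identity $\d\boldsymbol{h}_\xi=\iota_{X_{\boldsymbol{h}}-\xi_P}\boldsymbol{\omega}$, valid on all of $P$.

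Next I would evaluate this identity at $z_e$. By definition, $z_e$ is a critical point of the $\mathbb{R}^k$-valued function $\boldsymbol{h}_\xi$ precisely when $\d(\boldsymbol{h}_\xi)_{z_e}=0$, that is, when $\iota_{(X_{\boldsymbol{h}}-\xi_P)(z_e)}\boldsymbol{\omega}_{z_e}=0$, which says that the vector $(X_{\boldsymbol{h}}-\xi_P)(z_e)$ lies in $\ker\boldsymbol{\omega}_{z_e}$. Since $\boldsymbol{\omega}$ is a $k$-polysymplectic form, it is nondegenerate, so $\ker\boldsymbol{\omega}_{z_e}=\bigcap_{\alpha=1}^k\ker\omega^\alpha_{z_e}=0$, whence $(X_{\boldsymbol{h}}-\xi_P)(z_e)=0$, i.e. $X_{\boldsymbol{h}}(z_e)=\xi_P(z_e)$, which is exactly the relative equilibrium condition of Definition \ref{Def::kEqPoint}. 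The converse is the same computation read backwards: if $X_{\boldsymbol{h}}(z_e)=\xi_P(z_e)$ for some $\xi\in\mathfrak{g}$, the identity $\d\boldsymbol{h}_\xi=\iota_{X_{\boldsymbol{h}}-\xi_P}\boldsymbol{\omega}$ forces $\d(\boldsymbol{h}_\xi)_{z_e}=0$, so $z_e$ is a critical point of $\boldsymbol{h}_\xi$.

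The step deserving the most care — and the only place where the full $k$-polysymplectic structure is genuinely used rather than a single presymplectic form — is the final passage from the vanishing of the contraction to the vanishing of the vector. Each individual $\omega^\alpha$ is only presymplectic, so from one equation $\iota_v\omega^\alpha=0$ one could merely conclude $v\in\ker\omega^\alpha_{z_e}$, a generally nontrivial subspace; it is the \emph{simultaneous} vanishing for all $\alpha=1,\dots,k$, together with the defining property $\bigcap_{\alpha}\ker\omega^\alpha=0$ of the nondegenerate $\boldsymbol{\omega}$, that pins down $v=0$. I therefore expect no serious analytic obstacle here: the argument is an elementary differential computation, and the main thing to get right is the bookkeeping of the $\mathbb{R}^k$-valued objects and the careful use of nondegeneracy at the critical-point step.
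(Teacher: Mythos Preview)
Your proposal is correct and follows essentially the same approach as the paper: both arguments compute the global identity $\d\boldsymbol{h}_\xi=\iota_{X_{\boldsymbol{h}}-\xi_P}\boldsymbol{\omega}$ from the $\boldsymbol{\omega}$-Hamiltonian condition and the momentum-map defining relation, and then invoke the nondegeneracy $\ker\boldsymbol{\omega}=0$ to pass between the critical-point condition and the relative equilibrium condition. Your presentation is somewhat more detailed (especially in emphasising why the simultaneous vanishing over all $\alpha$ is needed), but the underlying proof is identical.
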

\begin{proof}
    Let $z_e$ be a $k$-polysymplectic relative equilibrium point of $X_{\boldsymbol{h}}$, i.e. $X_{\boldsymbol{h}}(z_e)=\xi_P(z_e)$ for some $\xi\in\Lg$. Then,
    \begin{equation}
    \label{Eq::kEMproof}
    \d\boldsymbol{h}_{\xi}(z_e)=\d(\boldsymbol{h}-\langle \mathbf{J}^\Phi,\xi\rangle )(z_e)=(\inn{X_{\bm h}-\xi_{P}}\bomega)(z_e) = 0\,.
    \end{equation}
    Hence, $z_e\in P$ is a critical point of the $\mathbb{R}^k$-valued function $\boldsymbol{h}_\xi$. 

    Conversely, assume that $z_e$ is a critical point of some $\boldsymbol{h}_\xi$ with $\xi\in\mathfrak{g}$. Then,
    $0=\d \boldsymbol{h}_\xi(z_e)=(\iota_{X_\bh-\xi_P}\boldsymbol{\omega})(z_e)=0$
    and $(X_\bh-\xi_P)(z_e)\in \ker\boldsymbol{\omega}_{z_e}$. Since $\ker\boldsymbol{\omega}=0$, one has that $X_\bh(z_e)=\xi_P(z_e)$. Hence, $z_e$ is a $k$-polysymplectic relative equilibrium point of $X_\bh$.
\end{proof}

\begin{example}{(The cotangent bundle of two-covelocities of $\R^2$)} Let $Q$ be an $n$-dimensional manifold  and let $\pi_Q: \cT Q\rightarrow Q$ be the cotangent bundle projection. Consider the Whitney sum $\bigoplus^k\cT Q=\cT Q\oplus_{Q}\overset{(k)}{\dotsb} \oplus_{Q}\cT Q$ of $k$ copies of $\cT Q$ and the projection $\pi^k_Q:\bigoplus^k \cT Q\rightarrow Q$. It is well-known that $\bigoplus^k\cT Q$ can be identified with the space of first-jets, $J^1(Q,\R^k)_0$, of maps $\gamma:Q\rightarrow \R^k$ with $\gamma(q)=0$, via  the diffeomorphism $J^1(Q,\R^k)_0\ni j^1_q\gamma\mapsto \left( \d\gamma^1(q),\ldots, \d\gamma^k(q)\right)\in\bigoplus^k\cT Q$, where %$\pr^\alpha:\mathbb{R}^k\rightarrow \mathbb{R}$ is the canonical projection onto $\alpha$ component, 
 $\gamma^\alpha$ is the $\alpha$-th component of $\gamma$ \cite{LSV_15}. %=\pr^\alpha\circ \sigma:Q\rightarrow \mathbb{R}$ is the $\alpha$ component of $\sigma:Q\rightarrow \oplus^k\cT Q$, and $q\in Q$. 
 Then, $\bigoplus^k \cT Q$ is called {\it the cotangent bundle of $k$-covelocities of $Q$}. Moreover, $J^1(Q,\R^k)_0$ is a $k$-polysymplectic manifold (see \cite{LSV_15} for details).

The following example will illustrate our $k$-polysymplectic energy-momentum method. %Note that $\cT \R^2\bigoplus\cT\R^2\simeq \R^6$ is a two-polysymplectic manifold diffeomorphic to $J^1(\R^2,\R^2)$.
Consider that $(\mathbb{R}^6,\boldsymbol{\omega})$ is a two-polysymplectic manifold with the two-polysymplectic form
\[
\bomega=\omega^1\otimes e_1+\omega^2\otimes e_2=\left(\d x_1\wedge \d x_3 + \d x_2\wedge \d x_4\right)\otimes e_1+\left(\d x_1\wedge \d x_5 + \d x_2\wedge \d x_6\right)\otimes e_2\,
\]
since
\[
\ker\omega^1=\left\langle \frac{\partial}{\partial x_5},\frac{\partial}{\partial x_6}\right\rangle \,,\qquad\ker\omega^2=\left\langle \frac{\partial}{\partial x_3},\frac{\partial}{\partial x_4}\right\rangle \,,
\]
and $\ker\omega^1\cap\ker\omega^2=0$.
Let us consider the Lie group action $\Phi:\R\times \R^6\rightarrow \R^6$ given by
\[
\Phi:\mathbb{R}\times\mathbb{R}^6\ni(\lambda;x_1,x_2,x_3,x_4,x_5,x_6)\longmapsto (x_1+\lambda,x_2+\lambda,x_3+\lambda, x_4+\lambda,x_5+\lambda,x_6+\lambda)\in\mathbb{R}^6\,.
\]
The fundamental vector fields associated with the Lie group action $\Phi$ are spanned by
\[
    \xi_P=\frac{\partial}{\partial x_1}+\frac{\partial}{\partial x_2}+\frac{\partial}{\partial x_3}+\frac{\partial}{\partial x_4}+\frac{\partial}{\partial x_5}+\frac{\partial}{\partial x_6}\,.
\]
Note that the Lie group action $\Phi$ is two-polysymplectic since it leaves $\bomega$ invariant, namely $\Lie_{\xi_P}\bomega = 0$. Then, $\Phi$ gives rise to a two-polysymplectic momentum map $\mathbf{J}^\Phi$ for $\bm \mu=(\mu^1,\mu^2)$ given by
\[
    \mathbf{J}^\Phi:\mathbb{R}^6\ni(x_1,x_2,x_3,x_4,x_5,x_6)\longmapsto (x_3+x_4-x_1-x_2,x_5+x_6-x_1-x_2)=\boldsymbol{\mu}\in (\mathbb{R}^*)^2\,.
\]
Therefore, the level set of the two-polysymplectic momentum map $\mathbf{J}^\Phi$ has the following form 
\begin{equation}
\label{Eq::ExMomentumMap1}
\mathbf{J}^{\Phi-1}(\boldsymbol{\mu})=\left\{ (x_1,x_2,x_3,x_4,x_5,x_6)\!\in\!\mathbb{R}^6\mid x_3\!+\!x_4\!-\!x_1\!-\!x_2=\mu^1,\,\,\,x_5\!+\!x_6\!-\!x_1\!-\!x_2\!=\mu^2\right\}.
\end{equation}
Note that every $\boldsymbol{\mu}\in(\mathbb{R}^*)^2$ is a regular value of a two-polysymplectic momentum map $\mathbf{J}^\Phi$ and  $\mathbf{J}^{\Phi-1}(\boldsymbol{\mu})\simeq \mathbb{R}^4$. Since $\Phi$ is defined on a connected one-dimensional Lie group $\mathbb{R}=G$, one has that $\mathbf{J}^\Phi$ is an $\Ad^{*2}$-equivariant two-polysymplectic momentum map. Then,
\begin{gather*}
\T_p(G_{\boldsymbol{\mu}}p) = \T_p(G_{\mu^1}p) = \T_p(G_{\mu^2}p)=\left\langle \frac{\partial}{\partial x_1}+\frac{\partial}{\partial x_2}+\frac{\partial}{\partial x_3}+\frac{\partial}{\partial x_4}+\frac{\partial}{\partial x_5}+\frac{\partial}{\partial x_6} \right\rangle \,,\\
\ker \T_p\mathbf{J}^{\Phi}_1=\left\langle \frac{\partial}{\partial x_2}-\frac{\partial}{\partial x_1},\frac{\partial}{\partial x_1}+\frac{\partial}{\partial x_3},\frac{\partial}{\partial x_1}+\frac{\partial}{\partial x_4},\frac{\partial}{\partial x_5},\frac{\partial}{\partial x_6} \right\rangle \,,\\
\ker \T_p\mathbf{J}^{\Phi}_{2}=\left\langle  \frac{\partial}{\partial x_2}-\frac{\partial}{\partial x_1},\frac{\partial}{\partial x_3},\frac{\partial}{\partial x_4},\frac{\partial}{\partial x_1}+\frac{\partial}{\partial x_5},\frac{\partial}{\partial x_1}+\frac{\partial}{\partial x_6}\right\rangle \,,\\
\T_p(\mathbf{J}^{\Phi-1}(\boldsymbol{\mu}))=\left\langle  \sum^6_{i=1}\frac{\partial}{\partial x_i},\frac{\partial}{\partial x_3}-\frac{\partial}{\partial x_4},\frac{\partial}{\partial x_2}+\frac{\partial}{\partial x_3}+\frac{\partial}{\partial x_5},\frac{\partial}{\partial x_1}-\frac{\partial}{\partial x_2}\right\rangle \,,
\end{gather*}
and one can verify that conditions \eqref{Eq::PolysymplecticReduction1eq} and \eqref{Eq::PolysymplecticReduction2eq} are fulfilled.

Recall that $\iota_{\bm\mu}:\mathbf{J}^{\Phi-1}(\boldsymbol{\mu})\hookrightarrow P$ is the natural immersion and $\pi_{\boldsymbol{\mu}}:\mathbf{J}^{\Phi-1}(\boldsymbol{\mu})\rightarrow \mathbf{J}^{\Phi-1}(\boldsymbol{\mu})/G_{\boldsymbol{\mu}}$ is the canonical projection. Then, remembering that the elements of the Lie group $\mathbb{R}$ act by translations on $\mathbb{R}^6$ via $\Phi$, Theorem \ref{Th::PolisymplecticReductionJ} yields that the reduced manifold $(\mathbf{J}^{\Phi-1}(\boldsymbol{\mu})/G_{\boldsymbol{\mu}}\simeq \mathbb{R}^3,\boldsymbol{\omega}_{\boldsymbol{\mu}})$ is a two-polysymplectic manifold with coordinates $(y_1,y_2,y_3)\in\mathbb{R}^3$, satisfying that
\begin{align*}
y_1&=x_1-x_2\,,& y_2&=x_3-x_1\,,& y_3&=x_5-x_1\,,\\ y_4&=x_1+x_2-x_3-x_4\,,&
y_5&=x_1+x_2-x_5-x_6\,,
& y_6&=x_1\,,
\end{align*}
with
\[
\boldsymbol{\omega}_{\boldsymbol{\mu}}=\omega^1_{\boldsymbol{\mu}}\otimes e_1 + \omega^2_{\boldsymbol{\mu}}\otimes e_2=\d y_1\wedge \d y_2\otimes e_1+\d y_1\wedge \d y_3 \otimes e_2\,.
\]
Next, let us consider an $\bm{\omega}$-Hamiltonian vector field, $X_{\bm{h}}$, on $P=\mathbb{R}^6$ whose $\bm\omega$-Hamiltonian function is $\mathbb{R}$-invariant. Then, $X_{\bm{h}}$ is tangent to each $\mathbf{J}^{\Phi-1}(\boldsymbol{\mu})$, and it will have the following form
\[
X_{\bh}=F_1\sum^6_{i=1}\frac{\partial}{\partial x_i}+F_2\left(\frac{\partial}{\partial x_3}-\frac{\partial}{\partial x_4}\right)+F_3\left(\frac{\partial}{\partial x_2}+\frac{\partial}{\partial x_3}+\frac{\partial}{\partial x_5}\right)+F_4\left(\frac{\partial}{\partial x_1}-\frac{\partial}{\partial x_2}\right),
\]
for certain uniquely defined $G$-invariant $F_1,\ldots,F_4\in \Cinfty(P)$. Then, a point $z_e\in P$ is a two-polysymplectic relative equilibrium point of $X_\bh$ if and only if $X_\bh(z_e)=\xi_P(z_e)$, which holds, if and only if, $F_1(z_e)=1$ and $F_2(z_e)=F_3(z_e)=F_4(z_e)=0$. However, let us verify that we obtain the same result using Theorem \ref{Th::ksymEM}. 

First, $\d h^1$ and $\d h^2$ read
\begin{align*}
\d h^1 &= \inn{X_{\bm h}}\omega^1=-\left(F_1+F_2+F_3\right)\d x_1-\left(F_1-F_2 \right)\d x_2+\left(F_1+F_4 \right)\d x_3+\left(F_1+F_3-F_4\right)\d x_4\,,\\
\d h^2 &= \inn{X_{\bm h}}\omega^2=-\left( F_1+F_3\right)\d x_1-F_1\d x_2+\left(F_1+F_4\right)\d x_5+\left(F_1+F_3-F_4 \right)\d x_6\,.
\end{align*}
Then, Theorem \ref{Th::ksymEM} yields that $z_e\in P$ is a two-polysymplectic relative equilibrium point of $X_\bh$ if and only if $\d h^1_\xi(z_e)=0$ and $\d h^2_{\xi}(z_e)=0$ for some $\xi\in \mathbb{R}$. Indeed, using \eqref{Eq::ExMomentumMap1}, one has
\begin{multline}
\label{Eq::ExEMf1}
    \d h_\xi^1 = \d h^1 - \d J^1_\xi=-\left(F_1+F_2+F_3-\xi\right)\d x_1-\left(F_1-F_2-\xi\right)\d x_2\\+\left(F_1+F_4-\xi\right)\d x_3+\left(F_1+F_3-F_4-\xi\right)\d x_4\,,
\end{multline}
\begin{multline}
\label{Eq::ExEMf2}
\d h_\xi^2=\d h^2-\d J^2_\xi=-\left(F_1+F_3-\xi\right)\d x_1-\left(F_1-\xi\right)\d x_2\\+\left( F_1+F_4-\xi\right)\d x_5+\left( F_1+F_3-F_4-\xi\right)\d x_6\,,
\end{multline}
for $\xi\in\R$. Since at $z_e$ both \eqref{Eq::ExEMf1} and \eqref{Eq::ExEMf2} must vanish, one gets that this happens if and only if $F_1(z_e)=\xi$ and $F_2(z_e)=F_3(z_e)=F_4(z_e)=0$. Therefore, $z_e\in P$ is a two-polysymplectic relative equilibrium point of $X_\bh$ under the above-mentioned conditions. 

Finally, let us verify that $\pi_{\boldsymbol{\mu}_e}(z_e)$ is a critical point of the $f^\alpha_{\boldsymbol{\mu}_e}\in \Cinfty(\mathbf{J}^{\Phi-1}(\boldsymbol{\mu}_e)/G_{\boldsymbol{\mu}_e})$. The reduced vector field $X_{\boldsymbol{f}_{\boldsymbol{\mu}_e}}$ has the form
\[
X_{\boldsymbol{f}_{\bm\mu_{e}}}=(2\widetilde{F}_4-\widetilde{F}_3)\frac{\partial}{\partial y_1}+(\widetilde{F}_2+\widetilde{F}_3-\widetilde{F}_4)\frac{\partial}{\partial y_2}-\widetilde{F}_4\frac{\partial}{\partial y_3}\,,
\]
where $F_i=\pi_{\boldsymbol{\mu}_e}^*\widetilde{F}_i$ for $i=2,3,4$. Note that the projection exists because $F_2, F_3, F_4$ are $G$-invariant. 
Then,
\begin{multline*}
    \d {f}^1_{\boldsymbol{\mu}_e}(\pi_{\boldsymbol{\mu}_e}(z_e)) = \left(\inn{X_{\boldsymbol{f}_{\boldsymbol{\mu}_e}}}\omega^1_{\boldsymbol{\mu}_e}\right)_{\pi_{\boldsymbol{\mu}_e}(z_e)}=\\=\left(2\widetilde{F}_4(\pi_{\boldsymbol{\mu}_e}(z_e))-\widetilde{F}_3(\pi_{\boldsymbol{\mu}_e}(z_e))\right)\d y_2+\left(\widetilde{F}_4(\pi_{\boldsymbol{\mu}_e}(z_e))-\widetilde{F}_2(\pi_{\boldsymbol{\mu}_e}(z_e))-\widetilde{F}_3(\pi_{\boldsymbol{\mu}_e}(z_e))\right)\d y_1=0\,,
\end{multline*}
and
\[
    \d {f}^2_{\boldsymbol{\mu}_e}(\pi_{\boldsymbol{\mu}_e}(z_e)) = \left(\inn{X_{\boldsymbol{f}_{\boldsymbol{\mu}_e}}}\omega^2_{\boldsymbol{\mu}_e}\right)_{\pi_{\boldsymbol{\mu}_e}(z_e)}=\left(2\widetilde{F}_4(\pi_{\boldsymbol{\mu}_e}(z_e))-\widetilde{F}_3(\pi_{\boldsymbol{\mu}_e}(z_e))\right)\d y_3+\widetilde{F}_4(\pi_{\boldsymbol{\mu}_e}(z_e))\d y_1=0\,.
\]
Indeed, $\pi_{\boldsymbol{\mu}_e}(z_e)$ is a critical point of $\boldsymbol{f}_{\bm\mu_e}$, hence $z_e\in P$ is a $k$-polysymplectic relative equilibrium point of $X_{\boldsymbol{f}_{\bm\mu_e}}$.
%both $G$-invariant functions on $P$ and functions on the reduced manifold $\mathbf{J}^{\Phi-1}(\boldsymbol{\mu}_e)/G_{\boldsymbol{\mu}_e}$.
\end{example}

\subsection{Stability in the \texorpdfstring{$k$}{}-polysymplectic energy momentum-method}

Let us develop the stability analysis related to the $k$-polysymplectic energy-momentum method relative to a $k$-polysymplectic manifold $(P,\boldsymbol{\omega})$. Recall that Theorem \ref{Th::ksymEM} characterises $k$-polysymplectic relative equilibrium points as critical points of the $\R^k$-valued function \eqref{Eq::EMfunction}. However, when studying the stability of $k$-polysymplectic relative equilibrium points, due to the symmetry of our problems, %one has to take into account tangent vectors that are tangent to $G_{\boldsymbol{\mu}_e}^{\boldsymbol{\Delta}}$. In other words, 
we need to investigate how the second variation of $\boldsymbol{h}_{\xi}$ in the directions tangent to the isotropy group $G^{\boldsymbol{\Delta}}_{\boldsymbol{\mu}_e}$ affects the positive definiteness of $\boldsymbol{h}_\xi$. Note also that the results of this section are concerned with cases when a $k$-polysymplectic reduction is possible and \eqref{Eq::BlackerCondition} is satisfied.

 Let us define the second variation of $\boldsymbol{h}_{\xi}$ at a $k$-polysymplectic relative equilibrium point $z_e\in {\bf J}^{\Phi-1}({\bm \mu}_e)$ as the mapping
$(\delta^2 {\bm h}_\xi)_{z_e}: \T_{z_e}({\bf J}^{\Phi-1}(\bm \mu_e))\times \T_{z_e}({\bf J}^{\Phi-1}(\bm \mu_e))\rightarrow \mathbb{R}$, with $\bm\mu_e={\bf J}^{\Phi}(z_e)$, of the form
% \[
% \left(\delta^2 f^\alpha_\xi\right)_{z_e}(X,Y)=\iota_Y\left(d\left(\iota_Xdf^\alpha_\xi\right)\right)_{z_e},
% \]
% for $\alpha=1,\ldots,k$. Then, we can define
\begin{equation}
\label{Eq::SecVarf}
\left(\delta^2 \boldsymbol{h}_\xi\right)_{z_e}(v_1,v_2)=\sum^k_{\alpha=1}\inn{Y}\left(\d\left(\inn{X}\d h^\alpha_\xi\right)\right)_{z_e}\otimes e_\alpha\,,
\end{equation}
for some vector fields $X,Y$ on $P$ defined on a neighbourhood of $z_e\in P$ and such that $v_1=X_{z_e}$, $v_2=Y_{z_e}$. The following proposition shows that, since $z_e$ is a $k$-polysymplectic relative equilibrium point, the above definition does not depend on the value of the particular chosen vector fields $X, Y$ out of $z_e$ and $(\delta^2\boldsymbol{h}_\xi)_{z_e}$ is well-defined.

\begin{proposition}
Let $z_e\in P$ be a $k$-polysymplectic relative equilibrium point of $X_\bh$ on a $k$-polysymplectic manifold $(P,\boldsymbol{\omega})$. If $\{x_1,\ldots,x_{n}\}$ are coordinates on a neighbourhood of $z_e\in P$, then
\begin{equation}
\label{Eq::SecVarFormh}
(\delta^2h_{\xi}^\alpha)_{z_e}(w,v)=\sum^{n}_{i,j=1}\frac{\partial^2h^\alpha_{\xi}}{\partial x_i\partial x_j}(z_e)w_iv_j\,,\qquad \forall w,v\in \T_{z_e}({\bf J}^{\Phi-1}(\bm\mu_e))\,,\qquad \alpha=1,\ldots,k\,,
\end{equation}
where $w=\sum_{i=1}^{n}w_i\partial/\partial x_i$ and $v=\sum_{i=1}^{n}v_i\partial/\partial x_i$.
\end{proposition}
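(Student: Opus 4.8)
The plan is to reduce the intrinsic expression \eqref{Eq::SecVarf} to a plain coordinate computation and to exploit that, by Theorem \ref{Th::ksymEM}, a $k$-polysymplectic relative equilibrium point is a critical point of $\boldsymbol{h}_\xi$. Fix $\alpha\in\{1,\ldots,k\}$ and abbreviate $f:=h^\alpha_\xi$. Since $z_e$ is a $k$-polysymplectic relative equilibrium point, Theorem \ref{Th::ksymEM} gives $\d\boldsymbol{h}_\xi(z_e)=0$, whence $\d f(z_e)=0$; equivalently $\partial f/\partial x_i(z_e)=0$ for $i=1,\ldots,n$. This vanishing of the first partials at $z_e$ is the single fact that makes the whole argument work, and it is where the relative-equilibrium hypothesis enters.

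Next I would unwind the definition in the chart $\{x_1,\ldots,x_n\}$. Writing $X=\sum_i X_i\,\partial/\partial x_i$ and $Y=\sum_j Y_j\,\partial/\partial x_j$ on a neighbourhood of $z_e$ in $P$, one has $\inn{X}\d f=X(f)=\sum_i X_i\,\partial f/\partial x_i$, so that by the Leibniz rule
\[
\inn{Y}\big(\d(\inn{X}\d f)\big)=Y\big(X(f)\big)=\sum_{i,j}Y_j\frac{\partial X_i}{\partial x_j}\frac{\partial f}{\partial x_i}+\sum_{i,j}Y_jX_i\frac{\partial^2 f}{\partial x_i\partial x_j}\,.
\]
The first sum is the only term depending on the first-order behaviour of the chosen extension $X$; everything else is already tensorial. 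Note that no tangency of $X,Y$ to the level set is used here: the formula is simply the ambient second derivative evaluated on the given tangent vectors.

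Then I would evaluate at $z_e$. The first sum vanishes there because every $\partial f/\partial x_i(z_e)=0$, leaving
\[
(\delta^2 h^\alpha_\xi)_{z_e}(v_1,v_2)=\sum_{i,j}Y_j(z_e)X_i(z_e)\frac{\partial^2 f}{\partial x_i\partial x_j}(z_e)=\sum_{i,j}\frac{\partial^2 h^\alpha_\xi}{\partial x_i\partial x_j}(z_e)\,w_i v_j\,,
\]
upon substituting $w_i=X_i(z_e)$ and $v_j=Y_j(z_e)$, which is exactly \eqref{Eq::SecVarFormh}; tensoring with $e_\alpha$ and summing over $\alpha$ recovers the $\R^k$-valued form \eqref{Eq::SecVarf}. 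Since the right-hand side depends only on $z_e$, on the components $w_i,v_j$ of $v_1,v_2$, and on $f$, and on no further jet of the extensions $X,Y$, this simultaneously establishes the asserted formula and the independence of $(\delta^2\boldsymbol{h}_\xi)_{z_e}$ from the choice of $X,Y$, i.e.\ its well-definedness.

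The computation itself is routine, so the genuine point is conceptual rather than technical: recognising that, away from a critical point, $Y(X(f))(z_e)$ really would depend on the derivatives of $X$ through the non-tensorial term $\sum_{i,j}Y_j(\partial X_i/\partial x_j)(\partial f/\partial x_i)$ — the familiar reason a Hessian is intrinsic only at critical points or after choosing a connection — and then checking that the relative-equilibrium condition annihilates precisely this term. Symmetry of the resulting bilinear form is then immediate from the equality of mixed partials, in agreement with the alternative identity $Y(X(f))-X(Y(f))=[Y,X](f)$, which vanishes at $z_e$ because $\d f(z_e)=0$.
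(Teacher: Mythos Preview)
Your proof is correct and follows essentially the same route as the paper: both expand $\iota_Y\d(\iota_X\d h^\alpha_\xi)$ in coordinates via the Leibniz rule and observe that the non-tensorial term $\sum_{i,j}Y_j(\partial X_i/\partial x_j)(\partial h^\alpha_\xi/\partial x_i)$ vanishes at $z_e$ because Theorem~\ref{Th::ksymEM} makes $z_e$ a critical point of $\boldsymbol{h}_\xi$. Your additional remarks on well-definedness and symmetry are sound elaborations of the same computation.
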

\begin{proof}
From \eqref{Eq::SecVarf} for $\alpha=1,\ldots,k$, we have
\begin{align*}
(\delta^2h^\alpha_{\xi})_{z_e}(w,v) &= \iota_Y(\d\iota_{X}\d  h^\alpha_{\xi})_{z_e}\\
&= \sum^{n}_{i,j=1}\frac{\partial^2h^\alpha_{\xi}}{\partial x_i\partial x_j}(z_e)w_iv_j+\sum^{n}_{i,j=1}\frac{\partial h^\alpha_{\xi}}{\partial x_i}(z_e)\frac{\partial X_i}{\partial x_j}(z_e)v_j\\
&= \sum^{n}_{i,j=1}\frac{\partial^2h^\alpha_{\xi}}{\partial x_i\partial x_j}(z_e)w_iv_j\,,
\end{align*}
where $X=\sum_{i=1}^{n}X_i\partial/\partial x_i$ with $X(z_e)=w$, and we have used that $z_e$ is a $k$-polysymplectic relative equilibrium point and, therefore, ${\bm h}_\xi$ has a critical point at $z_e$, namely $(Zh^\alpha_\xi)(z_e)=0$ for every vector field $Z$ on $P$ and $\alpha=1,\ldots,k$.
\end{proof}
Note that the maps $(\delta^2h_\xi^\alpha)_{z_e}$ are symmetric for $\alpha=1,\ldots,k$. Therefore, $(\delta^2\bm h_\xi)_{z_e}$ is symmetric. Let us study \eqref{Eq::SecVarf} in more detail.

\begin{proposition}
\label{Prop::GaugeDir}
     Let $(P,\bomega,\bh,\mathbf{J}^\Phi)$ be a $G$-invariant $\bomega$-Hamiltonian system and let $z_e\in P$ be a $k$-polysymplectic relative equilibrium point of $X_{\bh}$. Then,
    \[ (\delta^2\boldsymbol{h}_\xi)_{z_e}((\zeta_P)_{z_e},v_{z_e})=0\,,\qquad \forall \zeta\in \mathfrak{g}^{\boldsymbol{\Delta}}_{\boldsymbol{\mu}_e}\,,\qquad \forall v_{z_e}\in \T_{z_e}({\bf J}^{\Phi-1}({\bm \mu}_e))\,,
    \]
    with ${\bm \mu}_e={\bf J}^{\Phi}(z_e)$. Moreover,
     \begin{equation}\label{eq:SecDeg} (\delta^2{h}^\alpha_\xi)_{z_e}(Y_{z_e},\cdot)=0\,,\qquad \forall \,Y_{z_e}\in \ker \omega_{z_e}^\alpha\cap \T_{z_e}({\bf J}^{\Phi-1}(\bm\mu_e))\,,\qquad \alpha=1,\ldots,k\,.
    \end{equation}
\end{proposition}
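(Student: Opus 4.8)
The plan is to reduce both identities to a single coordinate computation of the Hessian of $h^\alpha_\xi$ at the relative equilibrium $z_e$. First I would record, via Theorem \ref{Th::ksymEM} and the definition \eqref{Eq::EMfunction} of $\bh_\xi$, that $\d h^\alpha_\xi=\iota_{X_{\bh}-\xi_P}\omega^\alpha$ for each $\alpha$; writing $Z:=X_{\bh}-\xi_P$, the relative equilibrium condition $X_{\bh}(z_e)=\xi_P(z_e)$ says precisely $Z(z_e)=0$. In local coordinates one has $\partial_j h^\alpha_\xi=Z^i\,\omega^\alpha_{ij}$, and differentiating once more and evaluating at $z_e$ (where every term carrying an undifferentiated $Z$ drops out because $Z(z_e)=0$) turns \eqref{Eq::SecVarFormh} into the key identity
\begin{equation*}
(\delta^2 h^\alpha_\xi)_{z_e}(u,w)=\omega^\alpha_{z_e}\!\left([U,Z]_{z_e},w\right),\qquad \alpha=1,\ldots,k,
\end{equation*}
valid for any vector field $U$ extending $u\in\T_{z_e}(\mathbf{J}^{\Phi-1}(\bm\mu_e))$. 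Here $[U,Z]_{z_e}$ depends only on $u=U_{z_e}$, since $Z(z_e)=0$ forces $[U,Z]_{z_e}=(\nabla_uZ)_{z_e}$, the intrinsic linearisation of $Z$ at its zero, independent of the connection and of the chosen extension.

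With this identity in hand, the second claim \eqref{eq:SecDeg} is immediate. If $Y_{z_e}\in\ker\omega^\alpha_{z_e}\cap\T_{z_e}(\mathbf{J}^{\Phi-1}(\bm\mu_e))$, placing $Y_{z_e}$ in the second slot yields $(\delta^2 h^\alpha_\xi)_{z_e}(u,Y_{z_e})=\omega^\alpha_{z_e}([U,Z]_{z_e},Y_{z_e})=0$, because $\iota_{Y_{z_e}}\omega^\alpha_{z_e}=0$. Since each $(\delta^2 h^\alpha_\xi)_{z_e}$ is symmetric (as noted after \eqref{Eq::SecVarFormh}), this gives $(\delta^2 h^\alpha_\xi)_{z_e}(Y_{z_e},\cdot)=0$ for every $\alpha$.

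For the first claim I would take $u=(\zeta_P)_{z_e}$ with $\zeta\in\mathfrak{g}^{\boldsymbol{\Delta}}_{\boldsymbol{\mu}_e}$ and extend it by the fundamental vector field $\zeta_P$ itself. Note first that $(\zeta_P)_{z_e}$ is a legitimate argument of $(\delta^2\bh_\xi)_{z_e}$: by Lemma \ref{Lemm::NonAdPerpPS}(1) one has $\T_{z_e}(G^{\boldsymbol{\Delta}}_{\boldsymbol{\mu}_e}z_e)\subset\T_{z_e}(\mathbf{J}^{\Phi-1}(\bm\mu_e))$, so $(\zeta_P)_{z_e}\in\T_{z_e}(\mathbf{J}^{\Phi-1}(\bm\mu_e))$; this is exactly why the hypothesis $\zeta\in\mathfrak{g}^{\boldsymbol{\Delta}}_{\boldsymbol{\mu}_e}$ is imposed. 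The identity then gives $(\delta^2 h^\alpha_\xi)_{z_e}((\zeta_P)_{z_e},v)=\omega^\alpha_{z_e}([\zeta_P,Z]_{z_e},v)$, and I would expand $[\zeta_P,Z]=[\zeta_P,X_{\bh}]-[\zeta_P,\xi_P]$. The first term vanishes: exactly as in the proof of Theorem \ref{Th::Xreduction}, the $G$-invariance of $\bh$ together with $\Lie_{X_{\bh}}\bomega=0$ and the nondegeneracy of $\bomega$ give $[\zeta_P,X_{\bh}]=0$. For the second term, since $\zeta\mapsto\zeta_P$ is a Lie algebra antihomomorphism for the left action $\Phi$, one has $-[\zeta_P,\xi_P]=[\zeta,\xi]_P=:\eta_P$ with $\eta\in\mathfrak{g}$. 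Hence $[\zeta_P,Z]_{z_e}=(\eta_P)_{z_e}$, and using the defining property $\iota_{\eta_P}\omega^\alpha=\d\langle\mathbf{J}^\Phi_\alpha,\eta\rangle$,
\begin{equation*}
(\delta^2 h^\alpha_\xi)_{z_e}((\zeta_P)_{z_e},v)=\left(\d\langle\mathbf{J}^\Phi_\alpha,\eta\rangle\right)_{z_e}(v)=v\!\left(\langle\mathbf{J}^\Phi_\alpha,\eta\rangle\right)=0,
\end{equation*}
because $\langle\mathbf{J}^\Phi_\alpha,\eta\rangle$ is constant on $\mathbf{J}^{\Phi-1}(\bm\mu_e)$ (there $\mathbf{J}^\Phi_\alpha=\mu^\alpha_e$) while $v\in\T_{z_e}(\mathbf{J}^{\Phi-1}(\bm\mu_e))$ is tangent to it. Assembling the components over $\alpha$ yields $(\delta^2\bh_\xi)_{z_e}((\zeta_P)_{z_e},v)=0$.

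I expect the main obstacle to be the bookkeeping of the opening step: establishing the Hessian-as-$\omega^\alpha(\text{linearisation},\cdot)$ formula cleanly and checking that $[U,Z]_{z_e}$ really depends only on $u$, so that the second variation is evaluated intrinsically; and carefully re-deriving $[\zeta_P,X_{\bh}]=0$ from invariance. Once these are secured, both assertions follow in one line each from $\ker\omega^\alpha_{z_e}$-annihilation and from the momentum map being locally constant along its own level set. The precise sign in $[\zeta_P,\xi_P]=\pm[\zeta,\xi]_P$ plays no role, since the final vanishing holds for every $\eta\in\mathfrak{g}$.
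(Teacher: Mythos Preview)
Your proof is correct and takes a genuinely different route from the paper's. The paper argues the first claim by exploiting the $\boldsymbol{\Delta}$-equivariance of $\mathbf{J}^\Phi$: it computes $\boldsymbol{h}_\xi(\Phi_{\exp(t\zeta)}(p))$, differentiates in $t$ to obtain $\iota_{\zeta_P}\d\boldsymbol{h}_\xi=-\sum_\alpha\langle\mathbf{J}^\Phi_\alpha,(\zeta^{\Delta_\alpha}_{\mathfrak{g}})_\xi\rangle\otimes e_\alpha$, and then differentiates once more in $p$ so that the second variation becomes $-\sum_\alpha\langle\T_{z_e}\mathbf{J}^\Phi_\alpha(v),(\zeta^{\Delta_\alpha}_{\mathfrak{g}})_\xi\rangle\otimes e_\alpha$, which vanishes on the level set. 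For the second claim the paper extends $Y_{z_e}$ to a vector field $Y$ taking values in $\ker\omega^\alpha\cap\T(\mathbf{J}^{\Phi-1}(\bm\mu_e))$ and observes that $\iota_Y\d h^\alpha_\xi=\omega^\alpha(X_{\bh}-\xi_P,Y)\equiv 0$ identically, so that $\d(\iota_Y\d h^\alpha_\xi)=0$. Your approach is more unified: a single linearisation identity $(\delta^2 h^\alpha_\xi)_{z_e}(u,w)=\omega^\alpha_{z_e}([U,Z]_{z_e},w)$ handles both parts, and your first claim avoids the affine-action machinery entirely by reducing to $[\zeta_P,Z]=\eta_P$ and the defining property of the momentum map. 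A mild advantage of the paper's argument for \eqref{eq:SecDeg} is that it shows the degeneracy along an entire distribution rather than just at $z_e$; conversely, your argument for \eqref{eq:SecDeg} does not require extending $Y_{z_e}$ to a vector field with values in a possibly non-constant-rank intersection $\ker\omega^\alpha\cap\T(\mathbf{J}^{\Phi-1}(\bm\mu_e))$.
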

\begin{proof}
First, since $\bh\in \Cinfty(P,\R^k)$ is $G$-invariant and $\mathbf{J}^\Phi$ is equivariant with respect to the $k$-polysymplectic affine Lie group action $\boldsymbol{\Delta}:G \times (\mathfrak{g}^*)^k\rightarrow (\mathfrak{g}^*)^k$, then for every $g\in G$ and $p\in P$, one has
\begin{multline*}
\boldsymbol{h}_\xi(\Phi_g(p)) = \boldsymbol{h}(\Phi_g(p))-\langle \mathbf{J}^\Phi(\Phi_g(p)),\xi\rangle +\langle {\bm \mu}_e,\xi\rangle \\
= \boldsymbol{h}(p)-\langle \boldsymbol{\Delta}_g\mathbf{J}^\Phi(p),\xi\rangle +\langle \bm\mu_e,\xi\rangle %&=\boldsymbol{h}(p)-\sum_{\alpha=1}^k\langle \mathbf{J}_\alpha^\Phi(p),\boldsymbol{\Delta}_{\alpha g}^T\xi\rangle \otimes e_\alpha+\langle \bm\mu_e,\xi\rangle \,,\\
=\boldsymbol{h}(p)-\sum_{\alpha=1}^k\langle \mathbf{J}_\alpha^\Phi(p),\Delta_{g\alpha}^T\xi\rangle \otimes e_\alpha+\langle \bm\mu_e,\xi\rangle,
\end{multline*}
where $\boldsymbol{\Delta}^T_g:\mathfrak{g}^k\rightarrow \mathfrak{g}^k$ is the transpose of $\boldsymbol{\Delta}_g$ for $g\in G$ and $\Delta_{g1},\ldots,\Delta_{gk}$  are its components. Let us substitute $g=\exp(t\zeta)$, with $\zeta\in\mathfrak{g}$, and differentiate with respect to $t$. Then,
\begin{equation}
\label{Eq::1stvar}
\left(\inn{\zeta_P}\d\boldsymbol{h}_\xi\right)_{z_e}=-\sum_{\alpha=1}^k\left\langle \mathbf{J}_\alpha^\Phi(p),\restr{\frac{\d}{\d t}}{t=0}{\Delta}^T_{\exp{(t\zeta)\alpha}}\xi \right\rangle\otimes e_\alpha=-\sum_{\alpha=1}^k\left\langle \mathbf{J}_\alpha^\Phi(p),(\zeta^{{\Delta}_\alpha}_{\mathfrak{g}})_{\xi}\right\rangle \otimes e_\alpha\,,
\end{equation}
where $(\zeta^{{\Delta}_\alpha}_{\mathfrak{g}})_{\xi}$ is the fundamental vector field of ${\Delta}_\alpha^T:G\times\mathfrak{g}\rightarrow \mathfrak{g}$ at $\xi\in\mathfrak{g}$ for $\alpha=1,\ldots,k$. Taking the second variation of \eqref{Eq::1stvar} relative to $p\in P$, evaluating at $z_e\in P$, and contracting with $v_{z_e}$, one has
\[
\left(\delta^2 \boldsymbol{h}_\xi \right)_{z_e}((\zeta_P)_{z_e},v_{z_e})=-\sum_{\alpha=1}^k\left\langle \T_{z_e}\mathbf{J}_\alpha^\Phi\left(v_{z_e}\right), (\zeta^{{\Delta}_\alpha}_{\mathfrak{g}})_{\xi}\right\rangle\otimes e_\alpha .
\]
%By Proposition \ref{Lemm::NonAdPerpPS}, 
Therefore, the second variation $\left(\delta^2 \boldsymbol{h}_\xi \right)_{z_e}((\zeta_P)_{z_e},v_{z_e})$ vanishes since $v_{z_e}\in \T_{z_e}(\mathbf{J}^{\Phi-1}(\boldsymbol{\mu}_e))\subset \ker \T_{z_e}\mathbf{J}^{\Phi}_\alpha$.

Concerning \eqref{eq:SecDeg}, it is a consequence of \eqref{Eq::SecVarf} and the fact that, for every vector field $Y$ on ${\bf J}^{\Phi-1}(\bm\mu_e)$ taking values in $\ker\omega^\alpha\cap \T({\bf J}^{\Phi-1}(\bm\mu_e))$, it follows that
$$
\iota_Y\d h^\alpha=\omega^\alpha(X_{\bm h},Y)=0\,,\qquad \iota_Y\d \langle {\bf J}_\alpha^\Phi,\xi\rangle =\omega^\alpha(\xi_P,Y)=0\,,
$$
for $\alpha=1,\ldots,k$ and every $\xi\in \mathfrak{g}$ on ${\bf J}^{\Phi-1}(\bm\mu_e)$. 
\end{proof}

Proposition \ref{Prop::GaugeDir} and Proposition \ref{Lemm::NonAdPerpPS} state that $\left(\delta^2 \boldsymbol{h}_{\xi} \right)_{z_e}$ is degenerate in the directions tangent to $\T_{z_e}\left(G^{\boldsymbol{\Delta}}_{\boldsymbol{\mu}_e}z_e\right)$, while each $(\delta^2{ h}^\alpha_\xi)_{z_e}$ is degenerate in the directions of $\ker\omega^\alpha_{z_e}\cap \T_{z_e}({\bf J}^{\Phi-1}(\bm\mu_e))$. On the other hand, since $\ker (\delta^2{\bm h}_{\xi})_{z_e}$ contains $\ker \T_{z_e}\pi_{\bm \mu_{e}}$, one can define a bilinear two-form on $\T_{\pi_{\bm \mu_{e}}(z_e)}P_{\bm \mu_{e}}$, with $P_{\bm \mu_{e}}={\bf J}^{\Phi-1}({\bm \mu}_{e})/G^{\bm\Delta}_{\bm\mu_{e}}$, by reducing to that space  the bilinear two-form $(\delta^2{\bm h}_{\xi})_{z_e}$. By using an adapted coordinate system, one can prove that the reduction of $(\delta^2{\bm h}_{\xi})_{z_e}$ to $\T_{\pi_{\bm \mu_{e}}(z_e)}P_{\bm \mu_{e}}$ gives the behaviour of the Hessian of ${\bm f}_{\bm\mu_{e}}$ on $P_{\bm \mu_{e}}$. It is worth noting that the reduction ${\bm f}_{\mu_{e}}$ to $P_{\bm\mu}$ of ${\bm h}_\xi$ on ${\bf J}^{\Phi-1}(\bm\mu_e)$ does not depend on $\xi$, as the value of ${\bm h}_\xi$ on points of ${\bf J}^{\Phi-1}(\bm \mu_{e})$ does not really depend on $\xi$: it is only the restriction of $\bm h$ to ${\bf J}^{\Phi-1}(\bm \mu_{e})$. Note also that only directions transverse to the orbit of $G^{\boldsymbol{\Delta}}_{\boldsymbol{\mu}_e}$ are significant for determining, via the variation of ${\bm h}_\xi$, the stability character of ${\bm f}_{\bm\mu_{e}}$ at one of its equilibrium points.  

There are many manners to ensure the stability of a $k$-polysymplectic reduced Hamiltonian system. This suggests us to give the following definition of formal stability. For the case of a symplectic manifold, it retrieves the standard condition for the stability of a reduced symplectic problem \cite{MS_88}. 

\begin{definition}
\label{Prop::SecDerivEM}
Let $(P,\bomega,\bh,\mathbf{J}^\Phi)$ be a $G$-invariant $\bomega$-Hamiltonian system and let $z_e\in P$ be a $k$-polysymplectic relative equilibrium point of  $X_\bh$. Then, $z_e$ is called a {\it formally  stable $k$-polysymplectic relative equilibrium point} if, for a family of supplementary spaces $\mathcal{S}^\alpha$ such that $\mathcal{S}^\alpha\oplus( \T_{z_e}(G^{\boldsymbol{\Delta}}_{\boldsymbol{\mu}_e}z_e) + \ker\omega^\alpha_{z_e}\cap \T_{z_e}(\mathbf{J}^{\Phi-1}(\boldsymbol{\mu}_e)))=\T_{z_e}(\mathbf{J}^{\Phi-1}(\boldsymbol{\mu}_e))$ and $\mathcal{S}^1+\dotsb+\mathcal{S}^k+\T_{z_e}(G^{\bm\Delta}_{\bm\mu_e}z_e)=\T_{z_e}({\bf J}^{\Phi-1}(\bm\mu_e))$, one has
\begin{equation}\label{eq:FS}
    \left(\delta^2 { h}^\alpha_\xi\right)_{z_e}(v_{z_e},v_{z_e}) > 0\,,\qquad \forall v_{z_e}\in \mathcal{S}^\alpha\backslash\{0\}\,,\qquad \alpha=1,\ldots,k\,.
\end{equation}
% where $[z_e]=\pi_{\bm\mu_e}(z_e)$.
\end{definition}

Note that, given a space of a family $W_1,\ldots,W_k$ of a vector space $E$ such that $\cap_{\alpha=1}^kW_\alpha=0$, one cannot infer that any supplementary spaces $V_\alpha\oplus W_\alpha=E$ will satisfy $V_1+\dotsb+V_k=E$. This is, essentially, why the condition $\mathcal{S}^1+\dotsb+\mathcal{S}^k+\T_{z_e}(G^\Delta_{\bm \mu _e})$ was added. Indeed, to ensure the stability on the reduced manifold, we will use the fact that the projection of $\mathcal{S}^1+\dotsb+\mathcal{S}^k$ to the tangent space to an equilibrium point in a reduced manifold spans the total tangent space at that point.

If a system satisfies our formal stability, then $\sum_{\alpha=1}^kf^\alpha_{\bm \mu_e}$ has a strict minimum at $\pi_{\bm\mu_e}(z_e)$ and the function is invariant relative to the evolution of the reduced $\bomega_{\bm\mu}$-Hamiltonian system. Hence, that system is stable. The converse is not true, as in the symplectic case \cite{AM_78}. We will not study all methods to prove stability in the reduced $k$-symplectic Hamiltonian system in this paper, and we will leave this for further work. 

The proof of the above-mentioned fact relies on using a coordinate system on ${\bf J}^{\Phi-1}(\bm \mu_e)$ adapted to its fibration onto $P_{\bm\mu_e}$ and the fact that the obtained results involve geometric objects that are independent of the coordinate system (see \cite{LZ_21,Zaw_21} for a symplectic analogue). In the adapted coordinate system, the Hessian of $\boldsymbol{f}_{\bm\mu_e}$ on the reduced space $P_{\bm\mu_e}$ at $\pi_{\bm\mu_e}(z_e)$ is retrieved by the Hessian of ${\bh}_\xi$  on directions of $\T_{z_e}({\bf J}^{\Phi-1}(\bm\mu_e))$ that are not tangent to $\ker \T_{z_e}\pi_{\bm \mu_e}$. The Hessian of the reduced function $\boldsymbol{f}_{\bm \mu_e}$ can be decomposed into $k$ components. The vector subspaces $\mathcal{S}^1,\ldots,\mathcal{S}^k$ project onto a series of spaces spanning $\T_{\pi_{\bm\mu_e}(z_e)}P_{\bm \mu_e}$. Condition \eqref{eq:FS} implies that 
\begin{align}
\frac{\partial^2f^\alpha_{\bm\mu_e}}{\partial z_i\partial z_j}(\pi_{\bm \mu_e}(z_e))v^iv^j &> 0\,,\qquad \forall v\in {\rm Im}\T_{\pi_{\bm\mu_e}(z_e)}\pi_{\bm\mu_e}(\mathcal{S}^\alpha)\backslash\{0\}\,,\\ \frac{\partial^2f^\alpha_{\bm\mu_e}}{\partial z_i\partial z_j}(\pi_{\bm \mu_e}(z_e))v^iv^j &\geq  0\,,\qquad \forall v\in \T_{\pi_{\bm\mu_e}(z_e)}P_{\bm\mu_e}\,,
\end{align}
for $\alpha=1,\ldots,k$.  Then, 
$$
\sum_{\alpha=1}^k\frac{\partial^2f^\alpha_{\bm\mu_e}}{\partial z_i\partial z_j}(\pi_{\bm \mu_e}(z_e))v^iv^j>0\,,\qquad \forall v\in \T_{\pi_{\bm\mu_e}(z_e)}P_{\bm\mu_e}\backslash\{0\}\,.
$$
Consequently, the second-order Taylor part of $\sum_{\alpha=1}^kf_{\bm\mu_e}^\alpha$ is definite-positive and we have a strict minimum. The components $f^\alpha_{\bm \mu_e}$ are constants of motion for $X_{\boldsymbol{f}_{\bm \mu_e}}$, and hence the motion of $X_{\boldsymbol{f}_{\bm \mu_e}}$, for an initial condition close enough to $\pi_{\bm\mu_e}(z_e)$ can be restricted to an open neighbourhood of $\pi_{\bm\mu_e}(z_e)$.

It is worth noting that we will also call {\it formally stable $k$-polysymplectic relative equilibrium points} points for which each \eqref{eq:FS} is negative-definite, as similar results can be obtained. In particular, their projections will be stable equilibrium points. It is simple to obtain many more stability criteria.

\section{Applications and examples}\label{Sec::examples}

This section illustrates how the theory and applications of the previous sections can be applied to relevant examples with physical and mathematical applications.

\subsection{Complex Schwarz equations}\label{Sec:ComScEq}
The first example illustrates how locally automorphic Lie systems \cite{GLMV_19} can be seen as a $\bomega$-Hamiltonian system relative to a $k$-polysymplectic structure. 

Consider the differential equation the $t$-dependent complex differential equation given by
\begin{equation}\label{Eq:ComplSchwa}
    \frac{\d z}{\d t} = v\,,\qquad \frac{\d v}{\d t} = a\,,\qquad\frac{\d a}{\d t} = \frac 32\frac{a^2}{v} + 2b(t)v\,,\qquad z,v,a\in \mathbb{C}\,,
\end{equation}
for a certain complex $t$-dependent function $b(t)$ defined on $\mathcal{O}=\{(z,v,a)
\in \T^2\mathbb{C}:v
\neq 0\}$, which can be considered as a system of real differential equations in a natural manner.

The system \eqref{Eq:ComplSchwa} can be understood as the complex analogue of the Lie system on $\mathcal{O}_\mathbb{R}=\{(z,v, a)
\in \T^2\mathbb{R}:v
\neq 0\}$ studied in \cite{LV_15}.  More specifically, \eqref{Eq:ComplSchwa} is a first-order representation for the third-order complex differential equation
$$
\frac{\d^3z}{\d t^3}\left(\frac{\d z}{\d t}\right)^{-1}-\frac{3}{2}\left(\frac{\d^2 z}{\d t^2}\right)^2\left(\frac{\d z}{\d t}\right)^{-2}=2b(t)\,.
$$
The left-hand side of the above expression retrieves, for $z\in \mathbb{R}$, exactly the real version of the {\it Schwarzian derivative} (also called {\it Schwarz equation}) of a function $z(t)$ of $t$, usually represented by  $\{z(t),t\}_{sc}$, which appears in many research problems. The ideas in our work and \eqref{Eq:ComplSchwa} can be used to potentially extend to the complex realm results for the real third-order Kummer--Schwarz equation and Schwarz derivatives obtained via Lie systems (see \cite{BC_20,LS_20} and references therein). It is worth noting that the Schwarz derivative plays a significant role in studying linearisation in time-dependent systems, projective systems, mathematical functions theory, and more (cf. \cite{GR_07, Hil_97, Leh_79}).  

In real coordinates
\[
v_1 = \Rp (z)\,,\quad v_2 = \Ip(z)\,,\quad v_3 = \Rp(v)\,,\quad v_4 = \Ip(v)\,,\quad v_5 = \Rp(a)\,,\quad v_6 = \Ip(a)\,,
\]
the system \eqref{Eq:ComplSchwa} is associated with the $t$-dependent vector field
$$
X=X_1+2b_R(t)X_2+2b_I(t)X_3\,,
$$
where $b_R(t) = \Rp(b(t))$, $b_I(t) = \Ip(b(t))$, and
\begin{gather*}
    X_1 = v_3\parder{}{v_1} + v_4\parder{}{v_2} + v_5\parder{}{v_3} + v_6\parder{}{v_4} + \frac{3}{2}\frac{2v_4v_5v_6 + (v_5^2 - v_6^2)v_3}{v_3^2 + v_4^2}\parder{}{v_5} + \frac{3}{2}\frac{2v_3v_5v_6 - v_4(v_5^2 - v_6^2)}{v_3^2 + v_4^2}\parder{}{v_6}\,,\\
    X_2 = v_3\parder{}{v_5} + v_4\parder{}{v_6}\,,\qquad X_3 = -v_4\parder{}{v_5} + v_3\parder{}{v_6}\,,\\
    X_4 = -v_3\parder{}{v_3} - v_4\parder{}{v_4} - 2v_5\parder{}{v_5} - 2v_6\parder{}{v_6}\,,\qquad X_5 = v_4\parder{}{v_3} - v_3\parder{}{v_4} + 2v_6\parder{}{v_5} - 2v_5\parder{}{v_6}\,,\\
    X_6 = -v_4\parder{}{v_1} + v_3\parder{}{v_2} - v_6\parder{}{v_3} + v_5\parder{}{v_4} - \frac{3}{2}\frac{2v_3v_5v_6 - v_4(v_5^2 - v_6^2)}{(v_3^2 + v_4^2)}\parder{}{v_5} + \frac{3}{2}\frac{2v_4v_5v_6 + v_3(v_5^2 - v_6^2)}{(v_3^2 + v_4^2)}\parder{}{v_6}\,.
\end{gather*}
These vector fields satisfy the following commutation relations
\begin{align}
    [X_1,X_2] &= X_4\,, & [X_1,X_3] &= X_5\,, & [X_1,X_4] &= X_1\,, & [X_1,X_5] &= X_6\,, & [X_1,X_6] &= 0\,,\\
    &&[X_2,X_3] &= 0\,, & [X_2,X_4] &= -X_2\,, & [X_2,X_5] &= -X_3\,, & [X_2,X_6] &= -X_5\,,\\
    &&&&[X_3,X_4] &= -X_3\,, & [X_3,X_5] &= X_2\,, & [X_3,X_6] &= X_4\,,\\
    &&&&&&[X_4,X_5] &= 0\,, & [X_4,X_6] &= -X_6\,,\\
    &&&&&&&&[X_5,X_6] &= X_1\,,
\end{align}
Then, $X_1,\ldots,X_k$ give rise to a Lie algebra of vector fields $V_{sc}$ that is isomorphic to $\mathbb{C}\otimes \mathfrak{sl}_2$ as a real vector space. Indeed, $\langle X_1,X_2,X_4\rangle \simeq \mathfrak{sl}(2,\R)\simeq\langle X_3,X_4,X_6\rangle $. Additionally, $\mathbb{C}\otimes \mathfrak{sl}_2$ decomposes as $\langle X_1,X_4,X_2\rangle\oplus \langle X_6,X_5,X_3\rangle$. Then $V_{sc}$ is graded as $V_{sc}=E_{-1}\oplus E_0\oplus E_1$, where $E_{-1}=\langle X_6,X_1\rangle$, $E_0=\langle X_4,X_5\rangle$, and $E_1=\langle X_3,X_2\rangle$, with $[E_i,E_j]=E_{i+j}$, where the sum is in the additive group $
\{-1,0,1\}$. A long calculation shows that $X_1\wedge\dotsb\wedge X_6\neq 0$ almost everywhere. The latter linear independence and the fact that $X_1,\ldots,X_6$ span a Lie algebra of vector fields spanning $\T\mathcal{O}$ explains  why it is said that \eqref{Eq:ComplSchwa} can be related to a locally automorphic Lie system (cf. \cite{GLMV_19}).

Meanwhile, the Lie algebra of Lie symmetries of the system \eqref{Eq:ComplSchwa} related to the  Lie algebra $V_{sc}$ reads 
\begin{gather}
2Y_1 =(v_1^2-v_2^2)\frac{\partial}{\partial v_1}+2v_1v_2\frac{\partial}{\partial v_2}+2(v_1v_3-v_2v_4)\frac{\partial}{\partial v_3} + 2(v_3v_2+v_1v_4)\frac{\partial}{\partial v_4}\\ \quad + 2(v_3^2+v_1v_5-v_4^2-v_2v_6)\frac{\partial}{\partial v_5}+2(v_5v_2+2v_3v_4+v_2v_6)\frac{\partial}{\partial v_6}\,,\\
Y_2=\frac{\partial}{\partial v_1}\,,\qquad
Y_3 = \frac{\partial}{\partial v_2}\,,\\
Y_4 = -v_1\frac{\partial}{\partial v_1}-v_2\frac{\partial}{\partial v_2}-v_3\frac{\partial}{\partial v_3}-v_4\frac{\partial}{\partial v_4}-v_5\frac{\partial}{\partial v_5}-v_6\frac{\partial}{\partial v_6}\,,\\
Y_5 = v_2\frac{\partial}{\partial v_1}-v_1\frac{\partial}{\partial v_2}+v_4\frac{\partial}{\partial v_3}-v_3\frac{\partial}{\partial v_4}+v_6\frac{\partial}{\partial v_5}-v_5\frac{\partial}{\partial v_6}\,.\\
2Y_6 = -2v_1v_2\frac{\partial}{\partial v_1}+(v_1^2-v_2^2)\frac{\partial}{\partial v_2}-2(v_2v_3+v_1v_4)\frac{\partial}{\partial v_3}+2(v_1v_3-v_2v_4)\frac{\partial}{\partial v_4} \\
\quad -2(2v_3v_4+v_2v_5+v_1v_6)\frac{\partial}{\partial v_5}+2(v_3^2-v_4^2+v_1v_5-v_2v_6)\frac{\partial}{\partial v_6}\,.
\end{gather}
In other words, $[X_i,Y_j]=0$ for every $i,j=1,\ldots,6$. The commutation relations for the vector fields $Y_1,\ldots,Y_6$ are
\begin{align}
    [Y_1,Y_2] &= Y_4\,, & [Y_1,Y_3] &= Y_5\,, & [Y_1,Y_4] &= Y_1\,, & [Y_1,Y_5] &= Y_6\,, & [Y_1,Y_6] &= 0\,,\\%[-1.3em]
    &&[Y_2,Y_3] &= 0\,, & [Y_2,Y_4] &= -Y_2\,, & [Y_2,Y_5] &= -Y_3\,, & [Y_2,Y_6] &= -Y_5\,,\\
    &&&&[Y_3,Y_4] &=-Y_3\,, & [Y_3,Y_5] &=Y_2\,, & [Y_3,Y_6] &= Y_4\,,\\
    &&&&&&[Y_4,Y_5] &= 0\,, & [Y_4,Y_6] &= -Y_6\,,\\
    &&&&&&&&[Y_5,Y_6] &= Y_1\,.
\end{align}
Note that $Y_1,\ldots,Y_6$ admit identical structure constants as $X_1,\ldots,X_6$. One can choose one-forms $\eta^1,\ldots,\eta^6$ to be the dual to $Y_1,\ldots,Y_6$. The existence of these dual forms is ensured by the condition $Y_1\wedge \dotsb \wedge Y_6\neq 0$ and the fact that $Y_1,\ldots,Y_6$ span $\T\mathcal{O}$. These dual one-forms remain invariant relative to the Lie derivatives with respect to the vector fields $X_1\ldots, X_6$, i.e. $\Lie_{X_i}\eta^j=0$ for $i,j=1,\ldots,6$.

Moreover, the differential forms $\d\eta^1,\ldots, \d\eta^6$, or their linear combinations, are closed differential forms that are invariant relative to the Lie derivatives along $X_1,\ldots, X_6$. These properties make them Hamiltonian vector fields relative to the presymplectic forms $\d\eta^1,\ldots, \d\eta^6$.

The appropriate linear combinations of these forms yield a set of presymplectic forms with the zero intersection of their kernels,  resulting in $X_1,\ldots, X_6$ being $\bomega$-Hamiltonian vector fields.

In particular, 
\begin{align}
    \d\eta^1 &= -\eta^5\wedge\eta^6 - \eta^1\wedge \eta^4\,, & \d\eta^2 &= -\eta^3\wedge \eta^5 - \eta^4\wedge\eta^2\,,\\
    \d\eta^3 &= -\eta^4\wedge\eta^3-\eta^5\wedge\eta^2\,, & \d\eta^4 &= -\eta^1\wedge\eta^2-\eta^3\wedge\eta^6\,,\\
    \d\eta^5 &= -\eta^1\wedge\eta^3-\eta^6\wedge\eta^2\,, & \d\eta^6 &= -\eta^1\wedge\eta^5-\eta^6\wedge\eta^4\,.
\end{align}
Every vector field in $\langle X_1,\ldots,X_6\rangle$ becomes an $\bomega$-Hamiltonian vector field relative to the two-polysymplectic form $\d\eta^1\otimes e_1+\d\eta^2\otimes e_2$. The same applies to $\d\eta^5\otimes e_1+\d\eta^6\otimes e_2$, and many other two-polysymplectic forms. This also extends to three-polysymplectic forms such as $\d \eta^1\otimes e_1+\d \eta^2\otimes e_2+\d \eta^3\otimes e_3$, provided that the kernels of their presymplectic components have zero intersection.

Let us focus on the three-polysymplectic form  defined by
\[
\bomega=\omega^1\otimes e_1+\omega^2\otimes e_2+\omega^3\otimes e_3=\d\eta^1\otimes e_1+\d\eta^2
\otimes e_2+\d\eta^4\otimes e_3\,.
\]

A two-polysymplectic Marsden--Weinstein reduction can be performed by taking, for instance, the $\bomega$-Hamiltonian vector field $X_1$ and the Lie symmetry $X_6$, which satisfies that $[X_1,X_6]=0$. Then, a two-polysymplectic momentum map $\mathbf{J}^\Phi:\mathcal{O}\rightarrow (\R^*)^3$ is given by
\[
\iota_{X_6}\d\mathbf{J}^\Phi = \iota_{X_6}\omega^1\otimes e_1+\iota_{X_6}\omega^2\otimes e_2+\iota_{X_6}\omega^3\otimes e_3=\d J_1\otimes e_1+\d J_2\otimes e_2+\d J_3\otimes e_3\,.
\]

It is a matter of a long calculation to prove that $\d J_1\wedge \d J_2\wedge \d J_3\neq 0$ based on the fact that $\partial (J_1,J_2,J_2)/\partial (v_1,v_2,v_3)\neq 0$ almost everywhere. Therefore, ${
\bf J}^{\Phi-1}(\boldsymbol{\mu})$ has dimension three. Moreover, due to $\iota_{X_6}\d\mathbf{J}^\Phi=0$, the reduced manifold ${
\bf J}^{\Phi-1}(\boldsymbol{\mu})/X_6$ is two-dimensional. 

Note that the vector field $X_1$  is tangent to the level set  $\mathbf{J}^{\Phi-1}(\boldsymbol{\mu})$ since
\[
\iota_{X_1}\iota_{X_6}\d\eta^\alpha=X_1J_\alpha=0\,,\qquad \alpha=1,2,3\,.
\]
Therefore, by Theorem \ref{Th::Xreduction} the vector field $X_1$ reduces onto the manifold $\mathbf{J}^{\Phi-1}(\boldsymbol{\mu})/X_6$.

Then, after some calculations, we obtain that condition \eqref{Eq::PolysymplecticReduction1eq} is fulfilled. To verify condition \eqref{Eq::PolysymplecticReduction2eq}, which has the form
\[
      \T_p(G_{\boldsymbol\mu}^{\boldsymbol\Delta} p) = \bigcap^k_{\alpha=1}\left(\ker\omega^\alpha_p+\T_p(G^{\Delta^\alpha}_{\mu^\alpha}p)\right)\cap \T_p(\mathbf{J}^{\Phi-1}(\boldsymbol\mu))\,,
\]
one can note that 
$$
\T_p(G_{\bm \mu} p)=\langle X_6\rangle\subset \T_p({\bf J}^{\Phi-1}(\bm{\mu}))\subset \T_p P\,.  
$$
Moreover, we have
\[
    \ker\omega^1=\langle Y_2,Y_3\rangle\,,\quad \ker \omega^2=\langle Y_1,Y_6\rangle \,,\quad \ker \omega^3=\langle Y_4, Y_5\rangle\,.
\]
In turn, this amounts to obtaining three determinants, each being non-zero, implying that no element of $\ker \omega^1, \ker \omega^2,\ker \omega^3$ belongs to $\T_p\mathbf{J}^{\Phi-1}(\bm{\mu})$. In particular, at a generic point,
$$
\det \begin{pmatrix}
    Y_2J_2&Y_2J_3\\Y_3J_2&Y_3J_3
\end{pmatrix}\neq 0\,,\qquad
\det \begin{pmatrix}
    Y_1J_1&Y_1J_3\\Y_6J_1&Y_6J_3
\end{pmatrix}\neq 0\,,\qquad
\det \begin{pmatrix}
    Y_4J_1&Y_4J_2\\Y_5J_1&Y_5J_2
\end{pmatrix}\neq 0\,.
$$
Finally, the condition \eqref{Eq::PolysymplecticReduction2eq} is satisfied, namely
\[
\big(\langle Y_2,Y_3\rangle +\langle X_6\rangle\big)\cap\big(\langle Y_1,Y_6\rangle +\langle X_6\rangle\big)\cap\big(\langle Y_4,Y_5\rangle +\langle X_6\rangle\big)\cap \T_p(\mathbf{J}^{\Phi-1}(\bm\mu))=\langle X_6\rangle\,.
\]
Hence, Theorem \ref{Th::PolisymplecticReductionJ} can be applied.

\subsection{The \texorpdfstring{$k$}{}-polysymplectic manifold given by the product of \texorpdfstring{$k$}{} symplectic manifolds}

This section presents an illustrative example of the \texorpdfstring{$k$}{}-polysymplectic Marsden--Weinstein reduction of a product of \texorpdfstring{$k$}{} symplectic manifolds (see Section \ref{Sec::SymplProduct}). This example shows different types of systems of differential equations that can be understood as Hamiltonian systems relative to a $k$-polysymplectic manifold and describes its reductions. In particular, the so-called diagonal prolongations of Lie--Hamilton systems, which appear also in the multidimensional generalisations of some integral systems, like in the case of the Winternitz--Smorodinsky oscillator on $\T^*\mathbb{R}$ (see \cite{LS_20}), can be considered as Hamiltonian systems relative to a $k$-polysymplectic manifold. One can also consider higher-dimensional Winternitz--Smorodinsky oscillators.

Let us provide some new details to the formalism in Section \ref{Sec::SymplProduct}. Define $P=P_1\times\dotsb\times P_k$ for some $k$ symplectic manifolds $(P_\alpha,\omega^\alpha)$, where $\alpha=1,\dotsc,k$. This gives rise to a $k$-polysymplectic manifold $(P,{\rm pr}_\alpha^*\omega^\alpha\otimes e_\alpha)$. Assume that each Lie group action $\Phi^\alpha: G_\alpha\x P_\alpha\rightarrow P_\alpha$  admits a symplectic momentum map $\mathbf{J}^{\Phi^\alpha}: P_\alpha\rightarrow\mathfrak{g}_\alpha^*$ for $\alpha=1,\ldots,k$. Define the Lie group action of $G=G_1\times\ldots\times G_k$ on $P$ as \eqref{eq:GrAck}.
 If one defines $\mathfrak{g}=\bigoplus_{\alpha =1}^k\mathfrak{g}_\alpha$, then there exists a $k$-polysymplectic momentum map 
\[
\mathbf{J}:P\ni(x_1,\ldots,x_k)\longmapsto
(0,\ldots, {\bf J}^\alpha,\ldots,0)\otimes {e}_\alpha = \begin{pmatrix}
\mathbf{J}^{1}&0 &\dotsb & 0\\ 
0& \mathbf{J}^{2}&\dotsb &0\\ 
\vdots&\vdots &\ddots&\vdots \\ 
0 &0 &\dotsb&\mathbf{J}^{k}\end{pmatrix}\in \mathfrak{g}^{*k}\,,
\]
where we assume $\mathbf{J}^\alpha(x_1,\ldots,x_k)=\mathbf{J}^{\Phi^\alpha}(x_\alpha)$ for $\alpha=1,\ldots,k$ and the matrix array is a practical representation of the image of ${\bf J}$.
%where $\mathfrak{g}^*=\mathfrak{g}_1^*\x\dotsb\x\mathfrak{g}_k^*$ is the dual space to $\mathfrak{g}$.
Note that $\bm\mu=(0,\ldots,\mu^\alpha,\ldots,0)\otimes {e}_\alpha\in \mathfrak{g}^{*k}$ is a weak regular value of $\mathbf{J}$ if and only if each $\mu^\alpha\in\mathfrak{g}_\alpha^*$ is a weak regular point of its corresponding $\mathbf{J}^{\Phi^\alpha}$. Assume that some $G^\Delta_{\bm \mu}$ acts in a quotientable manner on the associated level ${\bf J}^{-1}(
\bm \mu)$. This happens if and only if every $G^{\Delta^\alpha}_{\mu^\alpha}$ acts on a quotientable manner on each ${\bf J}^{\Phi^\alpha-1}(\mu^\alpha)$ for $\alpha=1,\ldots,k$. %

We already showed that the conditions \eqref{Eq::PolysymplecticReduction1eq}  and \eqref{Eq::PolysymplecticReduction2eq} are satisfied. By Theorem \ref{Th::PolisymplecticReductionJ}, these equations guarantee that, on the reduced manifold $\mathbf{J}^{-1}(\bm\mu)/G^{\bm\Delta}_{\bm\mu}$, there exists a uniquely induced $k$-polysymplectic manifold, 
\[
 \left(\mathbf{J}^{-1}(\bm\mu)/G^{\bm\Delta}_{\bm\mu}\simeq \mathbf{J}^{\Phi^1-1}(\mu^1)/G^{\Delta^1}_{1\bm\mu}\x\dotsb\x \mathbf{J}^{\Phi^k-1}(\mu^k)/G^{\Delta^k}_{k\bm \mu}\ ,\ 
\bm{\omega}_{\bm\mu}=\sum_{\alpha=1}^k\omega^{\mu^\alpha}\otimes e_\alpha\right)
\]
for some reduced presymplectic forms $\omega_{\mu^1},\ldots,\omega_{\mu^k}$.

Next, let us consider a vector field $X$ on $P$ that is $\bomega$-Hamiltonian and $G$-invariant. By Theorem \ref{Th::Xreduction}, the vector field $X$ can be written in the following way
\[
X=\sum^k_{\alpha=1}X_\alpha\,,
\]
where each $X_\alpha$ can be considered as a vector field on $P_\alpha$ that is tangent to $\mathbf{J}^{\Phi^\alpha-1}(\mu^\alpha)$ for $\alpha=1,\ldots, k$. Recall that $\iota_{X_\alpha}\omega^\beta =\delta_\alpha^\beta \d h^\alpha$ for $\alpha,\beta=1,\ldots,k$. Moreover, this frequently happens in diagonal prolongations of Lie--Hamilton systems, where we have a vector field $X^{[m]}$ defined on a manifold of the form $N^m$ that can be considered as a copy of a Hamiltonian system on each $N$ relative to a symplectic manifold on that $N$ (cf. \cite{LS_20}). Then,
\[
    \d\boldsymbol{h}=\sum^k_{\alpha=1}\d h^\alpha\otimes e_\alpha=\sum^k_{\alpha=1}\iota_{X}\omega^\alpha\otimes e_\alpha\,.
\]
Next, $\bm h_\xi=\bm h-\langle \mathbf{J}-\bm{\mu}_e,\xi\rangle$ for $\xi\in \mathfrak{g}$, and Theorem \ref{Th::ksymEM} yields that $z_e=(z_{1e},\dots,z_{ke})\in P$ is a $k$-polysymplectic relative equilibrium point if and only if each $z_{\alpha e}$ is a symplectic relative equilibrium point of a Hamiltonian vector field $X_\alpha$ on the symplectic manifold $(P_\alpha,\omega^\alpha)$ relative to some $\xi_\alpha\in \mathfrak{g}_\alpha$. % (see \cite{LZ_21,Sim_72} for further details). 
Then, a $k$-polysymplectic relative equilibrium point $z_e$ is formally stable if there exists a series of supplementary spaces $\mathcal{S}^\alpha$ to $\T_{z_e}(G^{\bm\Delta}_{\bm \mu_e}z_e)\oplus (\ker \omega^\alpha_{z_e}\cap \T_{z_e}{\bf J}^{-1}(\bm\mu_e))$ in $\T_{z_e}{\bf J}^{-1}(\bm\mu_e)$, with $\alpha=1,\ldots,k$, such that
\begin{equation}
    \label{Eq::ExProductSym}
\left(\delta^2  h^\alpha_\xi\right)_{z_e}(v_{z_e},v_{z_e})>0\,,\qquad \forall v_{z_e}\in \mathcal{S}^\alpha\setminus\{0\}\,,\qquad \alpha=1,\ldots,k\,
\end{equation}
and $\mathcal{S}^1+\ldots+\mathcal{S}^k+\T_{z_e}(G^{\bm\Delta}_{\bm \mu_e}z_e)=\T_{z_e}{\bf J}^{-1}(\bm\mu_e)$.

\subsubsection{Product of oscillators}

Let us detail a practical application of the formalism above. Consider the product of $k$ isotropic three-dimensional oscillators given by the equations
$$ \frac{\d ^2x^i_\alpha}{\d t^2} = -b_\alpha^2x^i_\alpha\,,\qquad \alpha = 1,\dotsc,k\,, \qquad i=1,2,3,$$
where the $b_\alpha>0$, with $\alpha=1,\ldots,k$, are a series of constants. The above system of second-order differential equations can be written as a first-order  system of differential equations
\begin{equation}\label{eq:product-oscillators}
    \begin{dcases}
        \frac{\d x^i_\alpha}{\d t} = p^i_\alpha\,,\\
        \frac{\d p^i_\alpha}{\d t} = -b_\alpha^2 x^i_\alpha\,,
    \end{dcases} \qquad \alpha = 1,\dotsc,k\,,\qquad i=1,2,3\,,
\end{equation}
on the product manifold $P = (\cT\R^3)^k$. The $\alpha$-th factor $\cT\R^3$ in $P$ is a symplectic manifold equipped with the symplectic form
\[
 \omega^\alpha = \sum_{i=1}^3\d x^i_\alpha\wedge\d p^i_\alpha\,,
\]
where we stress that there is no sum over the index $\alpha$.
Then, $P$ is a $k$-polysymplectic manifold when endowed with the $\R^k$-valued form $\bomega = \sum_{\alpha=1}^k\omega^\alpha\otimes e_\alpha$, where $\omega^1,\ldots,\omega^k$ are considered as pulled back to $P$ in the natural way. Moreover, \eqref{eq:product-oscillators} describes the integral curves of the vector field
$$ 
X_{\bm h} = \sum_{\alpha = 1}^k\sum_{i=1}^3 \left( p_\alpha^i\parder{}{x^i_\alpha} - b_\alpha^2 x_\alpha^i\parder{}{p_\alpha^i} \right)\,, 
$$
which is $\bomega$-Hamiltonian admitting an $\bomega$-Hamiltonian function
\begin{equation}\label{eq:kHamFun}
{\bm h}= \frac{1}{2}\sum_{\alpha=1}^k\left( p_\alpha^2 + b_\alpha^2x_\alpha^2 \right)\otimes e_\alpha\,,\qquad p_\alpha^2=\sum_{i=1}^3(p_\alpha^i)^2,\qquad x_\alpha^2=\sum_{i=1}^3(x_\alpha^i)^2.
\end{equation}
Let us consider a Lie group action $\Phi^\alpha: \SO(3)\times (\cT\R^3)_\alpha\rightarrow (\cT\R^3)_\alpha$, where each $\Phi^\alpha$ is the lift of the natural Lie group action $\Psi:\SO(3)\times \R^3\rightarrow \R^3$ induced by rotations on $\mathbb{R}^3$ to the $\alpha$-th copy of $\cT\R^3$ in $P$. Then, the resulting Lie group action $\Phi$ on $(\cT\R^3)^k$ given by \eqref{eq:GrAck} reads
\[
\Phi:\SO(3)^k\times (\cT\R^3)^k\longrightarrow (\cT\R^3)^k\,.
\]
The Lie algebra of fundamental vector fields of $\Phi$ is spanned by the basis of vector fields on $P$ of the form
\begin{gather}
    \xi^1_{\alpha P}= \left(x_\alpha^1\parder{}{x_\alpha^2} - x_\alpha^2\parder{}{x_\alpha^1} + p_\alpha^1\parder{}{p_\alpha^2} - p_\alpha^2\parder{}{p_\alpha^1}\right),\qquad \xi^2_{\alpha P} = \left(x_\alpha^2\parder{}{x_\alpha^3} - x_\alpha^3\parder{}{x_\alpha^2} + p_\alpha^2\parder{}{p_\alpha^3} - p_\alpha^3\parder{}{p_\alpha^2}\,\right),\\
    \xi^3_{\alpha P} = \left(x_\alpha^3\parder{}{x_\alpha^1} - x_\alpha^1\parder{}{x_\alpha^3} + p_\alpha^3\parder{}{p_\alpha^1} - p_\alpha^1\parder{}{p_\alpha^3}\right)\,,
\end{gather}
with $\alpha=1,\ldots,k$. These vector fields are Lie symmetries of $\bm\omega$ and $\bm h$. Moreover, a $k$-polysymplectic momentum map associated with $\Phi$ is given by
$\mathbf{J}:(\cT\mathbb{R}^3)^k\rightarrow [(\mathfrak{so}_3^k)^*]^k$ such that
\begin{equation*}
    \mathbf{J}(\boldsymbol{q}_1,\ldots,\boldsymbol{q}_k)=(0,0,0;\ldots;J_\alpha^1,J_\alpha^2,J_\alpha^3;\ldots;0,0,0)\otimes e_\alpha
\end{equation*}
where $\boldsymbol{q}_\alpha=(x^1_\alpha,x^2_\alpha,x^3_\alpha,p^1_\alpha,p^2_\alpha,p^3_\alpha)\in \T^*\R^3$ for $\alpha=1,\ldots,k$, while 
$$    (J^1_\alpha,J^2_\alpha,J^3_\alpha)=(x_\alpha^1p_\alpha^2 - x_\alpha^2p_\alpha^1, x_\alpha^2p_\alpha^3 - x_\alpha^3p_\alpha^2, x_\alpha^3p_\alpha^1 - x_\alpha^1p_\alpha^3)\,,
$$
and $\alpha=1,\ldots,k$. Note that the elements of $\mathfrak{so}_3^*$ are represented by the coordinates given in an appropriate basis. The function $x_\alpha^1p_\alpha^2 - x_\alpha^2p_\alpha^1$ is the angular momentum, $p_{\alpha\varphi}$, of the $\alpha$-th particle in the corresponding spherical coordinates $\{r_\alpha,\theta_\alpha,\varphi_\alpha\}$. Meanwhile, $L_\alpha^2 = (J_\alpha^1)^2+(J_\alpha^2)^2+(J_\alpha^3)^2$ is the square of the total angular momentum of the $\alpha$-th particle. Both quantities are conserved by the evolution of $X_{\bm h}$.

The momentum map $\mathbf{J}$ is $(\Ad^*)^k$-equivariant. Recall that $\mathbf{J}=(0,\ldots,\mathbf{J}^{\alpha},\ldots,0)\otimes e_\alpha$. Then, $\bm{\mu}=(0,0,0;\ldots;J_\alpha^1,J_\alpha^2,J_\alpha^3;\ldots;0,0,0)\otimes e_\alpha$ is weakly regular value of $\mathbf{J}$ if and only if each triple $\mu^\alpha=(J_\alpha^1,J_\alpha^2,J_\alpha^3)\in\mathfrak{so}_3^*$ is a weakly regular value of $\mathbf{J}^{\Phi^\alpha}$, where $\alpha=1,\ldots,k$. Let us fix some weakly regular $\bm\mu$. Then,
\[
    \T_{\boldsymbol{q}}(\mathbf{J}^{-1}(\bm\mu))=\bigoplus _{\alpha=1}^k\T_{{\boldsymbol{q}}_\alpha}(\mathbf{J}_\alpha^{\Phi^\alpha-1}(\mu^\alpha))\,,\qquad \forall {\bf q}=({\bf q}_1,\ldots,{\bf q}_k)\in P\,.
\]
Moreover, 
\[
\xi_{\alpha P}^iJ^j_\beta = -\delta_{\alpha\beta}\epsilon_{ijk}J^k_\beta\,,\qquad i,j=1,2,3,\qquad \alpha,\beta=1,\ldots,k.
\]
The isotropy subgroup of $\Phi$ at $\bm \mu$ is given by the Cartesian product of all the isotropy subgroups corresponding to each $\mu^\alpha$ relative to  $\Phi^\alpha$ and $\alpha=1,\ldots,k$. To obtain $G_{\mu^\alpha}$, one may verify when $\sum_{i=1}^3\lambda_i(\xi_{\alpha P}^i)$ belongs to $\T_{{\boldsymbol{q}_\alpha}}({\bf J}^{\Phi^\alpha-1}(\mu_\alpha))$, namely $\sum_{i=1}^3\lambda_i(\xi_P^i)_\alpha J^j_\alpha=0$ for $j=1,2,3$ (with no summation over $\alpha$), which occurs if and only if
$$
    \begin{pmatrix}
        0&-J_\alpha^3&J_\alpha^2\\
        J_\alpha^3&0&-J_\alpha^1\\
        -J_\alpha^2&J_\alpha^1&0
    \end{pmatrix}\begin{pmatrix}\lambda_1\\\lambda _2\\\lambda_3\end{pmatrix} = \begin{pmatrix}0\\0\\0\end{pmatrix}\,.
$$
The matrix of coefficients has rank two for $L^2_\alpha\neq 0$. Moreover,  $\mathbf{J}^{\Phi^\alpha}$ has a regular value at $\mu_\alpha$ when $L^2_\alpha\neq 0$ for every $\alpha=1,\ldots,k$. Let us restrict to that case. Each isotropy subgroup $G_{\mu^\alpha}$ has always dimension one. Hence, the reduced manifold ${\bf J}^{\Phi-1}({\bm \mu})/G_{\bm \mu}$ has dimension $6k-3k-k=2k$. The conditions for the $k$-polysymplectic Marsden--Weinstein reduction results, as already commented, from the ones for the symplectic reduction on each component, which are satisfied. Hence, the $k$-polysymplectic Marsden--Weinstein reduction exists.

Note that $\bm h=\frac 12\sum_{\alpha=1}^k(p_{\alpha r}^2+p_{\alpha \varphi}^2/(r_\alpha^2\sin^2 \theta_\alpha)+p_{\alpha \theta}^2/r_\alpha^2+b_\alpha^2r_\alpha ^2)\otimes e_\alpha$ and $\omega^\alpha=\d r_\alpha\wedge \d p_{\alpha r}+\d \theta_\alpha\wedge \d p_{\alpha\theta}+\d\varphi_\alpha\wedge \d p_{\alpha\varphi}$, for each $\alpha=1,\ldots,k$, in spherical coordinates for the $k$ component manifolds of $(\cT \mathbb{R}^3)^k$. Then, the differential equations for the integral curves of $X_{\bm h}$  read
\begin{equation}\label{eq:SysBezRed}
\begin{gathered}
\frac{\d p_{\alpha r}}{\d t}=\frac{p_{\alpha  \varphi}^2}{r_\alpha ^3\sin^2\theta_\alpha }+\frac{p_{\alpha \theta}^2}{r_\alpha ^3}-b^2_\alpha r_\alpha \,,\qquad \frac{\d p_{\alpha \varphi}}{\d t}=0\,,\qquad 
\frac{\d p_{\alpha \theta}}{\d t}=\frac{p_{\alpha \varphi}^2\cos\theta_\alpha }{r_\alpha^2\sin^3
\theta_\alpha}\,,\\ \frac{\d r_\alpha}{\d t} = p_{\alpha r}\,,\qquad   \frac{\d \theta_\alpha}{\d t}=\frac{p_{\alpha\theta}}{r_\alpha^2}\,,\qquad \frac{\d\varphi_{\alpha}}{\d t} = \frac{p_{\alpha\varphi }}{r_\alpha^2\sin^2\theta_\alpha}\,.
\end{gathered}
\end{equation}
$k$-Polysymplectic relative equilibrium points are given by those points for which the vector field $X_{\bm h}$ on $(\cT \mathbb{R}^3)^k$ corresponding to the dynamics is proportional to one of the fundamental vector fields of $\Phi$. In particular, let us take $z_e\in P$ such that $z_e=(r_{\alpha},\theta_\alpha\!\!=\!\!\frac{\pi}{2},\varphi_\alpha,p_{\alpha r}\!\!=\!\!0,p_{\alpha\theta}=0,p_{\alpha\varphi})$ and $L_\alpha=b_\alpha r^2_\alpha=p_{\alpha \varphi}$ for every $\alpha=1,\ldots,k$ on analysed points. Then, the $\bomega$-Hamiltonian vector field $X_{\bh}$ at such points is
\[
    X_{\bm h}=\sum_{\alpha=1}^k\frac{p_{\alpha\varphi}}{r_\alpha^2}\frac{\partial}{\partial \varphi_{\alpha}}.
\]
This implies that $z_e\in P$ is a $k$-polysymplectic relative equilibrium point of $X_{\bh}$. 
Let us demonstrate this by
%$z_e\in P$ is a $k$-polysymplectic relative equilibrium point of $X_{\bh}$ 
applying Theorem \ref{Th::ksymEM}. This theorem ensures that $z_e$ is a $k$-polysymplectic relative equilibrium point of $X_{\bh}$ if and only if there exists $\xi\in \mathfrak{so}_3^k$ such that $\boldsymbol{h}_{\xi}=\bh-\langle \mathbf{J}^\Phi-\boldsymbol{\mu}_e,\xi\rangle $ has a critical point at $z_e$. Indeed, for 
\[
\xi=(p_{1\varphi}/r^2_1,0,0;\ldots;p_{k\varphi}/r^2_k,0,0)\in \mathfrak{so}_3^k,
\]
the $\R^k$-valued function
\[
\boldsymbol{h}_{\xi}=\boldsymbol{h}-\langle\mathbf{J}^\Phi-\boldsymbol{\mu}_e ,\xi\rangle=\sum_{\alpha=1}^k(h^\alpha-\langle (0,\ldots,{\bf J}^{\Phi_\alpha}-(L_\alpha,0,0),\ldots,0),\xi \rangle)\otimes e_\alpha\,,
\]
has a critical point at $z_e$. Therefore, $z_e$ is a $k$-polysymplectic relative equilibrium point.

% After reducing and using $\alpha,p_{\alpha r}$ at each component of the reduced space,

By Theorem \ref{Th::PolisymplecticReductionJ}, the reduced manifolds is $(\cT \mathbb{R})^k$ with coordinates $\{r_\alpha,p_{\alpha r}\}$ for $\alpha=1,\ldots,k$. The reduced $k$-polysymplectic form on the reduced manifold is given by
\[
\boldsymbol{\omega}_{\bm \mu}=\sum^k_{\alpha=1}\d r_{\alpha}\wedge \d p_{\alpha r}\otimes e_\alpha,
\]
and the reduced $\bomega_{\bm\mu}$-Hamiltonian reads
$$
    {\bm f}_{\bm \mu} =\frac 12\sum_{\alpha=1}^k\left(p_{\alpha r}^2 + \frac{L_\alpha^2}{r^2_\alpha} + b_\alpha^2 r_\alpha^2 \right)\otimes e_\alpha\,.
$$
Furthermore, one has that
\[
\frac{\d p_{\alpha r}}{\d t}=-b_\alpha ^2r_\alpha +\frac{L_\alpha^2}{r^3_\alpha}\,,\qquad \frac{\d r_\alpha }{\d t}=p_{\alpha r}\,,\qquad \alpha=1,\dotsc,k\,.
\]
Thus, the equilibrium points of $X_{{\bm f}_{\bm\mu}}$ have $p_{\alpha r}=0$ and 
$$
    -b^2_\alpha r_\alpha+\frac{L_\alpha^2}{r^3_\alpha}=0\,
$$
for $\alpha=1,\ldots,k$. Note that this point is the projection of a $k$-polysymplectic relative equilibrium point $z_e\in P$.
% The reduced $k$-polysymplectic form reads
% $$
% \bm{\omega} = \sum_{\alpha=1}^k\d r_\alpha\wedge \d p_{\alpha r}\otimes e_\alpha
% $$
% and the 
% Hence, the points in the original manifold that project onto such points are relative equilibrium points. 

The Hessian of the functions $f_{\bm \mu}^\alpha$ is positive-definite in a supplementary to the kernel of ${\bm \omega}_{\mu^\alpha}$ at the equilibrium point. Indeed, 
$$
\Hess( f_{\bm \mu}^\alpha) = \begin{pmatrix} 1&0 \\ 0&4b_\alpha^2 \end{pmatrix}\,.
$$
Moreover, the function $\sum_{\alpha=1}^kf^\alpha_{\bm \mu}$ has a positive-definite Hessian and the equilibrium point becomes a strict minimum. This means that the reduced $k$-polysymplectic relative equilibrium point is stable. In the original manifold, the orbits around $k$-polysymplectic relative equilibrium points remain in the anti-image in ${\bf J}^{\Phi-1}(\bm\mu)$ of an open neighbourhood of the projection of the $k$-polysymplectic relative equilibrium points.

\subsection{\texorpdfstring{$k$}{}-polysymplectic affine Lie systems}
Let us apply our techniques to a family of affine inhomogeneous systems of first-order differential equations. It is worth noting that all such systems are Lie systems \cite{CL_11}. We will hereafter call such differential equations {\it affine Lie systems}. Many such systems appear in control theory and other relevant disciplines \cite{CR_03}. In particular, we are here concerned with affine Lie systems admitting a Lie algebra of Hamiltonian vector fields relative to a $k$-polysymplectic form. We call them {\it $k$-polysymplectic affine Lie systems}. 

Although our techniques could be extended to other affine Lie systems, let us restrict ourselves to the particular case 

\begin{equation}\label{eq:AffineLiekpoly}
\frac{\d}{\d t}
\begin{pmatrix}
    x_1\\
    x_2\\
    x_3\\
    x_4\\
    x_5
\end{pmatrix}=
\begin{pmatrix}
    b_1(t)\\
    b_2(t)\\
    b_3(t)\\
    b_4(t)\\
    b_5(t)
\end{pmatrix}+b_6(t)
\begin{pmatrix}
    0&0&0&0&0\\
    0&0&0&0&0\\
    0&0&0&0&1\\
    0&0&0&0&0\\
    0&0&-1&0&0
\end{pmatrix}
\begin{pmatrix}
    x_1\\
   x_2\\
    x_3\\
    x_4\\
    x_5\\
\end{pmatrix}\,,
\end{equation}
where $b_1(t),\ldots,b_6(t)$ are arbitrary $t$-dependent functions.
The above  system is the system of differential equations describing the integral curves of the $t$-dependent vector field
$$
X=\sum_{\alpha=1}^6b_\alpha(t)X_\alpha\,,
$$
where
\begin{equation*}
    X_1=\frac{\partial}{\partial x_1}\,,\quad X_2=\frac{\partial}{\partial x_2}\,,\quad
    X_3=\frac{\partial }{\partial x_3}\,,\quad
    X_4=\frac{\partial}{\partial x_4}\,,\quad 
    X_5=\frac{\partial}{\partial x_5}\,,\quad
    X_6=x_5\frac{\partial}{\partial x_3}-x_3\frac{\partial }{\partial x_5}\,.
\end{equation*}
These vector fields span a six-dimensional Lie algebra of vector fields $V$, with the following non-vanishing commutation relations
%\begin{align}
\begin{equation*}
%[X_1,X_2] &= 0\,,\\
%[X_1,X_3] &= 0\,, &[X_2,X_3] &= 0\,,\\
%[X_1,X_4] &=0\,, & [X_2,X_4] &=0\,, & [X_3,X_4] &=0\,,\\
%[X_1,X_5]&=0\,, &[X_2,X_5]&=0\,,
%&[X_3,X_5]&=0\,,
%&[X_4,X_5]&=0\,,\\
%[X_1,X_6]&=0\,,&[X_2,X_6]&=0\,,
%&
[X_3,X_6]=-X_5\,,\qquad
%&[X_4,X_6]&=0\,,
[X_5,X_6]=X_3\,.
\end{equation*}
%\end{align}
Consider the case where $b_1(t),\ldots,b_6(t)$ are constants, denoted as $c_1,\ldots,c_6\in\R$, respectively.
Since the vector fields $X_1\wedge \dotsb\wedge X_6=0$, the methods presented in Section \ref{Sec:ComScEq} for describing $k$-polysymplectic forms compatible with Lie systems do not apply to \eqref{eq:AffineLiekpoly}. Nevertheless, there exists a two-polysymplectic form on $\R^5$ given by
\[
\bm\omega=(\d x_3\wedge \d x_5+\d x_4\wedge \d x_1)\otimes e_1+(\d x_3\wedge \d x_5 +\d x_4\wedge \d x_2)\otimes e_2
\]
turning all the vector fields of $V$ into $\bomega$-Hamiltonian vector fields. Indeed, $\bomega$-Hamiltonian functions for $X_1,\ldots,X_6$ have the form
\begin{align}
    \bh_1 &= -x_4\otimes e_1\,, & \bh_2 &= -x_4\otimes e_2\,, & \bh_3&=x_5\otimes e_1 + x_5\otimes e_2\,,\\
    \bh_4 &= x_1\otimes e_1+x_2\otimes e_2\,, & \bh_5 &= -x_3\otimes e_1-x_3\otimes e_2\,, & \bh_6&=\frac{1}{2}(x_3^2+x_5^2)\otimes e_1+\frac{1}{2}(x_3^2+x_5^2)\otimes e_2\,. 
\end{align}

The flow of the vector field $X_4$ gives rise to a two-polysymplectic Lie group action $\Phi:\R\times \R^5\rightarrow \R^5$. Moreover, $X_4$, which spans the space of fundamental vector fields of $\Phi$, is a Lie symmetry of system \eqref{eq:AffineLiekpoly}. Then, a two-polysymplectic momentum map associated with $\Phi$ reads
$$
{\bf J}^\Phi:(x_1,x_2,x_3,x_4,x_5)\in \mathbb{R}^5\mapsto (x_1,x_2)=\bm\mu\in \mathbb{R}^{*2}.
$$
Note that $\bm\mu\in \R^{*2}$ is a regular value of ${\bf J}^\Phi$, and ${\bf J}^\Phi$ is $\Ad^{*2}$-equivariant two-polysymplectic momentum map. It can be proved that the example satisfies the conditions \eqref{Eq::PolysymplecticReduction1eq} and \eqref{Eq::PolysymplecticReduction2eq}. Hence, Theorem \ref{Th::PolisymplecticReductionJ} can be applied. The vector field $X_4$ is tangent to $\mathbf{J}^{\Phi-1}(\bm \mu)$ and $\T_x(G_{\bm\mu}x)=\langle \frac{\partial}{\partial x_4}\rangle$ for $x\in \R^5$. Therefore, $P_{\bm\mu}=\mathbf{J}^{\Phi-1}(\bm \mu)/\mathbb{R}$ is a two-dimensional manifold and the variables $\{x_3,x_5\}$ can be considered in a natural manner as variables on $P_{\bm\mu}$. The reduced two-polysymplectic form reads
$$
\bm\omega_{\bm\mu}=
\omega^1_{\bm \mu}\otimes e_1+\omega^2_{\bm \mu}\otimes e_2=\d x_3\wedge \d x_5\otimes e_1+\d x_3\wedge \d x_5\otimes e_2\,.
$$
To apply Theorem \ref{Th::Xreduction}, the affine Lie system must be tangent to ${\bf J}^{\Phi-1}(\bm\mu)$, which can be ensured by assuming that its associated $\bm\omega$-Hamiltonian function has to be invariant relative to $X_4$. These conditions are satisfied by imposing $c_1=c_2=0$. The resulting vector field, $X_\bomega=c_3X_3+c_4X_4+c_5X_5+c_6X_6$, projects onto $P_{\bm\mu}= {\bf J}^{\Phi-1}(\bm\mu)/\mathbb{R}$ giving rise to an $\bm\omega_{\bm\mu}$-Hamiltonian vector field of the form 
$$
X_{{\bm \mu}}=c_6\left(x_5\frac{\partial}{\partial x_3}-x_3\frac{\partial}{\partial x_5}\right)+c_3\frac{\partial}{\partial x_3} + c_5\frac{\partial}{\partial x_5}\,.
$$
The $\bm\omega_{\bm \mu}$-Hamiltonian function of $X_{\bm\mu}$ reads
$$
\boldsymbol{f}_{\bm \mu}=\left(c_3x_5-c_5x_3+c_6\left(\frac{x_3^2}{2}+\frac{x_5^2}{2}\right)\right)\otimes e_1+\left(c_3x_5-c_5x_3+c_6\left(\frac{x_3^2}{2}+\frac{x_5^2}{2}\right)\right)\otimes e_2\,.
$$
Next, the methods introduced in Section \ref{Sec::energy-momentum} will be employed to find the two-polysymplectic relative equilibrium points of the $\bm\omega$-Hamiltonian vector field 
\[
Y=X_4+X_6
\]
and study their stability.
According to Theorem \ref{Th::ksymEM}, a two-polysymplectic relative equilibrium point $z_e\in P$ is a point for which there exists $\xi\in\mathfrak{g}\simeq\R$ such that $z_e$ is a critical point of each component of the $\R^2$-valued function 
\[
\bh_\xi=\left(x_1-\xi(x_1-\mu^1)+\frac{1}{2}(x_3^2+x_5^2)\right)\otimes e_1+\left(x_2-\xi(x_2-\mu^2)+\frac{1}{2}(x_3^2+x_5^2)\right)\otimes e_1\,.
\]
This happens for $\xi=1$ and $z_e=(x_1,x_2,x_3=0,x_4,x_5=0)\in\R^5$, where $x_1,x_2,x_4$ are arbitrary.

To examine the relative stability of $z_e\in\R^5$, note that
the supplementary spaces to $\T_{z_e}(G_{{\bm\mu}_e}z_e)+\ker\omega^1_{z_e}$ and $\T_{z_e}(G_{{\bm\mu}_e}z_e)+\ker\omega^2_{z_e}$ in $\T_{z_e}({\bf J}^{\Phi-1}(\bm\mu_e))$ have the form
$$
\mathcal{S}^1=\left\langle\frac{\partial}{\partial x_3},\frac{\partial}{\partial x_5} \right\rangle\,,\qquad \mathcal{S}^2=\left\langle\frac{\partial}{\partial x_3},\frac{\partial}{\partial x_5}\right\rangle\,,
$$
respectively. Then, $\mathcal{S}^1+\mathcal{S}^2+\T_{z_e}(G_{\bm\mu_e}z_e)=\T_{z_e}({\bf J}^{\Phi-1}(\bm\mu_e))$ and the Hessian of $(\delta^2h^1_\xi)_{z_e}$ at $z_e$ in the subspace $\mathcal{S}^1$ and the Hessian of $(\delta^2h^2_\xi)_{z_e}$ in the subspace $\mathcal{S}^2$ are definite-positive. 
Therefore, by our criterion, a two-polysymplectic relative equilibrium point $z_e\in\R^5$ is relatively stable, namely its projection to $P_{\bm\mu_e}$ is stable. More specifically, the reduced system has an $\bomega_{\bm\mu_e}$-Hamiltonian function whose components, $f^1_{\bm\mu_e},f^2_{\bm\mu_e}$ are such that their Hessians at equilibrium points $\pi_{\bm\mu_e}(z_e)=(x_3=0,x_5=0)$ are definite-positive in the directions of $\ker(\omega^1_{\bm \mu_e})_{z_e}$  and  $\ker(\omega^2_{\bm \mu_e})_{z_e}$, respectively. Indeed, the reduced $\bomega_{\bm\mu_e}$-Hamiltonian function reads
$$
\bm f_{\bm\mu_e}=\frac{1}2(x_3^2+x_5^2)\otimes e_1+\frac{1}2(x_3^2+x_5^2)\otimes e_2
$$
and the function
$$
f_{\bm\mu_e}^1+f_{\bm\mu_e}^2=x_3^2+x_5^2\,,
$$
is invariant under the dynamics of $Y_{\bm\mu_e}$ and has a strict minimum at $\pi_{\bm\mu_e}(z_e)=(x_3=0,x_5=0)$. Hence, the reduced two-polysymplectic Hamiltonian system is stable.

\subsection{Quantum quadratic Hamiltonian operators}

Next, let us analyse an example based upon the Wei--Norman equations for the automorphic Lie system related to quantum mechanical systems described by  quadratic Hamiltonian operators, which describe as particular cases quantum harmonic oscillators with/without dissipation \cite{CL_11,WN_63}. In this case, the system of differential equations under study is the one determining the integral curves of the time-dependent vector field
\begin{equation}\label{eq:CoeAutSys}
X=\sum_{\alpha=1}^6b_\alpha(t)X_\alpha^R,
\end{equation}
for certain $t$-dependent functions $b_1(t),\ldots,b_6(t)$ and the vector fields
\begin{equation}\label{Eq:Rel}
\begin{aligned}
&X^R_1=\frac{\partial}{\partial v_1}+v_5\frac{\partial }{\partial v_4}-\frac{1}{2}v_5^2\frac{\partial }{\partial v_6}\,,
&X^R_2&=v_1\frac{\partial}{\partial v_1}+\frac{\partial }{\partial v_2}+\frac{1}{2}v_4\frac{\partial }{\partial v_4}-\frac{1}{2}v_5\frac{\partial }{\partial v_5}\,,\\
&X^R_3=v_1^2\frac{\partial }{\partial v_1}+2v_1\frac{\partial }{\partial v_2}+e^{v_2}\frac{\partial }{\partial v_3}-v_4\frac{\partial }{\partial v_5}+\frac{1}{2}v_4^2\frac{\partial }{\partial v_6}\,,
&X^R_4&=\frac{\partial }{\partial v_4}\,,
\\
&X^R_5=\frac{\partial }{\partial v_5}-v_4\frac{\partial }{\partial v_6}\,,
&X^R_6&=\frac{\partial }{\partial v_6}\,.
\end{aligned}
\end{equation}
The commutation relations between the above vector fields read
{\small
\begin{equation}
\begin{aligned}
&[X^R_1,X^R_2]=X^R_1\,, &&&&&&&&
\\
&[X^R_1,X^R_3]=2\, X^R_2\,,
&[X^R_2,X^R_3]&=X^R_3\,,
&& && &&
\\
&[X^R_1,X^R_4]=0\,,
&[X^R_2,X^R_4]&=-\frac 12\, X^R_4\,,
&[X^R_3,X^R_4]&=X^R_5\,,
&& &&
\\
&[X^R_1,X^R_5]=-X^R_4\,,
&[X^R_2,X^R_5]&=\frac 12\, X^R_5\,,
&[X^R_3,X^R_5]&=0\,,
&[X^R_4,X^R_5]&=-X^R_6\,,
&&
\\
&[X^R_1,X^R_6]=0\,,
&[X^R_2,X^R_6]&=0\,,
&[X^R_3,X^R_6]&=0\,,
&[X^R_4,X^R_6]&=0\,,
&[X^R_5,X^R_6]&=0\,.
\end{aligned}
\end{equation}
}
It is known that the Lie algebra of Lie symmetries of $\langle X^R_1,\ldots,X_6^R\rangle$  is spanned by
\begin{equation}\label{Eq:Rel2}
\begin{gathered}
X^L_1=e^{v_2}\frac{\partial}{\partial v_1}+2v_3\frac{\partial }{\partial v_2}+v_3^2\frac{\partial }{\partial v_3}\,,\qquad
X^L_2=\frac{\partial}{\partial v_2}+v_3\frac{\partial }{\partial v_3}\,,\qquad
X^L_3=\frac{\partial }{\partial v_3}\,,\\
X^L_4=e^{-v_2/2}(e^{v_2}-v_1v_3)\frac{\partial }{\partial v_4}-e^{-v_2/2}v_3\frac{\partial}{\partial v_5}-e^{-v_2/2}(e^{v_2}-v_1v_3)v_5\frac{\partial }{\partial v_6}\,,\\X^L_5=v_1e^{-v_2/2}\frac{\partial }{\partial v_4}+e^{-v_2/2}\frac{\partial}{\partial v_5}-v_1v_5e^{-v_2/2}\frac{\partial}{\partial v_6}\,,\quad X^L_6=\frac{\partial }{\partial v_6}\,.
\end{gathered}
\end{equation}
In particular, let us focus on systems \eqref{eq:CoeAutSys} with constant coefficients and, in particular,  
$$
X^R_5 = \frac{\partial}{\partial v_5}-v_4\frac{\partial}{\partial v_6}\,.
$$
Then, a Lie symmetry of our system is given by
$$
Y = \frac{\partial}{\partial v_5}\,.
$$
A two-polysymplectic form on $\R^6$ can be defined in the following way
\begin{align*}
  \bm\omega &= \omega^1\otimes e_1+\omega^2\otimes e_2 \\
  &= \left(\d v_1\wedge \d v_3 + \d v_2\wedge \d v_4 + \d v_5\wedge \d v_1 + \d v_4\wedge\d v_6\right)\otimes e_1+\left( \d v_4\wedge\d v_6 - \d v_3\wedge\d v_5\right)\otimes e_2\,.
\end{align*}
Note that
$$
    \ker\omega^1 = \left\langle \parder{}{v_3} + \parder{}{v_5},\parder{}{v_2} + \parder{}{v_6}\right\rangle\,,\qquad 
    \ker\omega^2 = \left\langle \parder{}{v_1},\parder{}{v_2} \right\rangle,
$$
and $\ker\omega^1\cap\ker\omega^2=0$ and $(\R^6,\bm\omega)$ becomes a two-polysymplectic manifold. The vector field $Y$ is a Lie symmetry of the two-polysymplectic form, i.e. $\Lie_{Y}\bomega=0$. Then,
\[
\iota_{Y_3}\omega^1 = \d v_1\,,\qquad\iota_{Y_3}\omega^2 = \d v_3\,,
\]
and a two-polysymplectic momentum map $\mathbf{J}^\Phi$ associated with the Lie group action given by the flow of $Y$ reads
\[
\mathbf{J}^\Phi:\R^6\ni x \longmapsto (v_1,v_3)=\bm\mu\in (\mathfrak{g}^*)^2\simeq \R^{*2}\,.
\]
Note that every $\bm\mu=(\mu^1,\mu^2)\in \mathbb{R}^{*2}$ is a weakly regular value of $\mathbf{J}^\Phi$, which is ${\rm Ad}^{*2}$-equivariant. The isotropy group for every $\bm\mu \in \mathbb{R}^{*2}$ reads $G_{\bm\mu}=\mathbb{R}$. Hence, $\mathbf{J}^{\Phi-1}(\bm{\mu})$ is a submanifold, as well as $\mathbf{J}_1^{\Phi-1}(\mu^1)$ and $\mathbf{J}_2^{\Phi-1}(\mu^2)$. Since $Y_5$ is tangent to  $\mathbf{J}^{\Phi-1}(\bm{\mu})$, then $P_{\bm{ \mu}}=\mathbf{J}^{\Phi-1}(\bm{\mu})/G_{\bm \mu}$ can be locally coordinated by the functions $\{v_2,v_4,v_6\}$.  

The vector field $X_5^R$ is $\bm\omega$-Hamiltonian with
\begin{equation}\label{Eq:HamFunTot}
\iota_{X_5^R}\bm\omega=\iota_{X_5^R}\omega^1\otimes e_1+\iota_{X_5^R}\omega^2\otimes e_2 = \d\left(v_1 + \frac{v_4^2}{2}\right)\otimes e_1+ \d\left(v_3 + \frac{v_4^2}{2}\right)\otimes e_2=\d \bh^R_5\,.
\end{equation}
Then, the reduced two-forms read
$$
\omega^1_{\bm\mu} = \d v_2\wedge\d v_4 + \d v_4\wedge\d v_6\,,\qquad \omega^2_{\bm\mu} = \d v_4\wedge\d v_6\,.
$$
Furthermore, one has
$$ \ker\omega^1_{\bm\mu} = \left\langle \parder{}{v_2} + \parder{}{v_6} \right\rangle\,,\qquad \ker\omega^2_{\bm\mu} = \left\langle \parder{}{v_2} \right\rangle\,, $$
and $\omega_{\bm\mu}^1,\omega_{\bm \mu}^2$ define a two-polysymplectic form on $P_{\bm\mu}$. Moreover, the $\bomega$-Hamiltonian function of $X^R_5$ is invariant relative to $Y$. Then, Theorem \ref{Th::Xreduction} ensures that the projection of $X^R_5$ onto $P_{\bm{\mu}}$ exists and is given by
$$
X^R=-v_4\frac{\partial}{\partial v_6},
$$
which is the $\bm{\omega}_{\bm \mu}$-Hamiltonian vector field of the $\bm{\omega}_{\bm \mu}$-Hamiltonian function
$$
{\bm f}_{\bm\mu}=\left(\mu^1 + \frac{v_4^2}{2}\right)\otimes e_1+ \left(\mu^2 + \frac{v_4^2}{2}\right)\otimes e_2\,,
$$
which has a critical point at every point $(v_4=0,v_6)$, where $v_6$ is arbitrary. Such points are not stable equilibrium points. In particular, this $\bm{\omega}_{\bm\mu}$-Hamiltonian function does not satisfy that $f^1_{\bm\mu}+f^2_{\bm \mu}$ has a strict minimum at the equilibrium point: it has only a minimum. 
Note that the points in ${\bf J}^{\Phi-1}(\bm\mu)$ that project onto the above-mentioned equilibrium points are two-polysymplectic relative equilibrium points.
The analysis of \eqref{Eq:HamFunTot} with our two-polysymplectic energy-momentum methods at the mentioned two-polysymplectic relative equilibrium points suggests the same results.

\subsection{Equilibrium points and vector fields with polynomial coefficients}

Let us illustrate certain aspects of our $k$-polysymplectic energy-momentum method by studying vector fields with a polynomial behaviour. Moreover, our example will illustrate some features of weakly regular points of $k$-polysymplectic momentum maps and the character of their associated $k$-polysymplectic Marsden--Weinstein reductions. 

Consider coordinates $\{x_1,x_2,x_3,x_4,x_5,x_6,x_7,x_8\}$ on $\mathbb{R}^8$ and  the vector field $X$ on $\mathbb{R}^8$ given by
$$
    X = x_6^a\parder{}{x_2}+x_4^b\parder{}{x_3}-x_3^c\parder{}{x_4}+x_8^d\parder{}{x_7}-x_7^e\parder{}{x_8}\,,
$$
where $a,b,c,d,e\in \mathbb{N}$. Define  the two-polysymplectic for $\bomega$ 
 on $\mathbb{R}^8$ of the form
 $$
    \bm\omega = \omega^1\otimes e_1 + \omega^2\otimes e_2 = (\d x_3\wedge \d x_4 + \d x_1\wedge \d x_5)\otimes e_1 + (\d x_2\wedge \d x_6 + \d x_7\wedge \d x_8)\otimes e_2\,.
$$
Then, 
$$
\ker\omega_x^1 = \left\langle \parder{}{x_2},\parder{}{x_6},\parder{}{x_7},\parder{}{x_8}\right\rangle,\qquad  \ker\omega^2_x = \left\langle \parder{}{x_1},\parder{}{x_3},\parder{}{x_4},\parder{}{x_5}\right\rangle,\qquad \ker\omega^1_x\cap\ker\omega^2_x=0
$$ for any $x\in \mathbb{R}^8$, and thus $\bm\omega$ becomes a two-polysymplectic form. 

The vector field $X$ admits the Lie symmetries,
$Y_1 = \parder{}{x_2}\,,\  Y_2=\parder{}{x_1}\,,\  Y_3=\parder{}{x_5}$, which span a three-dimensional abelian Lie algebra of vector fields. These Lie symmetries are the infinitesimal generators of the translations along the $x_2$, $x_1$, and $x_5$ coordinates, and they also leave the two-polysymplectic form invariant, i.e. $\Lie_{Y_i}\omega^\alpha=0$ for $i=1,2,3$ and $\alpha=1,2$. They give rise to a Lie group action $\Phi:\mathbb{R}^3\times \mathbb{R}^8\rightarrow \mathbb{R}^8$ that leaves invariant $\bm{\omega}$. 

Since
\begin{align}
    \iota_{Y_1}\omega^1 &= 0\,, &\iota_{Y_2}\omega^1 &= \d x_5\,, &\iota_{Y_3}\omega^1 &= -\d x_1\,,\\
    \iota_{Y_1}\omega^2 &= \d x_6\,, &\iota_{Y_2}\omega^2 &= 0\,,  &\iota_{Y_3}\omega^2 &= 0\,,
\end{align}
a two-polysymplectic momentum map $\mathbf{J}^\Phi$ can be defined by setting
$$
    \bfJ^\Phi:\mathbb{R}^8\ni x\longmapsto \mathbf{J}^\Phi(x)=(0, x_5,-x_1;x_6,0,0)\in (\mathbb{R}^{3*})^2\simeq (\mathbb{R}^3)^2\,.
$$
Then, $\T_x\mathbf{J}^{\Phi-1}(\bm\mu) = \left\langle \parder{}{x_2},\parder{}{x_3},\parder{}{x_4},\parder{}{x_7},\parder{}{x_8}\right\rangle$ for each $x\in\mathbf{J}^{\Phi-1}(\bm\mu)$ and $\bm\mu\in (\mathbb{R}^3)^2$. The two-polysymplectic momentum map $\mathbf{J}^\Phi$ is $\Ad^{*2}$-equivariant and every $\bm\mu\in (\mathbb{R}^3)^2$ is a weakly regular value of $\mathbf{J}^\Phi$ since each $\mathbf{J}^{\Phi-1}(\bm\mu)$ is a five-dimensional submanifold of $\mathbb{R}^8$ and its tangent space at each point coincides with the kernel of ${\bf J}^\Phi$ at that point. Moreover, ${\bf J}^\Phi$ has no regular points.

Note that $Y_2$ and $Y_3$ do not take values at $x$ in  $\T_x(G_{\bm\mu}x)$ but $Y_1$ does. The assumptions of Theorem \ref{Th::PolisymplecticReductionJ} are satisfied, and the quotient space $\T_x\mathbf{J}^{\Phi-1}(\bm\mu)/\T_{x}\left(G_{\bm \mu}x\right)$ is a two-dimensional subspace, where
\[
    \T_x\mathbf{J}^{\Phi-1}(\bm\mu)/\T_{x}\left(G_{\bm \mu}x\right) = \left\langle\parder{}{x_3},\parder{}{x_4},\parder{}{x_7},\parder{}{x_8}\right\rangle\,,\qquad \forall x\in {\bf J}^{\Phi-1}(\bm{\mu})
\]
and
\[
    \bm\omega_{\bm\mu} = \omega^1_{\bm\mu}\otimes e_1+\omega^2_{\bm\mu}\otimes e_2 = (\d x_3\wedge \d x_4)\otimes e_1 + (\d x_7\wedge \d x_8)\otimes e_2\,.
\]
The vector field $X$ is $\bm\omega$-Hamiltonian relative to
\begin{align*}
\d \bm h &= \iota_X\bomega = \iota_X\omega^1\otimes e_1+\iota_X\omega^2\otimes e_2\\
&= \d\left(\frac{1}{1+b}x^{b+1}_4+\frac{1}{c+1}x_3^{c+1} \right)\otimes e_1+\d\left(\frac{1}{1+a}x^{a+1}_6+\frac{1}{d+1}x_8^{d+1}+\frac{1}{1+e}x_7^{e+1} \right)\otimes e_2\,.
\end{align*}
Moreover, $\bm h$ is invariant relative to the Lie symmetries $Y_1,Y_2,Y_3$. 
By Theorem \ref{Th::Xreduction}, the vector field $X$ projects onto the quotient manifold and its projection $X_{\bm{\mu}}$ is given by
\[
X_{\bm\mu} = x_4^b\parder{}{x_3}-x_3^c\parder{}{x_4}+x_8^d\parder{}{x_7}-x_7^e\parder{}{x_8}\,,
\]
which is an $\bm\omega_{\bm\mu}$-Hamiltonian vector field since
\[
    \d\bm f_{\bm\mu} = \iota_{X_{\bm\mu}}\bm\omega_{\bm\mu}=\d\left(\frac{1}{1+b}x^{b+1}_4+\frac{1}{c+1}x_3^{c+1} \right)\otimes e_1+\d\left(\frac{1}{d+1}x_8^{d+1}+\frac{1}{1+e}x_7^{e+1} \right)\otimes e_2\,.
\]
According to Theorem \ref{Th::ksymEM}, a point $z_e$ is a two-polysymplectic relative equilibrium point if it is a critical point of $\boldsymbol{h}_\xi$ for some $\xi=(\xi_1,\xi_2,\xi_3)\in \mathfrak{g}\simeq\R^3$. Then, one has that
\begin{align*}
    \d\bm h_\xi &= \d f^1_\xi\otimes e_1+\d f^2_\xi \otimes e_2\\
    &=\left(x_4^b \d x_4+x^c_3 \d x_3-\xi_2\d x_5+\xi_3 \d x_1\right)\otimes e_1+\left( (x^a_6-\xi_1)\d x_6 + x_8^d\d x_8 + x_7^e\d x_7 \right)\otimes e_2\,.
\end{align*}
It follows that $\xi_2=\xi_3=0$ and we have two-polysymplectic relative equilibrium points of $X$ of the form $z_e=(x_1,x_2,0,0,x_5,x_6,0,0)$ for $x^a_6=\xi_1$. In fact, $x_1,x_2,x_5,x_6$ are arbitrary. Indeed, $(X_{\bm\mu_e})_{[z_e]}=0$ for $\bm{\mu}_e={\bf J}^\Phi(z_e)$.

To analyse the stability of the above-mentioned two-polysymplectic relative equilibrium points, let us analyse the second derivatives of $\boldsymbol{h}_\xi$ at such points $z_e$. Then,
\begin{align*}
    (\delta^2 \bm h_{\xi})_{z_e} &= (\delta^2 h^1_{\xi})_{z_e}\otimes e_1+ (\delta^2 h^2_{\xi})_{z_e}\otimes e_2\\
    &=\left( cx_3^{c-1}\d x_3\otimes \d x_3 + bx_4^{b-1}\d x_4\otimes \d x_4\right)\otimes e_1+\left(ex_7^{e-1}\d x_7\otimes \d x_7 + d x_8^{d-1} \d x_8\otimes \d x_8\right)\otimes e_2\,.
\end{align*}
Taking into account that the supplementary spaces to $\T_{z_e}G_{\bm\mu_e}+ \ker \omega^1_{z_e}\cap \T_{z_e}({\bf J}^{\Phi-1}(\bm \mu_e))$ and $\T_{z_e}G_{\bm\mu_e}+ \ker \omega^2_{z_e}\cap \T_{z_e}({\bf J}^{\Phi-1}(\bm \mu_e))$ can be chosen $\mathcal{S}_{z_e}^1=\langle \parder{}{x_3},\parder{}{x_4}\rangle $ and $\mathcal{S}^2_{z_e}=\langle \parder{}{x_7},\parder{}{x_8}\rangle$, respectively, Definition \ref{Prop::SecDerivEM} and the posterior explanation give that the $z_e$ are stable two-polysymplectic relative equilibrium points if
\[
(\delta^2h^1_\xi)_{z_e}(v_{z_e},v_{z_e}) > 0\,,\qquad \forall v_{z_e}\in \mathcal{S}^1_{z_e}\setminus\{0\}\,,
\]
and
\[
(\delta^2h^2_\xi)_{z_e}(v_{z_e},v_{z_e}) > 0\,,\qquad \forall v_{z_e}\in \mathcal{S}^2_{z_e}\setminus\{0\}\,.
\]
These inequalities hold if and only if $b,c,d,e=1$. Hence, the $z_e$ are formally stable two-polysymplectic relative equilibrium points of $X$ for $b,c,d,e=1$.  Indeed, it is immediate that previous conditions of stability imply that in the reduced space close to the equilibrium point, the coordinates $x_3,x_4,x_7,x_8$ are bounded for every motion close enough to the equilibrium point, which ensures real stability.

\section{Conclusions and Outlook}
\label{Sec::Conclusions}

In this work, we have devised a new energy-momentum method for systems of ordinary differential equations given by Hamiltonian vector fields with respect to a $k$-polysymplectic form. This led to defining and characterising $k$-polysymplectic relative equilibrium points and introducing new techniques to study stability through $k$-polysymplectic geometry. In this respect, we have also reviewed several aspects and mistakes in previous Marsden--Weinstein reductions for $k$-polysymplectic forms and systems \cite{Bla_19,GM_23, MRS_04,MRSV_15}, which is a relevant part in energy-momentum methods. To illustrate our new energy-momentum method and theoretical results, we have studied several relevant examples in detail: complex Schwarz equations, the product of several symplectic manifolds, with a family of particles subjected to the effect of an isotropic potential for each of them, affine homogeneous differential equations (with potential applications to control theory and Lie systems theory), a quantum harmonic oscillator, integrable symplectic systems, and some dynamical systems with polynomial coefficients.

A non-autonomous analogue of the methods devised in this paper could be accomplished by using the Lyapunov theory depicted in \cite{LZ_21}. Note that the stability with respect to $k$-polysymplectic forms is a topic that requires further development. The criteria here used are enough for the family of examples to be studied, but a deeper study with an analysis of all possibilities is in order. Recall also that, even in the one-polysymplectic case, the criterion for the stability of the energy-momentum method, which we here recover as a particular case and is classical \cite{AM_78}, is not a necessary condition for the stability of the reduced system.

The study of complex Schwarz equations and the Schwarzian derivative could be more appropriately studied through a complex Lie system formalism. We aim to study this possibility in further work.

\addcontentsline{toc}{section}{Acknowledgements}
\section*{Acknowledgements}
%%%%%%%%%%%%%%%%%%%%%%%%%%%%%%%%%%%%%%%%%%%%%%%%%%%%%%%%%%%%%%%%

J. de Lucas, X. Rivas and B.M. Zawora acknowledge partial financial support from the Nowe Idee 2B-POB II project PSP: 501-D111-20-2004310 funded by the ``Inicjatywa Doskonałości - Uczelnia Badawcza'' (IDUB) program. J. de Lucas acknowledges a CRM-Simons professorship funded by the Simons Foundation and the Centre de Recherches Math\'ematiques (CRM) of the Universit\'e Montr\'eal. X. Rivas acknowledges funding from J. de Lucas's CRM-Simons professorship to accomplish a stay at the CRM which allowed us to obtain several relevant findings of the present work. X. Rivas also acknowledges funding from the Spanish Ministry of Science and Innovation, grants  PID2021-125515NB-C21, and RED2022-134301-T of AEI, and Ministry of Research and Universities of the Catalan Government, project 2021 SGR 00603. B.M. Zawora acknowledges funding from the IDUB mikrogrant program to accomplish a research stay at the CRM. L. Colombo acknowledges financial support from Grant PID2022-137909NB-C21 funded by MCIN/AEI/ 10.13039/501100011033. B.M. Zawora and X. Rivas acknowledge partial funding from the Simons Foundation to take part in a conference in the Simons Center for Geometry and Physics of Stony Brook University, where the last parts of this paper were accomplished and several relevant results concerning the stability of reduced $k$-polysymplectic Hamiltonian systems and their applications were devised.

%\addcontentsline{toc}{section}{Acknowledgements}
%%%%%%%%%%%%%%%%%%%%%%%%%%%%%%%%%%%%%%%%%%%%%%%%%%%%%%%%%%%%%%%%

%\section*{Acknowledgements}
%%%%%%%%%%%%%%%%%%%%%%%%%%%%%%%%%%%%%%%%%%%%%%%%%%%%%%%%%%%%%%%%

%%%%%%%%%%%%%%%%%%%%%%%%%%%%%%%%%%%%%%%%%%%%%%%%%%%%%%%%%%%%%%%%
%%%%%%%%%%%%%%%%%%%%%%%%%%%%%%%%%%%%%%%%%%%%%%%%%%%%%%%%%%%%%%%%
\bibliographystyle{abbrv}
%{\small
%\bibliographystyle{cjj}
\bibliography{references.bib}

%%%%%%%%%%%%%%%%%%%%%%%%%%%%%%%%%%%%%%%%%%%%%%%%%%%%%%%%%%%%%%%%
%%%%%%%%%%%%%%%%%%%%%%%%%%%%%%%%%%%%%%%%%%%%%%%%%%%%%%%%%%%%%%%%
\end{document}